\newcommand{\C}{\mathbb C}
\newcommand{\R}{\mathbb R}
\newcommand{\N}{\mathbb N}
\newcommand{\Z}{\mathbb Z}
\newcommand{\supp}{\operatorname{supp}}
\newcommand{\re}{\operatorname{Re}}
\newcommand{\im}{\operatorname{Im}}
\newcommand{\ve}{\varepsilon}
\newcommand{\spec}{\operatorname{Spec}}
\newcommand{\sgn}{\operatorname{sgn}}
\newtheorem{thm}{Theorem}[section]
\newtheorem{prop}[thm]{Proposition}
\newtheorem{lemma}[thm]{Lemma}
\newtheorem{cor}[thm]{Corollary}
\newtheorem*{thm*}{Theorem}
\newtheorem*{lemma*}{Lemma}
\newtheorem*{prop*}{Proposition}
\theoremstyle{definition}
\newtheorem{remark}[thm]{Remark}
\newtheorem*{example*}{Example}
\newtheorem*{dfn*}{Definition}
\theoremstyle{remark}
\let\save@mathaccent\mathaccent
\newcommand*\if@single[3]{%
  \setbox0\hbox{${\mathaccent"0362{#1}}^H$}%
  \setbox2\hbox{${\mathaccent"0362{\kern0pt#1}}^H$}%
  \ifdim\ht0=\ht2 #3\else #2\fi
  }
\newcommand*\rel@kern[1]{\kern#1\dimexpr\macc@kerna}
\newcommand*\widebar[1]{\@ifnextchar^{{\wide@bar{#1}{0}}}{\wide@bar{#1}{1}}}
\newcommand*\wide@bar[2]{\if@single{#1}{\wide@bar@{#1}{#2}{1}}{\wide@bar@{#1}{#2}{2}}}
\newcommand*\wide@bar@[3]{%
  \begingroup
  \def\mathaccent##1##2{%
    \let\mathaccent\save@mathaccent
    \if#32 \let\macc@nucleus\first@char \fi
    \setbox\z@\hbox{$\macc@style{\macc@nucleus}_{}$}%
    \setbox\tw@\hbox{$\macc@style{\macc@nucleus}{}_{}$}%
    \dimen@\wd\tw@
    \advance\dimen@-\wd\z@
    \divide\dimen@ 3
    \@tempdima\wd\tw@
    \advance\@tempdima-\scriptspace
    \divide\@tempdima 10
    \advance\dimen@-\@tempdima
    \ifdim\dimen@>\z@ \dimen@0pt\fi
    \rel@kern{0.6}\kern-\dimen@
    \if#31
      \overline{\rel@kern{-0.6}\kern\dimen@\macc@nucleus\rel@kern{0.4}\kern\dimen@}%
      \advance\dimen@0.4\dimexpr\macc@kerna
      \let\final@kern#2%
      \ifdim\dimen@<\z@ \let\final@kern1\fi
      \if\final@kern1 \kern-\dimen@\fi
    \else
      \overline{\rel@kern{-0.6}\kern\dimen@#1}%
    \fi
  }%
  \macc@depth\@ne
  \let\math@bgroup\@empty \let\math@egroup\macc@set@skewchar
  \mathsurround\z@ \frozen@everymath{\mathgroup\macc@group\relax}%
  \macc@set@skewchar\relax
  \let\mathaccentV\macc@nested@a
  \if#31
    \macc@nested@a\relax111{#1}%
  \else
    \def\gobble@till@marker##1\endmarker{}%
    \futurelet\first@char\gobble@till@marker#1\endmarker
    \ifcat\noexpand\first@char A\else
      \def\first@char{}%
    \fi
    \macc@nested@a\relax111{\first@char}%
  \fi
  \endgroup
}
\def\@setauthors{%
  \begingroup
  \def\thanks{\protect\thanks@warning}%
  \trivlist
  \centering\large \@topsep30\p@\relax
  \advance\@topsep by -\baselineskip
  \item\relax
  \author@andify\authors
  \def\\{\protect\linebreak}%
  \authors%
  \ifx\@empty\contribs
  \else
    ,\penalty-3 \space \@setcontribs
    \@closetoccontribs
  \fi
  \endtrivlist
  \endgroup
}
\def\@settitle{\begin{center}%
  \baselineskip14\p@\relax
    \normalfont\LARGE
  \@title
  \end{center}%
}
\pgfplotsset{compat=newest}
\begin{document}

\author{Simon Becker}
\address[Simon Becker]{ETH Zurich, 
Institute for Mathematical Research, 
Rämistrasse 101, 8092 Zurich, 
Switzerland}
\email{simon.becker@math.ethz.ch}

\author{Setsuro Fujii\'e}
\address[Setsuro Fujiie]{Department of Mathematical Sciences, Ritsumeikan University, Ku\-sa\-tsu, 525-8577, Japan}
\email{fujiie@fc.ritsumei.ac.jp}

\author{Jens Wittsten}
\address[Jens Wittsten]{Department of Engineering, University of Bor{\aa}s, SE-501 90 Bor{\aa}s, Sweden}
\email{jens.wittsten@hb.se}

\title{Bohr--Sommerfeld rules for systems}

\date{\today}

\begin{abstract}
We present a complete, self-contained formulation of the Bohr--Sommerfeld quantization rule for a semiclassical self-adjoint $2 \times 2$ system on the real line, arising from a simple closed curve in phase space. We focus on the case where the principal symbol exhibits eigenvalue crossings within the domain enclosed by the curve -- a situation commonly encountered in Dirac-type operators. Building on earlier work on scalar Bohr--Sommerfeld rules and semiclassical treatments of the Harper operator near rational flux quanta, we derive concise expressions for general self-adjoint $2 \times 2$ systems. The resulting formulas give explicit geometric phase corrections and clarify when these phases take quantized values.
\end{abstract}

\maketitle

\section{Introduction}

We consider a semiclassical self-adjoint $2\times2$ system $H^w(x,hD)$ on the real line, with Weyl symbol $H(x,\xi)\sim \sum_{j=0}^\infty h^j H_j(x,\xi)$. Here the $H_j$'s belong to some appropriate symbol classes, such as $H_j\in S(m)$ for some weight function $m$, or $H_j\in S^{m-j}$ for some $m\in\R$. The principal symbol can be written as
\begin{equation}\label{eq:H0}
H_0  = \sum_{i=0}^3 \sigma_i p_i =\begin{pmatrix} p_0+p_3 & p_1-ip_2\\ p_1+ip_2& p_0-p_3\end{pmatrix}
\end{equation}
for some real-valued $p_i(x,\xi)\in C^\infty(T^*\R)$, where the $\sigma_i$'s are the Pauli matrices. Introduce the vector $P = (p_1,p_2,p_3)$. Then the eigenvalues of $H_0(x,\xi)$ are
\[ \lambda_{\pm}(x,\xi) = p_0(x,\xi)\pm \lVert P(x,\xi) \rVert,\]
and we see that there is an eigenvalue crossing if and only if $P$ vanishes somewhere. 

Let $\mu$ be one of $\lambda_{\pm}$, and $E\in\R$ an energy level. The goal of this paper is to present a complete and self-contained formulation of the Bohr--Sommerfeld rule coming from a simple closed curve $\gamma=\mu^{-1}(E)$ in phase space. We assume $d\mu\ne0$ and $P\ne0$ in a neighborhood of $\gamma$, but allow $P$ to vanish inside the domain $D$ enclosed by $\gamma=\partial D$. This behavior is typical for Dirac-type systems. One motivation for our work is to develop a framework for understanding the occurrence of almost flat bands in the spectrum of a Dirac-Harper model for strained moiré lattices, introduced by Timmel and Mele \cite{TM2020} and discussed in Subsection \ref{ss:TM}. In doing so, we build on earlier work on scalar Bohr--Sommerfeld rules by Helffer-Robert  \cite{helffer1983calcul,helffer1982asymptotique,helffer1984puits} together with Helffer-Sjöstrand's semiclassical treatments of the Harper operator \cite{HS88,HS90} near rational fluxes, which we revisit to derive concise expressions for general self-adjoint $2 \times 2$ systems.

If $P\ne0$ it is natural to diagonalize $H_0$ and reduce to the scalar case. If $P=0$ somewhere inside $D$, we first remove the eigenvalue crossings by slightly perturbing the components of $P$, see Lemma \ref{lem:perturbation}. We also modify $H_0$ away from a simply connected neighborhood of $\gamma$ to ensure $H_0$ is bounded, see Lemma \ref{lem:boundedsymbol}. Both these modifications can be done in a way that doesn't change the spectrum of $H^w(x,hD)$ modulo $\mathcal O(h^\infty)$, see Proposition \ref{prop:samespectrum}.
We can then find $U\in S(1)$ such that $U^w$ is unitary and
$$
(U^w)^*H^w U^w=\begin{pmatrix}\mu^w& \\ & A_{22}^w\end{pmatrix}+hD^w+\mathcal O_{L^2\to L^2}(h^\infty),
$$
where $D=\operatorname{diag}(D_{11},D_{22})$ is diagonal, see the discussion after Theorem \ref{thm:Harper2}. Here, $A_{22}$ is the other eigenvalue of $H_0$, that is, if $\mu=\lambda_+$ then $A_{22}=\lambda_-$ and vice versa. Since $d\mu\ne0$ near $\gamma=\mu^{-1}(E)$ we can use the analysis of Helffer--Robert \cite{helffer1983calcul,helffer1982asymptotique,helffer1984puits} (see also Sjöstrand \cite[Theorem 8.4]{sjostrand2006cime}) for scalar operators of principal type to obtain a Bohr--Sommerfeld rule that describes the spectrum of $H^w$ coming from $\gamma$. This is established for the modified operator in Theorem \ref{thm:main}, and as indicated above it leads to the same Bohr--Sommerfeld rule being valid for the original operator, see Theorem \ref{thm:main0}. This rule takes the form
\begin{equation}\label{eq:BS}
2\pi kh=S(E)\sim\sum_{j=0}^\infty S_j(E)h^j,
\end{equation}
where the right-hand side is the semiclassical action, consisting of the action integral along $\gamma$, that is,
\begin{equation}\label{eq:S0}
S_0(E)=\int_\gamma \xi\,dx,
\end{equation}
and where
\begin{equation}\label{eq:S1}
S_1(E)=\pi-\int_\gamma f_1\,dt
\end{equation}
contains the Maslov index in the $\pi$ term. Here $f_1$ is the subprincipal symbol of $\mu^w+hD_{11}^w$, and the integral $\int_\gamma f_1\,dt$ contains the geometric phase corrections in the Bohr--Sommerfeld rule. Each subsequent $S_j(E)$, $j\ge2$ can also be computed using an algorithm due to Colin de Verdière \cite{cdv}.

To illustrate the role of subprincipal terms, consider the Dirac system
\begin{equation}\label{eq:simpleDirac}
\begin{pmatrix} 0 & hD_x-ix \\ hD_x+ix& 0\end{pmatrix}
\end{equation}
whose off-diagonal entries are the annihilation and creation operators of the harmonic oscillator.
This simple model can be seen as a special case of the Jackiw-Rebbi model studied in Subsection \ref{sec:JR}. It is known that the operator eigenvalues of \eqref{eq:simpleDirac} coincide with the roots to $e^{iS_0(E)/h}=1$, where $S_0$ is given by \eqref{eq:S0}. (The corresponding eigenfunctions can be computed explicitly.) The eigenvalues of the principal symbol are $\lambda_\pm(x,\xi)=\pm\sqrt{\xi^2+x^2}$, and the level sets $\lambda_\pm=E$ are circles with radius $\lvert E\rvert$. Hence, $S_0(E)=\pi E^2$ which by \eqref{eq:BS} gives $$E=\pm\sqrt{2kh},\quad k\in\N.$$  
However, if we instead diagonalize the symbol near $\lambda_\pm^{-1}(E)$ and apply the scalar Bohr--Sommerfeld rule to the diagonal elements of the principal part $\operatorname{diag}(\lambda_+^w, \lambda_-^w)$, we obtain
$$
E=\pm\sqrt{(2k+1)h},\quad k\in\N,
$$
which has an incorrect offset. Hence, there has to be a contribution encoded in the lower-order corrections of the diagonalized operator that removes this discrepancy. The missing contribution comes from the subprincipal part $f_1$, which consequently has to be calculated precisely.

This kind of analysis has been used by Helffer and Sjöstrand in their treatment of the Harper operator (see \cite[Section 3.6]{HS90} in particular for operators such as \eqref{eq:simpleDirac}).
As in \cite{HS88,HS90} we note that the analysis presented here generalizes e.g.~to the case when $H_0$ is periodic and to many other situations where $\mu^{-1}(E_0)$ is not just one simple closed curve but instead a countable union of connected components, provided these components are separated by barriers, see Remark \ref{rem:manyconnectedcomponents}.

Before stating our main results, which include concise expressions for the geometric phase corrections of the Bohr--Sommerfeld rule, together with information about when these phases only take a discrete set of values, i.e., when they become {\it quantized}, we briefly note the recent work of Yoshida  \cite{Yoshida}, where a different and inherently non-scalar method was used to obtain the leading-order term in the Bohr--Sommerfeld rule for operators with $p_2=\xi$, $p_3=V(x)$, and $p_0=p_1=0.$ The new ingredient in Yoshida's method is that it applies when $V(x)\in C^\infty$ is not necessarily analytic. When $V(x)$ is analytic, the exact WKB method had previously been successfully used to obtain Bohr--Sommerfeld rules for various $2\times 2$ systems, including some non-self-adjoint cases \cite{fujiie2020semiclassical,fujiie2009semiclassical,fujiie2018quantization,Hirota2017Real,hirota2021complex}. We also mention that the Bohr--Sommerfeld rules we present below are for energies away from the eigenvalue crossing point -- for results on spectrum for energies near these critical points, where the eigenvalues of the principal symbol coalesce, we refer for example to \cite{becker2022semiclassical,cornean2021spectral,HS90} and the references therein. It follows from \cite{HS88} (see also \cite{hz} for a recent treatment of a more general case) that, for scalar operators with a non-degenerate principal symbol near the bottom of the well, the Bohr–Sommerfeld quantization rule remains valid at the bottom of the spectrum. In the case of systems, this applies to operators such as \eqref{eq:simpleDirac}, at the eigenvalue crossing, and to their generalizations studied in \S 3.6 of \cite{HS90}, but it may fail for general $2\times 2$ systems that we study in this work.

\subsection{Statement of results}
We make the following assumptions.
Let $\mu$ be one of $\lambda_{\pm}$. Fix $E_0\in\R$ and assume that $\mu^{-1}(E_0)=\gamma_0$ for a simple closed curve $\gamma_0$ such that $d\mu\ne0$ and $P\ne0$ near $\gamma_0$.
\begin{itemize}
\item We fix an interval $I = [E_-,E_+] \subset \mathbb R$, with $E_- < E_0< E_+$,  and we assume there is a topological ring $\mathcal A$ such that 
$\partial \mathcal A = A_- \cup A_+$ with $A_{-},A_+$ the connected components of $\mu^{-1}(E_{\pm}).$
\item We assume $\mu$ has no critical points in $\mathcal A.$
\item We assume without loss of generality that $A_-$ is included in the disk $W$ enclosed by $A_+$ ($W$ is called the {\it well}), otherwise we can study $-H^w$ instead. 
\item We assume (possibly after shrinking $I$) that if $\mu=\lambda_+$ then 
\begin{equation}\label{eq:gaptypeassumptionplus}
\lambda_-(x,\xi)<E_-\quad\text{ for all }(x,\xi)\in \mathcal A.
\end{equation}
If instead $\mu=\lambda_-$ we assume that
\begin{equation}\label{eq:gaptypeassumptionminus}
\lambda_+(x,\xi)>E_+\quad \text{ for all }(x,\xi)\in \mathcal A.
\end{equation}
\end{itemize}
Two such situations are illustrated in Figure \ref{fig:well}.
\begin{figure}
\begin{center}
\includegraphics[width=0.9\textwidth]{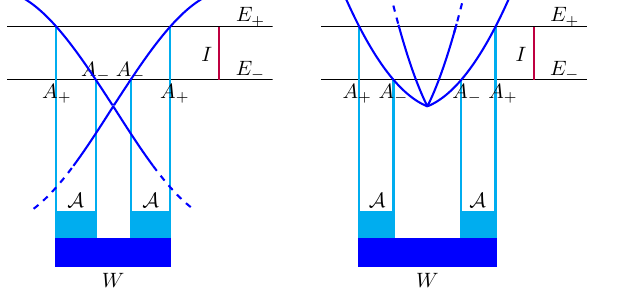}
\end{center}
\caption{\label{fig:well} Two cases that illustrate the definitions of the interval $I=[E_-,E_+]$ together with the well $W$ and the topological ring $\mathcal A$ such that $\partial A=A_+\cup A_-$ with $A_\pm$ the connected components of $\mu^{-1}(E_\pm)$. On the left $\mu=\lambda_+=p_0+\lVert P\rVert $ and $\lVert P\rVert $ dominates the behavior, while on the right $\mu=\lambda_-=p_0-\lVert P\rVert $ and the behavior is dominated by $p_0$. Condition \eqref{eq:gaptypeassumptionplus} is clearly satisfied in the left panel, and condition \eqref{eq:gaptypeassumptionminus} in the right. }
\end{figure}
Under these assumptions, we define the curve 
\begin{equation}\label{eq:curve}
\gamma=\gamma(E):=\mu^{-1}(E)\cap\mathcal A.
\end{equation}
Thus $\gamma$ is a curve close to $\gamma_0$ for $E\in I$.
We orient $\gamma$ along the positive flow direction of the Hamilton vector field $H_\mu=\partial_\xi\mu\partial_x-\partial_x\mu\partial_\xi$. When $\mu$ has a well, this means that $\gamma$ is oriented in the clockwise direction. If $T$ is a minimal period of $\gamma$ and $[0,T)\ni t\mapsto (x(t),\xi(t))$ a parametrization, then for any function $f$ defined near $\gamma$ we write $\int_\gamma f\,dt$ to denote $\int_0^T f(x(t),\xi(t))\,dt$. We let $\langle\phantom{i},\phantom{i}\rangle$ be the inner product in $\C^2$, and write $\{ A, B\}=\partial_\xi A\cdot\partial_x B-\partial_x A\cdot\partial_\xi B$ for any matrices $A$ and $B$ for which the product is well-defined.

\begin{thm}\label{thm:main0}
Under the assumptions above, $\spec(H^w(x,hD))\cap[E_-,E_+]$ is modulo $\mathcal O(h^\infty)$ described by the Bohr--Sommerfeld rule \eqref{eq:BS} arising from $\gamma$, i.e.,
\begin{equation}\label{eq:main0}
2\pi kh=S(E)+\mathcal O(h^\infty)\sim\sum_{j=0}^\infty S_j(E)h^j,
\end{equation}
where $S_0(E)=\int_\gamma\xi\,dx$ and $S_1(E)=\pi-\int_\gamma f_1\,dt$ as in \eqref{eq:S0} and \eqref{eq:S1}. Let $e$ be a smooth normalized eigenvector corresponding to $\mu=\lambda_\pm$ defined near $\gamma$, and write $H_1=\sum_{i=0}^3 r_i\sigma_i$ for real-valued $r_i\in C^\infty(T^*\R)$. Then
\begin{equation*}
 f_1= r_0\pm\sum_{i=1}^3\frac{ r_ip_i}{\lVert P\rVert}+
\mu_1
\end{equation*}
where 
\begin{equation}\label{eq:subprincipalterm}
\mu_1=\frac{1}{2i}\big\langle \{H_0-\mu,e\},e\big\rangle+\frac{1}{i}\big\langle \{\mu,e\},e\big\rangle .
\end{equation}
\end{thm}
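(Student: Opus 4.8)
The plan is to reduce the problem to a scalar spectral problem of principal type and then read off the first two terms of the semiclassical action from the reduced operator; the geometric phase formula \eqref{eq:subprincipalterm} will come out of a careful accounting of the order-$h$ terms in the block-diagonalization. First I would remove the two technical obstructions: using Lemma \ref{lem:perturbation} I would perturb the components of $P$ so that $P\ne0$ everywhere, eliminating the eigenvalue crossing inside $D$, and using Lemma \ref{lem:boundedsymbol} I would modify $H_0$ away from a simply connected neighborhood of $\gamma_0$ so that $H_0\in S(1)$. By Proposition \ref{prop:samespectrum} neither modification changes $\spec(H^w(x,hD))\cap I$ modulo $\mathcal O(h^\infty)$, and both leave $H_0$, $H_1$, $\mu$ and the curve $\gamma$ unchanged near $\gamma_0$; hence it suffices to prove everything for the modified operator (this is Theorem \ref{thm:main}), and the statement for the original operator then follows.

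With the eigenvalues now everywhere separated, the normalized eigenvector $e$ for $\mu$ is a smooth symbol near $\gamma_0$ (the eigenbundle over the annular neighborhood of $\gamma_0$ is trivial, so the phase of $e$ can be chosen smoothly there). Setting $U_0=(e\mid e')$ with $e'$ a normalized eigenvector for $A_{22}$ gives $U_0^*H_0U_0=\operatorname{diag}(\mu,A_{22})$, and following the construction of \cite{HS90} recalled after Theorem \ref{thm:Harper2} I would correct $U_0$ by lower-order terms $hU_1+h^2U_2+\cdots$, using the off-diagonal part of $U_j$ to annihilate the off-diagonal terms of $U^*\#H\#U$ at order $h^j$ and the relation $U^*\#U=I$ to fix the Hermitian part of the diagonal of each $U_j$, obtaining $U\in S(1)$ with $U^w$ unitary modulo $\mathcal O(h^\infty)$ and
\[
(U^w)^*H^wU^w=\begin{pmatrix}\mu^w+hD_{11}^w & \\ & A_{22}^w+hD_{22}^w\end{pmatrix}+\mathcal O_{L^2\to L^2}(h^\infty).
\]
By the gap condition \eqref{eq:gaptypeassumptionplus}/\eqref{eq:gaptypeassumptionminus} (together with the behavior of $A_{22}$ outside $\mathcal A$ after the modification) the second block has no spectrum in $I$ modulo $\mathcal O(h^\infty)$, so $\spec(H^w)\cap I$ is governed entirely by the scalar block $\mu^w+hD_{11}^w$. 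This block is self-adjoint modulo $\mathcal O(h^\infty)$ and of principal type near $\gamma=\mu^{-1}(E)\cap\mathcal A$, since $d\mu\ne0$ and $\mu$ has no critical points there and $\gamma$ is a single simple closed curve. The scalar Bohr--Sommerfeld rule of Helffer--Robert \cite{helffer1983calcul,helffer1982asymptotique,helffer1984puits} (see also \cite[Theorem 8.4]{sjostrand2006cime}) then yields \eqref{eq:main0} with $S_0(E)=\int_\gamma\xi\,dx$, $S_1(E)=\pi-\int_\gamma f_1\,dt$ (the $\pi$ being the Maslov contribution), and $f_1$ the subprincipal symbol of $\mu^w+hD_{11}^w$, while the higher $S_j$ are produced by the algorithm of \cite{cdv}.

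It remains to compute $f_1$. Since the quantization is Weyl, $f_1=D_{11}$ is precisely the $(1,1)$-entry of the $h^1$-coefficient of the symbol $U^*\#H\#U$, and this entry turns out to depend only on $U_0$, $H_0$ and $H_1$ (the choice of $U_1,U_2,\dots$ drops out). Expanding the Moyal triple product to order $h$, the term $U_0^*H_1U_0$ contributes $\langle H_1e,e\rangle$, and writing $H_1=\sum_{i=0}^3r_i\sigma_i$ and using that $e$ is the eigenvector of $\sum_{i=1}^3p_i\sigma_i$ with eigenvalue $\pm\lVert P\rVert$, hence $\langle\sigma_0e,e\rangle=1$ and $\langle\sigma_ie,e\rangle=\pm p_i/\lVert P\rVert$ for $i=1,2,3$, gives $\langle H_1e,e\rangle=r_0\pm\sum_{i=1}^3 r_ip_i/\lVert P\rVert$. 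The remaining $\mathcal O(h)$ contributions come from the two Moyal correction brackets in $U_0^*\#H_0\#U_0$ and from the order-$h$ part of the conjugating symbol; simplifying these with $H_0e=\mu e$, $U_0^*H_0=\operatorname{diag}(\mu,A_{22})U_0^*$, the Hermiticity of $H_0$ and of $H_0-\mu$, the fact that $\langle e_x',e\rangle$ and $\langle e_\xi',e\rangle$ are purely imaginary (since $\langle e,e\rangle\equiv1$), and the order-$h$ consequence of $U^*\#U=I$, I expect a somewhat lengthy but elementary computation to yield exactly the three terms of $\mu_1$ in \eqref{eq:subprincipalterm}.

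The main difficulty lies in this last step. As the harmonic-oscillator Dirac model \eqref{eq:simpleDirac} already shows, diagonalizing the principal symbol alone produces a wrong $\mathcal O(h)$ offset, so the subprincipal symbol must be tracked exactly; the noncommutativity of the matrix symbols in the Moyal product, the interplay with the unitarity constraint on the corrections $U_j$, and recognizing the outcome in the compact geometric form \eqref{eq:subprincipalterm} --- in particular the Berry-curvature term $\mu\operatorname{Im}\langle e_x',e_\xi'\rangle$, which does not occur in the Harper setting --- are where the real work is. Everything else is assembled from standard scalar Bohr--Sommerfeld theory and the spectral reductions of the first step.
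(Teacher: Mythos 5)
Your proposal follows essentially the same route as the paper: modify the symbol via Lemmas \ref{lem:perturbation}--\ref{lem:boundedsymbol} and Proposition \ref{prop:samespectrum}, block-diagonalize à la Helffer--Sjöstrand/Taylor (Theorem \ref{thm:Harper2}, with the $U_1,U_2,\dots$ contributions dropping out of the $(1,1)$ subprincipal entry), reduce to the scalar block $\mu^w+hD_{11}^w$ and invoke the Helffer--Robert/Sj\"ostrand scalar rule plus Colin de Verdi\`ere for higher orders, and compute $\langle H_1e,e\rangle=r_0\pm\sum r_ip_i/\lVert P\rVert$ exactly as in Theorem \ref{thm:main}. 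The only part you leave as a sketch is the final reduction of the two Moyal brackets to the three terms of \eqref{eq:subprincipalterm}, but the ingredients you name (Hermiticity of $H_0-\mu$, $H_0e=\mu e$, and $\langle e_x',e\rangle,\langle e_\xi',e\rangle$ purely imaginary) are precisely those the paper uses, so this is the same short calculation rather than a gap.
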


We prove this theorem in Section \ref{sec:preparation} by microlocally preparing the original system and then reducing to the scalar case, as explained above. We note that the results of Section \ref{sec:preparation} (and Theorem \ref{thm:main0} in particular) can easily be generalized to $n\times n$ systems with only minor modifications to the assumptions and arguments, and our presentation in that section has been written with this in mind. (For example, if $\mu$ were an eigenvalue of an $n\times n$ matrix-valued symbol, one would introduce some combination of assumptions \eqref{eq:gaptypeassumptionplus} and \eqref{eq:gaptypeassumptionminus}.) In Section \ref{sec:subprincipal} we instead take advantage of the $2\times 2$ structure to analyze the subprincipal contribution in \eqref{eq:subprincipalterm} in greater detail.

In \cite{HS90}, the two terms on the right of \eqref{eq:subprincipalterm} are denoted by $\mu_1'$ and $\mu_1''$ respectively,  see formula $(6.2.19)$ in \cite{HS90}. Accordingly, we denote them by $\mu_1'$ and $\mu_1''$ as well.
Then $\mu_1''$ is the Berry connection, which upon integration gives rise to the {\it Berry phase}
$$
\theta_B=\int_\gamma\mu_1''\,dt,
$$
see Remark \ref{rem:Berrysphase}. The term $\mu_1'$ is related to the Berry curvature scalar density, and upon integration gives rise to the {\it Rammal--Wilkinson phase}
$$
\theta_{\mathit{RW}}=\int_\gamma \mu_1'\,dt,
$$
see Remark \ref{rem:RWphase}. Together, $\theta_B$ and $\theta_{\mathit{RW}}$ provide the geometric phase corrections to the Bohr--Sommerfeld rule.

To describe $\theta_B$ and $\theta_{\mathit{RW}}$ more explicitly, we will introduce spherical coordinates to represent the vector $P=(p_1,p_2,p_3)$.
To avoid trivial cases we will assume without loss of generality that neither $p_1$ nor $p_2$ vanishes identically near $\gamma$. (If any two components of $P$ vanish identically then $H_0$ could be reduced to a diagonal matrix, and if one component vanishes identically we can always rotate $P$ so that it's the third component that vanishes, see the discussion preceding Theorem \ref{thm:quantizedmu1} in Section 4.) We can then write $P = \lVert P \rVert (\sin\theta \cos\phi, \sin\theta \sin\phi, \cos\theta)$ with
\begin{equation}\label{eq:sphericalcoordinates}
\left\{
\begin{aligned}
\theta &= \arccos\left( p_3/\lVert P \rVert \right), \\
\phi &= \sgn(p_2) \arccos\left( p_1/(p_1^2 + p_2^2)^{1/2} \right).
\end{aligned}
\right.
\end{equation}
(At points where $p_2(x,\xi)=0$ this is interpreted in the standard way as $\phi=0$ if $p_1(x,\xi)>0$, and $\phi=\pi $ if $p_1(x,\xi)<0$.)
The principal symbol then takes the form
\begin{equation}\label{eq:easyH}
H_0 = p_0 I + \lVert P \rVert R, \quad 
R := \begin{pmatrix} \cos\theta & e^{-i\phi}\sin\theta \\ e^{i\phi}\sin\theta & -\cos\theta \end{pmatrix},
\end{equation}
where $R$ is unitary and Hermitian.

\begin{thm}\label{thm:main1}
Assume that
neither $p_1$ nor $p_2$ in \eqref{eq:H0} vanishes identically near $\gamma$,
and let $\theta$ and $\phi$ be spherical coordinates given by \eqref{eq:sphericalcoordinates}. If $\mu=\lambda_\pm$ then
\begin{align*}
\theta_B&=\pm\int_\gamma  \frac{1-\cos(\theta)}{2}\{\lambda_\pm,\phi\}\,dt,\quad \theta_{\mathit{RW}}=\int_\gamma \lVert P\rVert\frac{\sin(\theta)}{2}\{\theta,\phi\}\,dt,
\end{align*}
where the integrals are independent of the choice of eigenvectors modulo $2\pi\Z$.
\end{thm}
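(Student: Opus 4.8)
The plan is to reduce each of the three terms in \eqref{eq:subprincipalterm} to an elementary computation on the Bloch sphere. By Theorem~\ref{thm:main0} we have $\mu_1=\mu_1'+\mu_1''+\mu_1'''$ with $\mu_1'=\tfrac1{2i}\langle\{H_0-\mu,e\},e\rangle$, $\mu_1''=\tfrac1i\langle\{\mu,e\},e\rangle$ and $\mu_1'''=\mu\im\langle e_x',e_\xi'\rangle$, so that $\theta_B=\int_\gamma\mu_1''\,dt$ and $\theta_{\mathit{RW}}=\int_\gamma(\mu_1'+\mu_1''')\,dt$. By \eqref{eq:easyH} we have $H_0=p_0I+\lVert P\rVert\,U$, so the normalized eigenvectors of $H_0$ depend only on the angles $(\theta,\phi)$ from \eqref{eq:sphericalcoordinates}; I would take the Bloch eigenvectors
\[
e_+=\begin{pmatrix}\cos(\theta/2)\\ e^{i\phi}\sin(\theta/2)\end{pmatrix},\qquad e_-=\begin{pmatrix}-e^{-i\phi}\sin(\theta/2)\\ \cos(\theta/2)\end{pmatrix}
\]
for $\mu=\lambda_+$ and $\mu=\lambda_-$ respectively, noting that where $\theta$ is close to $\pi$ (resp.\ to $0$) one must replace $e_\pm$ by the gauge-equivalent $e^{-i\phi}e_\pm$ (resp.\ $e^{i\phi}e_\pm$) to stay smooth. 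Everything then follows from two short differentiations: the diagonal (Berry) connection $\langle e_\pm',e_\pm\rangle=\pm i\,\tfrac{1-\cos\theta}{2}\,d\phi$ and the off-diagonal connection $\langle e_\pm',e_\mp\rangle$, which equals $\tfrac12 e^{\pm i\phi}(d\theta\pm i\sin\theta\,d\phi)$ up to an overall sign.

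For $\theta_B$: writing $\{\mu,e\}=\mu_\xi e_x'-\mu_x e_\xi'$ and using the diagonal connection gives $\langle\{\mu,e_\pm\},e_\pm\rangle=\pm i\,\tfrac{1-\cos\theta}{2}\{\mu,\phi\}$, hence $\mu_1''=\pm\tfrac{1-\cos\theta}{2}\{\lambda_\pm,\phi\}$, and integrating over $\gamma$ yields the stated formula. For $\mu_1'$ the key point is that $e$ is an eigenvector, $(H_0-\mu)e=0$, with $H_0-\mu$ Hermitian: differentiating the eigenrelation and using Hermiticity twice rewrites $\langle\{H_0-\mu,e\},e\rangle=-\langle(H_0-\mu)e_x',e_\xi'\rangle+\langle(H_0-\mu)e_\xi',e_x'\rangle$; expanding $e_x',e_\xi'$ in the eigenbasis $\{e,e^\perp\}$ and using $(H_0-\mu)e^\perp=\mp2\lVert P\rVert\,e^\perp$ collapses this to a multiple of $\lVert P\rVert\,\im\big(\langle e_x',e^\perp\rangle\overline{\langle e_\xi',e^\perp\rangle}\big)$, which by the off-diagonal connection identity (and $\cos(\theta/2)\sin(\theta/2)=\tfrac12\sin\theta$) equals $\mu_1'=\tfrac{\lVert P\rVert}{2}\sin\theta\,\{\theta,\phi\}$, the same in both cases. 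For $\mu_1'''$, a direct evaluation of $\langle e_x',e_\xi'\rangle$ gives $\im\langle e_x',e_\xi'\rangle=\pm\tfrac{\sin\theta}{4}\{\theta,\phi\}$ for $e=e_\pm$, so $\mu_1'''=\lambda_\pm\big(\pm\tfrac{\sin\theta}{4}\{\theta,\phi\}\big)=(\pm p_0+\lVert P\rVert)\tfrac{\sin\theta}{4}\{\theta,\phi\}$. Adding these two and integrating gives $\theta_{\mathit{RW}}=\int_\gamma(\pm p_0+3\lVert P\rVert)\tfrac{\sin\theta}{4}\{\theta,\phi\}\,dt$. (Both $\mu_1'$ and $\mu_1'''$ are of the form [function of $p_0,\lVert P\rVert$]$\times\sin\theta\{\theta,\phi\}$, which up to sign is the Berry curvature scalar density in these coordinates.)

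For the independence statement I would argue as follows. A change of normalized eigenvector is $e\mapsto e^{i\chi}e$ with $\chi$ real and smooth and $e^{i\chi}$ single-valued along $\gamma$. Using $(H_0-\mu)e=0$ one checks directly that the extra terms produced in $\mu_1'$ and $\mu_1'''$ either vanish or are real, so $\mu_1'$ and $\mu_1'''$ — hence $\theta_{\mathit{RW}}$ — are pointwise unchanged. On the other hand $\mu_1''$ changes by $\{\mu,\chi\}$, and since $\tfrac{d}{dt}\chi(x(t),\xi(t))=\{\mu,\chi\}$ along $\gamma$, its integral over $\gamma$ is the total increment of $\chi$ around the loop, which lies in $2\pi\Z$ because $e^{i\chi}$ is single-valued. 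The same $2\pi\Z$ ambiguity accounts for switching between the two hemispherical eigenvector charts, which replaces $\tfrac{1-\cos\theta}{2}$ by $-\tfrac{1+\cos\theta}{2}$ and hence changes $\theta_B$ by $\oint_\gamma d\phi\in2\pi\Z$.

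The step I expect to be the main obstacle is the reduction of $\mu_1'=\tfrac1{2i}\langle\{H_0-\mu,e\},e\rangle$: one must manipulate the matrix-valued Poisson bracket $\{H_0-\mu,e\}$ carefully, exploiting the eigenrelation and Hermiticity to express it through the interband connection $\langle e',e^\perp\rangle$, and keeping track of the signs (and of the chosen inner-product convention) is delicate — this is where the coefficient $3\lVert P\rVert$ ultimately originates, as $2\lVert P\rVert$ from $\mu_1'$ plus $\lVert P\rVert$ from $\mu_1'''$. A secondary technical point is the behaviour at points of $\gamma$ where $p_1=p_2=0$, i.e.\ $\theta\in\{0,\pi\}$: there both the chosen eigenvectors and $\phi$ are singular, which is precisely why the conclusion is only modulo $2\pi\Z$ and why the gauge-invariance argument above is needed.
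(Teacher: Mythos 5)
Your proposal is correct and follows essentially the same route as the paper: the same split $\mu_1=\mu_1'+\mu_1''+\mu_1'''$, the same Bloch eigenvectors \eqref{eq:EVs}, the same three computations (your eigenbasis expansion of $\mu_1'$ is just a slightly more explicit version of the paper's ``straightforward calculation,'' and your values agree with Lemmas \ref{lem:RWphase}--\ref{lem:Jensphase}), and the same gauge-change argument for the $2\pi\Z$ invariance as in Lemmas \ref{lemm:B_invariance} and \ref{lemm:RW_invariance}.
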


As we can see from Theorem \ref{thm:main0}, unless there are restrictions on the subprincipal symbol $H_1$, the term $\int_\gamma f_1 \, dt$ will not be quantized in general. 
However, in Section \ref{sec:phasecalculations} we show that the Berry and Rammal--Wilkinson phases are quantized as soon as the components of $P$ are linearly dependent over $\R$, see Theorem \ref{thm:quantizedmu1}.  In particular, this happens when (at least) one of $p_1,p_2,p_3$ vanishes identically near $\gamma$. 
The following theorem therefore complements Theorem \ref{thm:main1}, and together they yield a comprehensive description of the general situation.

To state the result we let $\operatorname{wind}(\Gamma,0)$ denote the winding number of a curve $\Gamma\subset \C$ around the origin in the complex plane. Given a complex-valued function $q:T^*\R\to \C$ we let $q(\gamma)$ be the image under $q$ of $\gamma$ with the induced orientation. We then have the following special case of Theorem \ref{thm:quantizedmu1}:

\begin{thm}\label{thm:main2}
Assume that $p_i\equiv0$ near $\gamma$ for some $i\in\{1,2,3\}$. If $\mu=\lambda_\pm$ then
\begin{align*}
\theta_B=\pm\pi \operatorname{wind}(\Gamma_i,0),\quad \theta_{\mathit{RW}}=0,
\end{align*}
where $\Gamma_1=(-p_3+ip_2)(\gamma)$, $\Gamma_2=(p_1-ip_3)(\gamma)$, and $\Gamma_3=(p_1+ip_2)(\gamma)$.
\end{thm}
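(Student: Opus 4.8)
The plan is to read off Theorem~\ref{thm:main2} from Theorem~\ref{thm:main1} when $i=3$, and to reduce the cases $i\in\{1,2\}$ to that one by a fixed rotation of the vector $P=(p_1,p_2,p_3)$. Before doing so I would dispose of the degenerate situation in which two of the $p_i$ vanish identically near $\gamma$: then a constant rotation of $P$ brings $H_0$ into diagonal form, so one may choose an $(x,\xi)$-independent normalized eigenvector near $\gamma$, every term in \eqref{eq:subprincipalterm} vanishes, and $\theta_B=\theta_{\mathit{RW}}=0$; on the other hand the surviving real component of $P$ keeps a fixed sign along the connected curve $\gamma$ (because $P\ne0$ there), so $\Gamma_i$ lies on a single ray from the origin and $\operatorname{wind}(\Gamma_i,0)=0$, in agreement with the claim. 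Thus I may assume that at most one component of $P$ vanishes identically near $\gamma$. (Indeed, this reduction is also the source of the quantization in the more general Theorem~\ref{thm:quantizedmu1}: $\R$-linear dependence of $p_1,p_2,p_3$ means $P$ is confined to a fixed plane through the origin of $\R^3$, which a constant rotation sends to $\{p_3=0\}$, reducing everything to the case $p_3\equiv0$ treated next.)

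Suppose $p_3\equiv0$ near $\gamma$. Then $\cos\theta=p_3/\lVert P\rVert\equiv0$ near $\gamma$, hence $\theta\equiv\pi/2$ there and all derivatives of $\theta$ vanish along $\gamma$; in particular $\{\theta,\phi\}\equiv0$ on $\gamma$, so Theorem~\ref{thm:main1} gives $\theta_{\mathit{RW}}=0$. For the Berry phase, $\tfrac12(1-\cos\theta)\equiv\tfrac12$ on $\gamma$, so Theorem~\ref{thm:main1} yields $\theta_B=\pm\tfrac12\int_\gamma\{\lambda_\pm,\phi\}\,dt$. Since $\{\lambda_\pm,\phi\}=H_{\lambda_\pm}\phi$ equals the $t$-derivative of $t\mapsto\phi(x(t),\xi(t))$ along the Hamilton flow of $\mu=\lambda_\pm$ parametrizing $\gamma$, and since $p_1+ip_2=\lVert P\rVert\,e^{i\phi}$ does not vanish near $\gamma$ (as $\theta=\pi/2$), this function admits a continuous real lift on $[0,T]$ whose total increment is, by the definition of the winding number, $2\pi\operatorname{wind}(\Gamma_3,0)$ with $\Gamma_3=(p_1+ip_2)(\gamma)$. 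Hence $\theta_B=\pm\pi\operatorname{wind}(\Gamma_3,0)$, an expression visibly independent of the choice of eigenvector, as required by Theorem~\ref{thm:main1}.

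For $i\in\{1,2\}$ I would use that any rotation $R\in SO(3)$ of $P$ is implemented by conjugating $H_0$ by a fixed $V\in SU(2)$, via $V(P\cdot\sigma)V^*=(RP)\cdot\sigma$ with $V$ independent of $(x,\xi)$. I take $R$ to be the $3$-cycle of coordinates that moves the vanishing component into the third slot, namely $(p_1,p_2,p_3)\mapsto(p_2,p_3,p_1)$ for $i=1$ and $(p_1,p_2,p_3)\mapsto(p_3,p_1,p_2)$ for $i=2$; these are even permutations, hence lie in $SO(3)$, and by the reduction above the two non-distinguished components of $RP$ are not identically zero near $\gamma$. Conjugation by the constant $V$ leaves $p_0$, $\lVert P\rVert$, $\mu=\lambda_\pm$ and the curve $\gamma$ unchanged and replaces a normalized eigenvector $e$ by $Ve$; since $V$ is constant it commutes through every Poisson bracket and inner product in \eqref{eq:subprincipalterm}, so each of $\mu_1'$, $\mu_1''$, $\mu_1'''$ --- and therefore $\theta_B$ and $\theta_{\mathit{RW}}$ --- is unchanged. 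Applying the case $i=3$ to the conjugated system gives $\theta_{\mathit{RW}}=0$ and $\theta_B=\pm\pi\operatorname{wind}(\widetilde\Gamma_3,0)$, where $\widetilde\Gamma_3=(p_2+ip_3)(\gamma)$ when $i=1$ and $\widetilde\Gamma_3=(p_3+ip_1)(\gamma)$ when $i=2$. Finally $-p_3+ip_2=i(p_2+ip_3)$ and $p_1-ip_3=-i(p_3+ip_1)$, so $\Gamma_i$ differs from $\widetilde\Gamma_3$ by multiplication by a unimodular constant, which does not change the winding number about the origin; hence $\theta_B=\pm\pi\operatorname{wind}(\Gamma_i,0)$ in every case.

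Since Theorem~\ref{thm:main1} does the analytic work, there is no serious obstacle; the step needing most care is the identification, in the case $i=3$, of $\int_\gamma\{\lambda_\pm,\phi\}\,dt$ with $2\pi\operatorname{wind}(\Gamma_3,0)$, where one must remember that $\phi$ is defined only modulo $2\pi$ and that $\gamma$ carries the orientation induced by the positive Hamilton flow of $\mu$ --- which is precisely what pins down the overall sign $\pm$ matching $\mu=\lambda_\pm$, already encoded in Theorem~\ref{thm:main1}. Checking that conjugation by the constant $V$ preserves all three terms of \eqref{eq:subprincipalterm}, and verifying the unimodular rescalings relating $\Gamma_i$ to $\widetilde\Gamma_3$, are then routine.
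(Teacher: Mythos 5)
Your proposal is correct and takes essentially the same route as the paper: there Theorem \ref{thm:main2} is deduced as a special case of Theorem \ref{thm:quantizedmu1}, whose proof likewise reduces to the case $p_3\equiv0$ (Proposition \ref{prop:quantizedmu1}, where $\{\theta,\phi\}\equiv0$ makes $\theta_{\mathit{RW}}$ vanish and the Berry term integrates to $\pm\tfrac12\int_\gamma d\phi=\pm\pi\operatorname{wind}$) via conjugation by a constant $SU(2)$ matrix implementing a rotation of $P$. Your only deviations are cosmetic: you use cyclic coordinate permutations instead of the paper's Rodrigues rotation (so your reference curves differ from $\Gamma_1,\Gamma_2$ by unimodular constants, which you correctly note leaves the winding number unchanged), and you check invariance of $\mu_1',\mu_1'',\mu_1'''$ under the constant conjugation term by term rather than invoking invariance of the Bohr--Sommerfeld rule.
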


Note that if for example $p_1\equiv p_2\equiv0$ then we get $\theta_B=0$ using either $\Gamma_1$ or $\Gamma_2$ to compute the winding number, so there is no ambiguity.
Combining Theorem \ref{thm:main0} with Theorems \ref{thm:main1} and \ref{thm:main2} we immediately obtain the following corollary.

\begin{cor}\label{cor:main0}
Let $H\sim H_0+hH_1+\ldots$ be as above, with $H_1=\sum_{i=0}^3 r_i\sigma_i$ for real-valued $r_i\in C^\infty(T^*\R)$.
If  $\mu=\lambda_\pm$ then $\spec(H^w(x,hD))\cap[E_-,E_+]$ is described by the Bohr--Sommerfeld rule
\begin{equation*}
2\pi kh=\sum_{j=0}^\infty S_j(E)h^j+\mathcal O(h^\infty),
\end{equation*}
where $S_0(E)=\int_\gamma\xi\,dx$ and 
\begin{equation*}
S_1(E)=\pi-\int_\gamma\bigg(r_0\pm\sum_{i=i}^3\frac{ r_ip_i}{\lVert P\rVert}\bigg)\,dt-\theta_B-\theta_{\mathit{RW}},
\end{equation*}
with $\theta_B,\theta_{\mathit{RW}}$ as in Theorem \ref{thm:main1} if neither $p_1$ nor $p_2$ vanishes identically near $\gamma$, and with $\theta_B,\theta_{\mathit{RW}}$ as in Theorem \ref{thm:main2} otherwise.
\end{cor}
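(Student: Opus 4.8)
The plan is to assemble Corollary \ref{cor:main0} from the three theorems; the argument is essentially bookkeeping together with one small case distinction. First I would invoke Theorem \ref{thm:main0}: under the stated assumptions, $\spec(H^w(x,hD))\cap[E_-,E_+]$ is governed by the Bohr--Sommerfeld rule \eqref{eq:main0} with $S_0(E)=\int_\gamma\xi\,dx$ and $S_1(E)=\pi-\int_\gamma f_1\,dt$, where
\[
f_1=r_0\pm\sum_{i=1}^3\frac{r_ip_i}{\lVert P\rVert}+\mu_1,\qquad \mu_1=\mu_1'+\mu_1''+\mu_1''',
\]
the three pieces of $\mu_1$ being the terms of \eqref{eq:subprincipalterm} named in the discussion following Theorem \ref{thm:main0}. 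Fixing a parametrization $t\mapsto(x(t),\xi(t))$ of $\gamma$ and using linearity of $\int_\gamma\,dt$, I would then split
\[
\int_\gamma f_1\,dt=\int_\gamma\Big(r_0\pm\sum_{i=1}^3\frac{r_ip_i}{\lVert P\rVert}\Big)\,dt+\int_\gamma\mu_1''\,dt+\int_\gamma(\mu_1'+\mu_1''')\,dt,
\]
and recognize the last two integrals as $\theta_B$ and $\theta_{\mathit{RW}}$ by the definitions recalled in Remarks \ref{rem:Berrysphase} and \ref{rem:RWphase}. Substituting back produces exactly the stated expression for $S_1(E)$.

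It then remains to make $\theta_B$ and $\theta_{\mathit{RW}}$ explicit, which I would do by a dichotomy on $p_1,p_2$. If neither $p_1$ nor $p_2$ vanishes identically near $\gamma$, Theorem \ref{thm:main1} applies directly and yields the integral formulas for $\theta_B,\theta_{\mathit{RW}}$, which are moreover independent of the choice of eigenvector modulo $2\pi\Z$ as asserted there. In the complementary case, $p_1\equiv0$ or $p_2\equiv0$ near $\gamma$; in particular $p_i\equiv0$ near $\gamma$ for some $i\in\{1,2,3\}$, so Theorem \ref{thm:main2} applies and gives $\theta_B=\pm\pi\operatorname{wind}(\Gamma_i,0)$ and $\theta_{\mathit{RW}}=0$ with $\Gamma_i$ as defined there. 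These two cases are exhaustive and mutually exclusive, matching the dichotomy in the statement, so combining them with the formula for $S_1(E)$ above finishes the proof.

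Since every ingredient is already established, there is no real obstacle; the only point needing a moment's care is the case split. Specifically, one should check that ``otherwise'' in the statement --- meaning $p_1$ or $p_2$ vanishes identically near $\gamma$ --- really does fall under the hypothesis of Theorem \ref{thm:main2}, and also that when both Theorem \ref{thm:main1} and Theorem \ref{thm:main2} happen to be available at once (for instance if $p_3\equiv0$ but $p_1,p_2\not\equiv0$) they return consistent values; the latter is automatic since both are just different expressions for the same geometric phases $\theta_B$ and $\theta_{\mathit{RW}}$.
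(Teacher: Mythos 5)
Your argument is correct and is exactly the paper's route: the paper gives no separate proof, stating that the corollary follows "immediately" by combining Theorem \ref{thm:main0} (which supplies $S_0$, $S_1=\pi-\int_\gamma f_1\,dt$ and the decomposition $\mu_1=\mu_1'+\mu_1''+\mu_1'''$) with the definitions $\theta_B=\int_\gamma\mu_1''\,dt$, $\theta_{\mathit{RW}}=\int_\gamma(\mu_1'+\mu_1''')\,dt$ and with Theorems \ref{thm:main1} and \ref{thm:main2} according to the same dichotomy on $p_1,p_2$ you describe. Your added remarks on the exhaustiveness of the case split and the consistency of the two formulas are fine but not needed beyond what the paper assumes.
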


We also record the Bohr--Sommerfeld rule arising from $\gamma$ when $\mu$ has a barrier instead of a well in the domain enclosed by $\gamma$. As alluded to in the beginning of this subsection, it is obtained from the one describing the spectrum of $-H^w$. 

\begin{thm}[Barrier vs.~well]\label{thm:barrier}
Suppose all assumptions above are in force, except assume now that $\mu$ exhibits a barrier instead of a well in the domain $D$ enclosed by $\gamma$. Orient $\gamma$ along the flow of the Hamilton vector field of $\mu$ so $\gamma$ is oriented counterclockwise. If  $\mu=\lambda_\pm$ then $\spec(H^w(x,hD))\cap[E_-,E_+]$ is described by the Bohr--Sommerfeld rule 
$$
2\pi kh=-S_0(E)+\sum_{j=1}^\infty S_j(E)h^j+\mathcal O(h^\infty),
$$
where each $S_j$ is defined by the same formal expressions as in the case when $\mu$ has a well, with the understanding that $\gamma$ is oriented along the Hamilton flow. 
\end{thm}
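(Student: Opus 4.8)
The plan is to obtain the barrier rule from the well rule already established, by applying the results of the previous sections to $-H^w$ in place of $H^w$. The operator $-H^w(x,hD)$ is again self-adjoint, with principal symbol $-H_0=\sum_{i=0}^3(-p_i)\sigma_i$, subprincipal symbol $-H_1$, and eigenvalues $-\lambda_\pm$; in particular $-\mu$ is an eigenvalue of $-H_0$, with the roles of the two branches interchanged (if $\mu=\lambda_+$ then $-\mu$ is the lower eigenvalue of $-H_0$, and conversely). Since $\mu$ has a barrier in the domain $D$ enclosed by $\gamma$, the function $-\mu$ has a well there. First I would check that the standing hypotheses transfer: $(-\mu)^{-1}(-E_0)=\gamma_0$ is still a simple closed curve with $d(-\mu)\ne0$ and $-P\ne0$ near it, $-\mu$ has no critical points in $\mathcal A$, the same ring $\mathcal A$ does the job with $\partial\mathcal A=A_+\cup A_-$ now read off the levels $-E_\pm$, and the well inclusion for $-\mu$ on $[-E_+,-E_-]$ holds precisely because $\mu$ has a barrier in $D$ (passing from a barrier to a well interchanges which of $A_\pm$ is enclosed by the other). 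One also verifies that the gap-type assumption \eqref{eq:gaptypeassumptionplus} or \eqref{eq:gaptypeassumptionminus} imposed on $\mu$ turns into the corresponding one for $-\mu$.

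Next I would apply Theorem \ref{thm:main0} (and, for the geometric phases, Theorems \ref{thm:main1} and \ref{thm:main2} and Corollary \ref{cor:main0}) to $-H^w$ with eigenvalue $-\mu$ on $[-E_+,-E_-]$. This expresses $\spec(-H^w)\cap[-E_+,-E_-]$ modulo $\mathcal O(h^\infty)$ by a Bohr--Sommerfeld rule $2\pi kh=\sum_{j\ge0}h^j\tilde S_j(E')$, arising from $\tilde\gamma(E'):=(-\mu)^{-1}(E')\cap\mathcal A$ oriented along $H_{-\mu}$ (hence clockwise, since $-\mu$ has a well), with $\tilde S_0(E')=\int_{\tilde\gamma(E')}\xi\,dx$, $\tilde S_1(E')=\pi-\int_{\tilde\gamma(E')}\tilde f_1\,dt$, and the higher $\tilde S_j$ produced by the same algorithm. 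Since $\spec(H^w)=-\spec(-H^w)$, we have $E\in\spec(H^w)\cap[E_-,E_+]$ if and only if $-E\in\spec(-H^w)\cap[-E_+,-E_-]$, so $\spec(H^w)\cap[E_-,E_+]$ is described modulo $\mathcal O(h^\infty)$ by $2\pi kh=\sum_{j\ge0}h^j\tilde S_j(-E)$.

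It then remains to rewrite $\tilde S_j(-E)$ in terms of the data of $H^w$ and the curve $\gamma=\gamma(E)$ oriented along $H_\mu$ — which, in the barrier case, is the counterclockwise orientation. As a set $\tilde\gamma(-E)=(-\mu)^{-1}(-E)\cap\mathcal A=\mu^{-1}(E)\cap\mathcal A=\gamma(E)$, but its orientation (along $H_{-\mu}$) is opposite to that of $\gamma(E)$ (along $H_\mu$). Consequently the action integral flips sign, $\tilde S_0(-E)=\int_{\tilde\gamma(-E)}\xi\,dx=-\int_{\gamma(E)}\xi\,dx=-S_0(E)$, which is the announced leading term. For $j\ge1$ the relevant ingredients are period integrals $\int_\gamma f\,dt$, which are taken over a full period and hence do not see the orientation of $\gamma$, together with the symbol data of the diagonalized operator: diagonalizing $-H^w$ by the same unitary that diagonalizes $H^w$ turns the scalar $(1,1)$-block into $-(\mu^w+hD_{11}^w)+\mathcal O_{L^2\to L^2}(h^\infty)$, so all of its lower-order symbol data — in particular its subprincipal symbol $\tilde f_1=-f_1$ — is obtained from that of $\mu^w+hD_{11}^w$ by the evident sign changes, and the geometric phases $\theta_B$, $\theta_{\mathit{RW}}$, being built from such period integrals and from winding numbers of image curves $q(\gamma)$, transform correspondingly. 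Chasing these sign changes through shows that each $\tilde S_j(-E)$, $j\ge1$, is given by the same formal expression as in the well case, with $\gamma$ oriented along $H_\mu$. (The Maslov term $\pi$ causes no difficulty: it is defined modulo $2\pi$, which is harmless because the quantization condition is an identity modulo $2\pi h\Z$.)

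The hard part will be the sign and orientation bookkeeping, where several quantities transform differently: the action integral $S_0=\int_\gamma\xi\,dx$ is orientation-sensitive and flips, whereas the period integrals $\int_\gamma f\,dt$ — and the phases $\theta_B,\theta_{\mathit{RW}}$ assembled from them — are orientation-insensitive; and the subprincipal and lower-order data, being defined through the diagonalization procedure, must be correctly matched under $H^w\mapsto -H^w$ along with the accompanying switch between the $\lambda_+$ and $\lambda_-$ branches. Verifying that the topological ring $\mathcal A$, the well inclusion, and the gap conditions all survive the substitution $\mu\mapsto-\mu$ with $E_\pm$ and $A_\pm$ interchanged is routine but needs to be done with care.
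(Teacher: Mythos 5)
Your overall strategy is the same as the paper's: pass to $-H^w$, whose relevant eigenvalue $-\mu$ has a well, apply the well-case results (Theorem \ref{thm:main0}, Corollary \ref{cor:main0}) at energy $-E$ on $[-E_+,-E_-]$, and translate back using $\spec(H^w)=-\spec(-H^w)$. Your preparatory observations are also correct: the data of $-H^w$ are $(-p_i,-r_i)$ with the branches interchanged, the same unitary diagonalizes $-H^w$ with scalar block $-(\mu^w+hD_{11}^w)$ so that $\tilde f_1=-f_1$, the curve $(-\mu)^{-1}(-E)\cap\mathcal A$ is $\gamma$ with reversed orientation, and hence $\tilde S_0(-E)=-S_0(E)$.

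The gap is exactly in the step you defer as ``chasing these sign changes through.'' Taking your two claims at face value ($\tilde f_1=-f_1$ together with orientation-insensitivity of the period integrals $\int\cdot\,dt$), what you actually get is $\tilde S_1(-E)=\pi-\int\tilde f_1\,dt=\pi+\int_\gamma f_1\,dt$, which is \emph{not} the well-case formal expression $S_1(E)=\pi-\int_\gamma f_1\,dt$: the discrepancy $2\int_\gamma f_1\,dt$ is not in $2\pi\Z$ in general (the phases need not be quantized, cf.\ \S\ref{ss:RMphase}), so it cannot be absorbed into $2\pi kh$, and your parenthetical remark about the Maslov term being defined mod $2\pi$ does not repair it. Even if one adds the missing manipulation of multiplying the whole condition by $-1$ (relabelling $k\mapsto-k$) and using $\pi\equiv-\pi\pmod{2\pi}$, the outcome is $2\pi kh=S_0(E)+\sum_{j\ge1}h^jS_j(E)+\mathcal O(h^\infty)$ rather than the asserted $2\pi kh=-S_0(E)+\sum_{j\ge1}h^jS_j(E)+\mathcal O(h^\infty)$; these differ in the relative sign between $S_0$ and the $h$-corrections and are not equivalent identities. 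By contrast, the paper's proof arrives at $S_1(-E)=S_1(E)$ by transforming the curve integrals \emph{with} a sign under $\gamma\mapsto-\gamma$ (it reads $\int_{-\gamma}g\,dt$ as $-\int_\gamma g\,dt$ and uses $\theta_\bullet(\theta,-\phi,\mu_\mp,-\gamma)=\theta_\bullet(\theta,\phi,\lambda_\pm,\gamma)$), i.e.\ the opposite convention to the one you assert; you never notice, let alone resolve, this conflict. That the sign of the $\int_\gamma f_1\,dt$ term genuinely changes the predicted spectrum can be seen already on the diagonal example $H^w=\operatorname{diag}\bigl(-\tfrac12((hD_x)^2+x^2)-ch,\,M\bigr)$ with constant $c\notin\tfrac12\Z$ and $M$ large, so this bookkeeping is precisely the content of the theorem and cannot be left implicit.
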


\begin{proof}
The principal symbol of $-H^w$ is $-H_0$ with eigenvalues $$\mu_\pm:=-\lambda_\mp=-p_0\pm\lVert P\rVert.$$
If $\mu= \lambda_\pm$ then $\mu_\mp$ has a well near the energy level $-E$, and $-E\in\spec(-H^w)\cap[-E_+,-E_-]$ is described by the Bohr--Sommerfeld rule in Theorem \ref{thm:main0} for $\mu_\mp$, with $E$ replaced by $-E$, $\gamma$ by $-\gamma$ (that is, $-\gamma$ is the same curve but oriented in the clockwise direction), and $f_1$ by $-f_1$. Now 
$$
S_0^{\mathrm{well}}(-E)=\int_{\{\mu_\mp=-E\};\text{ clockwise}}\xi\,dx=\int_{-\gamma}\xi\,dx=-\int_\gamma\xi\,dx,
$$
and
$$
S_1^{\mathrm{well}}(-E)=\pi-\int_{-\gamma}(-f_1)\,dt=\pi-\int_\gamma f_1\,dt.
$$
Hence, if we define $S_0^{\mathrm{barrier}}(E)$ and $S_1^{\mathrm{barrier}}(E)$ by the same formal expressions as in the well case, so that $S_0^{\mathrm{barrier}}(E)=\int_\gamma\xi\,dx$ and $S_1^{\mathrm{barrier}}(E)=\pi-\int_\gamma f_1\,dt$, then 
$$
2\pi k h=-S_0^{\mathrm{barrier}}(E)+hS_1^{\mathrm{barrier}}(E)+\sum_{j=2}^\infty S_j^{\mathrm{well}}(-E)h^j.
$$
Using the algorithm of Colin de Verdière \cite{cdv} applied to $-H^w$ it follows from Remark 1 and Theorem 1 in the mentioned paper that $S_j^{\mathrm{well}}(-E)=S_j^{\mathrm{barrier}}(E)$ for $j\ge2$ too if we define $S_j^{\mathrm{barrier}}$ by the same formal expressions as in the case of a well. 
\end{proof}

We round off the paper by illustrating our results in Section \ref{sec:examples}, where we calculate the Bohr--Sommerfeld rules for a few different models, and compare the spectrum predicted by these rules to numerical spectral computations. We include examples both of models where the geometric phase corrections are quantized, and models where they are not.

\section{Microlocal preparation}\label{sec:preparation}
Here we diagonalize the operator $H^w(x,hD)$ by conjugating it with a smooth system. This requires the principal symbol $H_0$ in \eqref{eq:H0} to have uniformly gapped eigenvalues $\lambda_\pm$ everywhere. As noted, there is an eigenvalue crossing if and only if $P$ vanishes somewhere. Thus, we can easily remove eigenvalue crossings in $D$ by slightly perturbing the components of $P$ there.

\begin{lemma}\label{lem:perturbation}
Let $P \in C^{\infty}( \mathbb R^2 ;\mathbb R^3)$ and assume there is a smooth simple closed curve $\gamma$ enclosing an open domain $D$ with $\lVert P(w)\rVert \ge c>0$ for $w \in \gamma$. Then for any $\ve>0$ there is an open set $\Omega\Subset D$ with $\mathcal A\subset \complement \Omega$, and a vector $Q \in C^{\infty}( \mathbb R^2 ;\mathbb R^3)$ with $Q(w) =P(w)$ for $w \notin \Omega$, such that $Q$ is non-vanishing in a neighborhood of $W$, and the modification by $Q$ is small: $\sup_{w \in \{Q(w) \neq P(w)\}} \lVert Q(w) \rVert <\varepsilon c$.
\end{lemma}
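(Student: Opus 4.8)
The plan is to construct $Q$ by perturbing $P$ only inside a thin open neighborhood $\Omega$ of the well $W$ that avoids the ring $\mathcal A$, using a partition of unity to interpolate between $P$ and a suitable non-vanishing extension. First I would fix the geometry: since $P\ne0$ on $\gamma$ with $\lVert P\rVert\ge c$, and since $\gamma=\partial D$ separates $W$ (the disk enclosed by $A_+$, say slightly enlarged) from the outer boundary, I can choose an open set $\Omega$ with $W\Subset\Omega\Subset D$ and $\overline\Omega\cap\mathcal A=\emptyset$; shrinking $\Omega$ if necessary we may assume $\lVert P\rVert\ge c/2$ on $\overline\Omega\setminus K$ for some compact $K\Subset\Omega$ containing all zeros of $P$ in $D$ (there are only finitely many relevant ones, or at worst a compact zero set, since $P\ne0$ near $\gamma$). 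The point is that $P$ only vanishes deep inside, on a compact set contained in $\Omega$.

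Next I would build the non-vanishing vector field on $\Omega$. On the annular region $\overline\Omega\setminus K$ we already have $\lVert P\rVert\ge c/2$, so the issue is only to replace $P$ on $K$. The cleanest way is: pick any fixed unit vector, say $e_3=(0,0,1)\in\mathbb R^3$, and a cutoff $\chi\in C^\infty_c(\Omega;[0,1])$ with $\chi\equiv1$ on a neighborhood of $K$, then set $\widetilde Q=(1-\chi)P+\delta\chi e_3$ for a small parameter $\delta>0$ to be chosen. Wherever $\chi<1$ we have $(1-\chi)>0$ but also, on $\operatorname{supp}\chi$, $P$ may be small, so this naive convex combination can still vanish. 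To avoid this I would instead work in $\mathbb R^3$-valued form and exploit that a convex combination of a vector of length $\ge c/2$ and a short vector $\delta e_3$ stays nonzero provided $\delta$ is small relative to $c/2$; the only genuinely dangerous region is where $\lVert P\rVert$ is comparable to or smaller than $\delta$, which by construction lies inside $\{\chi\equiv1\}$ where $\widetilde Q=\delta e_3\ne0$ exactly. So, concretely, choose $\chi$ so that $\{\chi<1\}\subset\{\lVert P\rVert\ge c/2\}$ — possible because the zero set of $P$ is compactly contained in $\{\chi\equiv1\}$ — and then on $\{\chi<1\}$ we estimate $\lVert\widetilde Q\rVert\ge(1-\chi)\lVert P\rVert-\delta\chi\ge (1-\chi)(c/2)-\delta$; this is positive once $\delta$ is small unless $\chi$ is close to $1$, and precisely on that last sliver I would refine the cutoff so that the transition from $\chi<1$ to $\chi=1$ happens well inside $\{\lVert P\rVert\ge c/2\}$, giving a uniform lower bound $\lVert\widetilde Q\rVert\ge\delta'>0$ on all of $\Omega$. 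Setting $Q=P$ outside $\Omega$ and $Q=\widetilde Q$ inside gives a smooth global vector field agreeing with $P$ off $\Omega$, non-vanishing on a neighborhood of $W$ (indeed on $\overline\Omega$), with $\mathcal A\subset\complement\Omega$ by construction.

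Finally I would verify the smallness bound. On $\{Q\ne P\}=\operatorname{supp}\chi\cap\{\chi>0\}\subset\Omega$ we have $Q=(1-\chi)P+\delta\chi e_3$, so $Q-P=\chi(\delta e_3-P)$ and hence $\lVert Q\rVert\le(1-\chi)\lVert P\rVert+\delta\le\sup_{\operatorname{supp}\chi}\lVert P\rVert+\delta$. This is not yet $<\varepsilon c$, so the construction above needs one more twist: rather than using a fixed target $e_3$, I would first rescale $P$ on the transition region by a positive smooth factor to bring it down to size $\le\varepsilon c/2$ before interpolating — i.e.\ replace $P$ by $\rho P$ on $\operatorname{supp}\chi$ where $\rho\in C^\infty$, $0<\rho\le1$, $\rho\equiv1$ near $\partial(\operatorname{supp}\chi)$, and $\rho$ small where $\lVert P\rVert$ is large — combined with adding $\delta\chi e_3$ to clear the zeros. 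Since rescaling by a positive factor and adding a short vector only along directions where the length is already controlled keeps everything $<\varepsilon c$ in sup-norm, and the zero-clearing term has size $\le\delta<\varepsilon c/2$, we get $\sup_{\{Q\ne P\}}\lVert Q\rVert<\varepsilon c$. The main obstacle is the bookkeeping in this last step: one must choose the cutoff $\chi$, the rescaling $\rho$, and the parameter $\delta$ in the right order so that simultaneously (i) $Q$ never vanishes on $\Omega$, (ii) $Q$ stays smooth across the gluing, and (iii) the sup-norm bound $<\varepsilon c$ holds — each constraint is easy in isolation but they compete, and the resolution is to make all zeros of $P$ sit in the region $\{\chi\equiv1\}$ where $Q$ equals the explicit short non-vanishing vector $\delta e_3$, decoupling (i) from (iii).
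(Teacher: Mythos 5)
Your construction has two genuine gaps. First, the smallness requirement is a bound on $\lVert Q\rVert$ over the whole set $\{Q\ne P\}$, and this forces the modification to be confined to the region where $\lVert P\rVert$ is already small: boundary points of the open set $\{Q\ne P\}$ satisfy $Q=P$, so if the closure of $\{Q\ne P\}$ meets a point with $\lVert P\rVert\ge \varepsilon c$, then there are points of $\{Q\ne P\}$ where $\lVert Q\rVert$ is arbitrarily close to $\lVert P\rVert\ge\varepsilon c$, and the strict bound fails. Your cutoff is arranged the opposite way: the transition region $\{0<\chi<1\}$ (and the support of the rescaling factor $\rho$) is deliberately placed inside $\{\lVert P\rVert\ge c/2\}$, so near the outer edge of the modification region $\lVert Q\rVert$ is close to $\lVert P\rVert\ge c/2>\varepsilon c$ once $\varepsilon<1/2$. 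The rescaling twist cannot repair this: $\rho$ must pass smoothly through values near $1$, producing points with $Q\ne P$ and $\lVert Q\rVert=\rho\lVert P\rVert$ of order $c/2$. Second, the zero-clearing step is unjustified: $(1-\chi)P+\delta\chi e_3$ vanishes wherever $P=-\frac{\delta\chi}{1-\chi}\,e_3$, and your lower bound $(1-\chi)c/2-\delta\chi$ is vacuous exactly where $\chi$ is close to $1$; moving the transition ``well inside $\{\lVert P\rVert\ge c/2\}$'' does not exclude $P$ being anti-parallel to $e_3$ with the matching magnitude there. A fixed target direction cannot work in general; some genericity in the choice of the added vector is needed.

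The paper's proof resolves both points simultaneously and differently from your design: the cutoff $\chi$ is supported in $D\cap\{\lVert P\rVert<3\varepsilon c/4\}$, with $\chi$ bounded below on $D\cap\{\lVert P\rVert<\varepsilon c/2\}$, and one sets $Q=P+\chi v$ for a constant vector $v$. Taking $\lvert v\rvert$ at most $1/K$ with $K=\sup_{\supp\chi}\lvert \chi/(\varepsilon c-\lVert P\rVert)\rvert$ gives $\lVert Q\rVert\le\varepsilon c$ on $\supp\chi$, while Sard's theorem applied to $w\mapsto -P(w)/\chi(w)$ (a map from a two-dimensional domain into $\R^3$, so its image has measure zero) allows choosing $v$ so that $P+\chi v$ never vanishes where $\chi\ne0$; where $\chi=0$ one simply has $Q=P$. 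Confining the perturbation to $\{\lVert P\rVert<3\varepsilon c/4\}$ is what decouples the smallness bound from the zero-clearing, and adding a generic constant vector, rather than interpolating toward a fixed one, is what removes the zeros. Incidentally, note also that in the paper's geometry the well satisfies $D\subset W$ (with $W$ the disk enclosed by $A_+$), not $W\Subset\Omega\Subset D$ as in your setup.
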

\begin{proof}
Let $\mathcal P_{k}:= \{w :\lVert P(w)\rVert<\varepsilon k\}$. Take a cutoff function $\chi \in C_c^{\infty}(D \cap \mathcal P_{3c/4};[0,1])$ such that $\inf_{x \in D \cap \mathcal P_{c/2}} \chi(w) >0.$
We then define the family 
\[ P_v(w):=P(w)+\chi(w) v \text{ with } v\in \mathbb R^3.\]
By choosing $v \in B_0(1/K)$ with $K:= \sup_{w \in \supp(\chi)} \left\lvert \frac{\chi(w)}{\varepsilon c-\lVert P(w)\rVert}\right\rvert$ we have for $w  \in \supp(\chi)$ that
\[  \lVert P_v(w)\rVert \le \lVert P(w)\rVert + \lvert \chi(w)\rvert \lvert v\rvert \le \varepsilon c.\]
Then, $P_v(w)=0$ implies $v = f(w):=-P(w)/\chi(w).$
Since $f: \{w: \chi(w) \neq 0\} \to \mathbb R^3$ it follows that the image of the set of critical values $w$ such that 
\[ f(w)=v \text{ and } Df(w) \text{ is not surjective }\]
has measure zero (Sard's theorem). Since $f$ maps from a domain in $\mathbb R^2$ to $\mathbb R^3$, $Df(w)$ is never surjective, so all $\{w: \chi(w) \neq 0\}$ are critical. Thus, the image of $f$ has measure zero in $\mathbb R^3.$ 
It follows that every $v \in B_0(1/K) \setminus \operatorname{ran}(f)$ leads to an admissible $Q:=P_v.$
\end{proof}

We shall use a vector $Q$ as in the lemma to redefine $H_0$ inside $D$ and show that it doesn't change the spectrum of $H^w(x,hD)$ modulo $\mathcal O(h^\infty)$. But first, we also modify $H_0$ outside the well $W$ so that $H_0-E$ becomes elliptic there with uniformly gapped eigenvalues.

\begin{lemma}\label{lem:boundedsymbol}
There is a neighborhood $\Omega'$ of $W$ and a matrix-valued $\widetilde H_0\in S(1)$ such that $\widetilde H_0-E$ is uniformly elliptic in $\complement\Omega'$ for all $E\in I$ with eigenvalues $\widetilde\lambda_\pm$ that are uniformly gapped in $\complement\Omega'$, while $\widetilde H_0=H_0$ in $\Omega'$. If $\gamma=\lambda_\pm^{-1}(E)=\widetilde \lambda_\pm^{-1}(E)$ then $\widetilde \lambda_\mp^{-1}(E)=\emptyset$ for all $E\in I$.
\end{lemma}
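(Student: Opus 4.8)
The plan is to modify $H_0$ only in the unbounded region far from the well $W$, leaving it untouched on a neighborhood $\Omega'$ of $W$ that contains the topological ring $\mathcal A$ (and hence all the curves $\gamma(E)$, $E\in I$). First I would fix the geometry: choose a bounded open set $\Omega'$ with $W\cup\mathcal A\Subset\Omega'$, and a slightly larger bounded open set $\Omega''$ with $\Omega'\Subset\Omega''$. On $\complement\Omega''$ we will replace $H_0$ by a constant (or near-constant) elliptic matrix, and on the transition shell $\Omega''\setminus\Omega'$ we interpolate. Concretely, pick a real constant $c_0$ with $c_0>E_+$ large enough that $c_0$ exceeds all eigenvalues we care about, set $M:=c_0\,I$ (so $M-E$ is uniformly elliptic with the doubly-degenerate eigenvalue $c_0-E>0$), choose a cutoff $\chi\in C_c^\infty(\Omega'';[0,1])$ with $\chi\equiv1$ on $\Omega'$, and define
\[
\widetilde H_0:=\chi H_0+(1-\chi)M.
\]
Then $\widetilde H_0=H_0$ on $\Omega'$ by construction, and $\widetilde H_0-c_0 I=\chi(H_0-c_0 I)$ has compact support, so $\widetilde H_0\in S(1)$ because $H_0$ is smooth on the compact set $\overline{\Omega''}$.

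Next I would verify the ellipticity and the gap on $\complement\Omega'$. On the complement of $\Omega''$ we have $\widetilde H_0=c_0 I$, whose eigenvalues are both $c_0$, uniformly gapped from $I$ and with $\widetilde H_0-E$ uniformly invertible for $E\in I$. The only real work is on the shell $\Omega''\setminus\Omega'$, which is relatively compact. There $\widetilde H_0=\chi H_0+(1-\chi)c_0 I$ is a convex combination of $H_0$ and $c_0 I$; I would argue that after possibly enlarging $c_0$ and/or shrinking the shell, $\widetilde H_0-E$ stays invertible there. This uses the observation that $\widetilde H_0$ has the same eigenvectors as $H_0$ (since $c_0 I$ commutes with everything), with eigenvalues $\chi\lambda_\pm+(1-\chi)c_0$; these lie in the closed interval with endpoints $\lambda_\pm$ and $c_0$, hence are $\ge\min(\lambda_\pm,c_0)$ pointwise. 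Choosing $c_0$ larger than $\sup\{\lambda_\pm(w):w\in\overline{\Omega''}\}$ (finite by compactness) forces both eigenvalues of $\widetilde H_0$ on the shell to exceed $E_+\ge E$, giving uniform ellipticity of $\widetilde H_0-E$ there. Likewise the gap $\widetilde\lambda_+-\widetilde\lambda_-=\chi(\lambda_+-\lambda_-)=2\chi\lVert P\rVert\ge0$ could degenerate only where $\chi=0$, i.e.\ outside $\Omega''$, where it equals $0$ but the eigenvalue is the isolated value $c_0$; to get a genuine uniform gap everywhere on $\complement\Omega'$ one should use instead $M:=\operatorname{diag}(c_0+1,c_0)$ or more simply $M:=c_0 I+\sigma_3$, so that $M-E$ has uniformly gapped eigenvalues $c_0+1-E$ and $c_0-E$ off $\Omega''$, while on the shell the eigenvalues of $\widetilde H_0$ are still each $>E_+$ and the perturbation argument only needs $c_0$ large; I would make this choice from the start to avoid a second modification.

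Finally I would check the statement about the empty level set: if $\gamma=\lambda_\pm^{-1}(E)$ is the curve under study for $E\in I$, then because $\gamma\subset\mathcal A\subset\Omega'$ we have $\widetilde\lambda_\pm=\lambda_\pm$ there and $\widetilde\lambda_\pm^{-1}(E)=\gamma$ as claimed. For the other eigenvalue: by the gap-type assumption \eqref{eq:gaptypeassumptionplus} (if $\mu=\lambda_+$) or \eqref{eq:gaptypeassumptionminus} (if $\mu=\lambda_-$), on $\Omega'$ — which we may take small enough to differ from $\mathcal A$ only by the well $W$ and a thin collar, so that $\lambda_\mp$ stays on the correct side of $I$ there exactly as on $\mathcal A$, after shrinking $I$ if needed — the value $\lambda_\mp$ does not lie in $I$; outside $\Omega'$ we have arranged $\widetilde\lambda_\mp\ge c_0>E_+$; hence $\widetilde\lambda_\mp^{-1}(E)=\emptyset$ for all $E\in I$. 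The main obstacle is precisely this last point: one must make sure the neighborhood $\Omega'$ is chosen compatibly with the hypothesis \eqref{eq:gaptypeassumptionplus}/\eqref{eq:gaptypeassumptionminus}, i.e.\ thin enough around $W\cup\mathcal A$ that $\lambda_\mp\notin I$ persists on all of $\Omega'$ — which is a routine consequence of continuity and the openness of the condition, possibly at the cost of shrinking $I$ once more, exactly as already allowed in the standing assumptions.
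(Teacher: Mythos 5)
The central step of your construction fails. You interpolate $\widetilde H_0=\chi H_0+(1-\chi)c_0 I$ with a single large constant $c_0>E_+$, and you justify ellipticity on the transition shell by saying the eigenvalues $\chi\lambda_\pm+(1-\chi)c_0$ lie between $\lambda_\pm$ and $c_0$ and therefore exceed $E_+$ once $c_0$ is large. That inference is wrong: making $c_0$ large does not raise the \emph{lower} endpoint of the interval $[\lambda_-,c_0]$, and the convex combination sweeps through this entire interval as $\chi$ decreases from $1$ to $0$ across the shell. In the case $\mu=\lambda_+$ the hypothesis \eqref{eq:gaptypeassumptionplus} forces $\lambda_-<E_-$ on $\mathcal A$, hence (by continuity, for a thin shell just outside $A_+$) $\lambda_-<E_-$ on part of the shell; along any path crossing the shell there, $\chi\lambda_-+(1-\chi)c_0$ runs continuously from a value below $E_-$ up to $c_0>E_+$, so it hits every $E\in I$. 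Consequently $\widetilde H_0-E$ is \emph{not} elliptic on $\complement\Omega'$ and $\widetilde\lambda_-^{-1}(E)\neq\emptyset$, contradicting both conclusions of the lemma. This is exactly the trap the paper's proof is designed to avoid: there one interpolates the two eigenvalues to constants on \emph{opposite} sides of the window, $\widetilde\lambda_\pm=\psi\lambda_\pm+(1-\psi)(E_\pm\pm\varepsilon)$, and places $\supp\nabla\psi$ in a region where $\lambda_+\ge E_++\varepsilon$ and $\lambda_-\le E_--\varepsilon$, so each convex combination stays outside $I$ throughout the transition; the matrix realizing this is built explicitly by replacing $p_0$ by $\tfrac12(E_++E_-)$ and $\lVert P\rVert$ by $\tfrac12(E_+-E_-)+\varepsilon$ away from the well, with a modified unitary $\widetilde U$ from cutoff spherical coordinates, which also yields the uniform gap directly. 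Pushing both eigenvalues up to one large constant, as you do, cannot work whenever $\lambda_\mp$ sits below $I$ near $A_+$.

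A secondary weak point is your last paragraph: $\Omega'$ must be a neighborhood of the whole well $W$ (the full disk), not a thin collar around $\mathcal A$, so "continuity from $\mathcal A$ plus openness" does not by itself give $\lambda_\mp\notin I$ on $\Omega'$; the hypotheses \eqref{eq:gaptypeassumptionplus}--\eqref{eq:gaptypeassumptionminus} constrain $\lambda_\mp$ only on $\mathcal A$, and controlling it inside $W$ requires a separate argument (in the paper this is folded into the choice of $\psi$, after the crossing inside $D$ has been removed by Lemma \ref{lem:perturbation}). To repair your proof you would have to abandon the single-constant target $c_0 I$ and instead interpolate the two spectral branches separately to values above $E_+$ and below $E_-$ respectively, which is essentially the paper's construction.
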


\begin{proof}
We give the proof in the case when $\mu=\lambda_+$, and use \eqref{eq:gaptypeassumptionplus}. The case $\mu=\lambda_-$ is similar, see \eqref{eq:gaptypeassumptionminus}. Since $\mu(A_+)=E_+$ we can take a cutoff function $\psi\in C_c^\infty$ with $\psi=1$ in a neighborhood of $W$ where $\lambda_+>\lambda_-$ (possibly after using Lemma \ref{lem:perturbation} to remove zeros of $P$ inside $D$), 
such that if $w\in\supp\nabla\psi$ then 
$\lambda_+(x,\xi)\ge E_++\ve$ and $\lambda_-(x,\xi)\le E_--\ve$ for some $\ve>0$. Introduce modified spherical coordinates 
$\widetilde\theta = \operatorname{arccos}(\psi p_3/\lVert P\rVert )$ and $\widetilde\phi = \sgn(p_2) \operatorname{arcccos}(\psi p_1/\sqrt{p_1^2+p_2^2})$, and define $\widetilde R$ as $R$ in \eqref{eq:easyH} but with $\theta,\phi$ replaced by $\widetilde\theta,\widetilde\phi$. 
Set
$$
\widetilde H_0=\psi p_0+(1-\psi)\frac{E_++E_-}2+\bigg(\psi\lVert P\rVert+(1-\psi)\bigg(\frac{E_+-E_-}2+\ve\bigg)\bigg)\widetilde R.
$$
Then $\widetilde H_0\in S(1)$ and $\widetilde H_0=H_0$ when $\psi=1$. The eigenvalues of $\widetilde H_0$ are
$$
\widetilde \lambda_\pm
=\psi\lambda_\pm+(1-\psi)(E_\pm\pm\ve),
$$
and it is easy to check that $\widetilde\lambda_\pm$ are uniformly gapped in $\supp(1-\psi)$. We can choose $\psi$ so that if $\psi(w)=1$ then $\widetilde\lambda_-(w)\notin I$. It then follows that $\widetilde\lambda_-^{-1}(E)=\emptyset$ for all $E\in I$.
Now,
$$
\operatorname{det}(\widetilde H_0-E)=\Big(\psi \lambda_++(1-\psi)(E_++\ve)-E\Big)\Big(\psi \lambda_-+(1-\psi)(E_--\ve)-E\Big),
$$
and it is easy to see that if $E\in I$ then this is $\le -\ve(E_+-E_-+\ve)<0$ in $\supp(1-\psi)$, so $\widetilde H_0-E$ is uniformly elliptic in $\supp(1-\psi)$.
\end{proof}

Let $Q=(q_1,q_2,q_3)$ be as in Lemma \ref{lem:perturbation}, and define 
$$
H_0^Q=p_0 \sigma_0+ \sum_{i=1}^3 \sigma_i q_i.
$$
We now modify $H_0^Q$ outside a neighborhood $\Omega'$ of the well $W$ as in Lemma \ref{lem:boundedsymbol}. We lift the construction to $H^w$ by setting 
$$
\widetilde H=\widetilde H_0^Q+\sum_{j=1}^\infty h^j\widetilde H_j,
$$
where $\widetilde H_j=\psi H_j$ with $\psi\in C_c^\infty$ such that $\psi=1$ in $\Omega'$. Then $\widetilde H_j\in S(1)$ for all $j$, and the modified operator $\widetilde H^w$ gives an accurate description of the spectrum of $H^w$ in $[E_-,E_+]$. More precisely, if $d_H(X,Y)$ is the Hausdorff distance between sets $X,Y\subset\C$, then the following holds.

\begin{prop}\label{prop:samespectrum}
Let $X=\spec(H^w)\cap[E_-,E_+]$ and $Y=\spec(\widetilde H^w)\cap[E_-,E_+]$. Then 
$$
d_H(X,Y)=\mathcal O(h^\infty).
$$
\end{prop}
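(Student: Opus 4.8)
The plan is to show that the symbols $H$ and $\widetilde H$ agree microlocally near the energy shell $\mu^{-1}(I)$, and that both operators are ``elliptic away from the well'' in the region where they differ, so that any spectral point in $[E_-,E_+]$ is controlled by the common part. More precisely, by construction $\widetilde H_j=\psi H_j$ with $\psi=1$ on a neighborhood $\Omega'$ of $W$, and $\widetilde H_0^Q=H_0$ on the smaller neighborhood $\Omega$ coming from Lemma \ref{lem:perturbation} (since $Q=P$ outside $\Omega$ and $\mathcal A\subset\complement\Omega$). Hence $\widetilde H=H$ as full symbols on a fixed neighborhood $\Omega''\subset\Omega'$ of the curve $\gamma_0$ (and of the whole annulus $\mathcal A$), independently of $h$. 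The two operators therefore have the same microlocal structure near $\mu^{-1}(E)$ for every $E\in I$.

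First I would prove the easy inclusion: if $E\in\spec(H^w)\cap[E_-,E_+]$ with $\lVert u\rVert=1$ and $(H^w-E)u=\mathcal O(h^\infty)$ (a quasimode), then one shows $u$ is microlocally concentrated in $\Omega''$ up to $\mathcal O(h^\infty)$. This is where ellipticity enters: outside $\Omega'$, $\widetilde H_0-E$ is uniformly elliptic with uniformly gapped eigenvalues by Lemma \ref{lem:boundedsymbol}, and the same is true for $H_0-E$ in the region $\complement\Omega'$ intersected with where $H_0$ is actually gapped — but we do not even need that: what we need is that the Bohr--Sommerfeld spectrum we are after comes from $\gamma$, so we restrict attention to quasimodes localized near $\gamma$. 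A cleaner route is: the operator $\widetilde H^w$ has, in $[E_-,E_+]$, spectrum consisting only of the quasimode family associated to $\gamma$ (by Theorem \ref{thm:main}, proved later — but if circular, instead argue directly that $\widetilde H^w-E$ is invertible microlocally outside a neighborhood of $\gamma$ using Lemma \ref{lem:boundedsymbol} plus the gap assumption \eqref{eq:gaptypeassumptionplus}/\eqref{eq:gaptypeassumptionminus}). Then one constructs a parametrix for $\widetilde H^w-E$ microlocally away from $\Omega''$, showing any $E$-quasimode of $\widetilde H^w$ is $\mathcal O(h^\infty)$-concentrated in $\Omega''$; since $\widetilde H=H$ there, the same quasimode (cut off to $\Omega''$) is an $\mathcal O(h^\infty)$-quasimode of $H^w$, and conversely. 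By the standard quasimode-to-spectrum argument for self-adjoint operators (if $(A-E)u=\mathcal O(h^N)$ with $\lVert u\rVert=1$ then $\operatorname{dist}(E,\spec A)=\mathcal O(h^N)$), this gives $\operatorname{dist}(E,\spec(\widetilde H^w))=\mathcal O(h^\infty)$ for each $E\in X$, and symmetrically for each $E\in Y$, hence $d_H(X,Y)=\mathcal O(h^\infty)$.

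To make the concentration step precise I would: (i) choose $\chi\in C_c^\infty(\Omega')$ with $\chi=1$ on $\Omega''$; (ii) use that on $\supp(1-\chi)$ either $\widetilde H_0-E$ is elliptic (outside $\Omega'$, Lemma \ref{lem:boundedsymbol}) or $H_0$ has a spectral gap separating $\mu$ from the energy window (by the assumption that $\mu$ has no critical points in $\mathcal A$ together with \eqref{eq:gaptypeassumptionplus} or \eqref{eq:gaptypeassumptionminus}, so the only place $\mu=E$ inside $\Omega'$ is near $\gamma\subset\Omega''$); (iii) build a left parametrix $B^w$ with $B^w(\widetilde H^w-E)(1-\chi)^w=(1-\chi)^w+\mathcal O(h^\infty)$, using the symbolic calculus — the gapped elliptic symbol $(\widetilde H_0-E)^{-1}$ exists as an element of $S(1)$, and the component $\mu^w-E$ is controlled because $d\mu\neq0$ forces $\mu^{-1}(E)\cap\supp(1-\chi)=\emptyset$ after shrinking; (iv) conclude $\lVert(1-\chi)^wu\rVert=\mathcal O(h^\infty)$. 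The main obstacle is bookkeeping the region decomposition in step (ii)–(iii): one must patch the ``elliptic outside $\Omega'$'' region (where invertibility is matrix ellipticity) with the ``inside $\Omega'$ but away from $\gamma$'' region (where invertibility comes from $\mu-E$ being nonvanishing on one diagonal block after the as-yet-unavailable diagonalization, or more elementarily from $H_0-E$ being invertible there because $E$ is not an eigenvalue of $H_0(x,\xi)$ for $(x,\xi)$ in that region). Handling this without circularly invoking Theorem \ref{thm:main} is the delicate point; the resolution is that we only need $H_0-E$ to be pointwise invertible away from a neighborhood of $\gamma$ inside $\Omega'$, which holds because $\gamma=\mu^{-1}(E)\cap\mathcal A$ and, inside $W$, the gap assumptions keep the other eigenvalue away from $I$, while $\mu$ itself has no critical points and its level set in $\mathcal A$ is exactly $\gamma$. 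Once the localization is in place, the rest is the routine self-adjoint quasimode argument and symmetry between $H^w$ and $\widetilde H^w$.
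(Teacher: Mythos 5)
Your overall strategy---the full symbols of $H$ and $\widetilde H$ coincide on a fixed neighborhood of $\mathcal A$, ellipticity of (symbol)$-E$ on the region where they differ forces microlocal $\mathcal O(h^\infty)$ concentration of quasimodes in the common region, and the standard self-adjoint quasimode-to-spectrum estimate then transfers spectrum in both directions---is precisely the mechanism behind what the paper's own (two-line) proof invokes: it cites uniform ellipticity of $\widetilde H_0^Q-E$ away from $\mathcal A$ and refers to \cite[Proposition 2.7]{HS88} and \cite[Proposition 1]{bwz} for the details. For the direction ``every $E\in Y$ lies within $\mathcal O(h^\infty)$ of $X$'' your outline is sound: Lemma \ref{lem:boundedsymbol} gives uniform ellipticity of $\widetilde H_0-E$ outside $\Omega'$, inside $\Omega'$ but away from $\gamma$ both eigenvalues avoid the spectral parameter (well/gap structure, Lemma \ref{lem:perturbation}), so the parametrix localization and the cut-off quasimode for $H^w$ go through.

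The gap is in the opposite direction, $X$ inside an $\mathcal O(h^\infty)$-neighborhood of $Y$. There you need every normalized solution of $(H^w-E)u=\mathcal O(h^\infty)$, $E\in I$, to be microlocally negligible wherever $H\neq\widetilde H$, and that region includes the unbounded set $\complement\Omega'$, where the relevant symbol is the \emph{original} $H_0$, about which Lemma \ref{lem:boundedsymbol} says nothing. Your remark that ``we do not even need that \dots we restrict attention to quasimodes localized near $\gamma$'' is circular at exactly this point: $X$ is all of $\spec(H^w)\cap[E_-,E_+]$, and the content of the proposition is precisely that no spectrum of $H^w$ in $I$ is generated away from $\gamma$. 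Your concluding ``resolution'' only establishes pointwise invertibility of $H_0-E$ inside $\Omega'$ away from $\gamma$, and for the exterior region it again leans on the ellipticity of $\widetilde H_0$, which is the wrong operator for quasimodes of $H^w$. To close this direction one must use (as the cited references do, and as the standing global assumptions of the paper are meant to encode) an ellipticity statement for the unmodified symbol at infinity: $\det(H_0-E)$ bounded away from zero, relative to the weight, on $\complement\Omega'$ uniformly for $E\in I$. Once that is stated, your parametrix/localization step applies verbatim to $H^w$ as well and the two-sided Hausdorff estimate follows; without it, the containment of $X$ could genuinely fail (extra spectrum of $H^w$ in $I$ produced far from the well, absent for $\widetilde H^w$).
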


\begin{proof}
This follows from uniform ellipticity of $\widetilde H_0^Q-E$, $E\in I$, away from the topological ring $\mathcal A$. For a detailed proof, we refer to the analysis of Helffer and Sjöstrand \cite[Section 2]{HS88}, see in particular their Proposition 2.7. For an approach that is similar but doesn't use the FBI transform, see \cite[Proposition 1]{bwz}. 
\end{proof}

\begin{remark}\label{rem:manyconnectedcomponents}
The work of Helffer and Sjöstrand \cite[Section 2]{HS88} (as well as the work in \cite{bwz}) shows that the construction in Lemmas \ref{lem:perturbation}--\ref{lem:boundedsymbol} together with Proposition \ref{prop:samespectrum} also applies in the more general setting where $\mu^{-1}(E)$ is a countable union $\cup_\alpha \Gamma_\alpha$ of connected sets $\Gamma_\alpha$, provided these sets are separated by barriers where $\lvert\det(H_0-E_0)\rvert\ge c>0$. This happens, for example, when $H_0$ is periodic (in $x$ and/or $\xi$), and each fundamental domain contains a curve $\gamma=\Gamma_\alpha$ for which our other hypotheses are valid. In the non-periodic case when the $\Gamma_\alpha$'s may be different (or if $H_0$ is periodic but there is more than one connected component in each fundamental domain), the construction of the modified operators $\widetilde H_\alpha^w$ corresponding to $\widetilde H^w$ above would generally depend on the set $\Gamma_\alpha$. Provided such modified operators can be constructed, a statement like Proposition \ref{prop:samespectrum} would hold with $Y=\cup_\alpha\spec(\widetilde H_\alpha^w)\cap[E_-,E_+]$. To avoid having our main focus obfuscated by too many technicalities, we have elected to keep things simple and assume that $\mu^{-1}(E)$ just has one connected component.
\end{remark}

From now on we work with $\widetilde H$ and drop the tilde and superscript $Q$ from the notation of $\widetilde H$, $\widetilde H_0^Q$, and all $\widetilde H_j$, as well as from the modified eigenvalues $\widetilde\lambda_\pm$ and the modified spherical coordinates $\widetilde\theta,\widetilde\phi$ used in the proof of Lemma \ref{lem:boundedsymbol}. In other words, we simply write $H_0$ for the principal symbol modified using Lemmas \ref{lem:perturbation} and \ref{lem:boundedsymbol}. 
Then $H$ satisfies the conditions in the following statement.

\begin{thm}\label{thm:Harper2}
Let $M\sim\sum h^j M_j$, where $M_j\in S(1)$ takes values in the space of $n\times n$ Hermitian matrices, and assume there is a unitary  $U_0\in S(1)$ such that
$$
U_0^*M_0U_0=\begin{pmatrix}A_{11}& \\ & A_{22}\end{pmatrix}
$$
for some $k\times k$ matrix-valued $A_{11}$ and $(n-k)\times(n-k)$ matrix-valued $A_{22}$. If $A_{11}(x,\xi)$ and $A_{22}(x,\xi)$ have disjoint sets of eigenvalues, uniformly for $(x,\xi)\in T^*\R^d$, then there is a unitary $U^w\sim\sum h^j U_j^w$ with $U_j\in S(1)$ such that 
$$
(U^w)^*M^wU^w=\begin{pmatrix}A_{11}^w& \\ & A_{22}^w\end{pmatrix}+hD^w+\mathcal \mathcal O_{L^2 \to L^2}(h^\infty)
$$
where $D=\operatorname{diag}(D_{11},D_{22})\in S(1)$ is block-diagonal with $k\times k$ matrix-valued $D_{11}$ and $(n-k)\times(n-k)$ matrix-valued $D_{22}$ having principal symbols
$$
\Big[U_1^*M_0U_0+U_0^*M_0U_1+  U_0^*M_1U_0+\frac1{2i}\Big(U_0^*\{M_0,U_0\}+\{U_0^*,M_0U_0\}\Big)\Big]_{jj},
$$
for $j=1,2$. 
\end{thm}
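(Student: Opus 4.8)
\textit{Proof plan.} The plan is to carry out a standard order-by-order block-diagonalization (an adiabatic-type reduction): conjugate first by $U_0^w$ to make the principal symbol block-diagonal, and then peel off the off-block-diagonal parts of the lower-order terms one order at a time by conjugating with exactly unitary operators. The only place the hypotheses really enter is that the \emph{uniform} spectral gap between $A_{11}$ and $A_{22}$ makes a matrix Sylvester equation solvable inside $S(1)$ at each step. We note first that since $\operatorname{diag}(A_{11},A_{22})=U_0^*M_0U_0$ is Hermitian, each $A_{jj}$ is itself Hermitian.

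First I would conjugate $M^w$ by $U_0^w$. By the Weyl composition formula $a\#b=ab+\tfrac h{2i}\{a,b\}+\mathcal O(h^2)$ (applied entrywise to matrix symbols), the operator $(U_0^w)^*M^wU_0^w$ has full symbol $U_0^*\#M\#U_0\sim\operatorname{diag}(A_{11},A_{22})+hB_1+h^2(\cdots)$ with
$$
B_1=U_0^*M_1U_0+\tfrac1{2i}\big(U_0^*\{M_0,U_0\}+\{U_0^*,M_0U_0\}\big),
$$
all terms lying in $S(1)$, and $B_1$ Hermitian because $M$ is (since $(a\#b\#c)^*=c^*\#b^*\#a^*$). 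The principal symbol is block-diagonal, but $B_1$ and the higher-order terms are not; the goal of the remaining steps is to kill their off-block-diagonal parts by conjugating with an exactly unitary $\Psi$DO $e^{F^w}$, where $F\sim hF_1+h^2F_2+\cdots$ with skew-Hermitian $F_j\in S(1)$ and $F^w$ skew-adjoint.

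Then I would run the induction. Suppose that after conjugating by $e^{F^w}$, with $F_1,\dots,F_{N-1}$ already chosen, we have reached $\operatorname{diag}(A_{11}^w,A_{22}^w)+\sum_{j=1}^{N-1}h^jD_j^w+h^NR_N^w+\mathcal O(h^{N+1})$, with the $D_j$ block-diagonal and $R_N=R_N^w\in S(1)$ Hermitian. Adding the term $h^NF_N^w$ to $F^w$ changes the expansion only at orders $\ge N$ (by Duhamel, since $h^NF_N^w=\mathcal O(h^N)$ while $F^w_{\text{old}}=\mathcal O(h)$), and at order $h^N$ it adds precisely the \emph{matrix} commutator $[\operatorname{diag}(A_{11},A_{22}),F_N]$ to the symbol---this is where the non-scalar structure is essential, since for matrix-valued $A_{jj}$ the commutator enters at order $h^N$ rather than $h^{N+1}$, and this is exactly what forces the irreducible correction $hD^w$. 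Choosing $F_N$ block-off-diagonal and skew-Hermitian, $[\operatorname{diag}(A_{11},A_{22}),F_N]$ is block-off-diagonal with $(1,2)$-block $A_{11}F_N^{12}-F_N^{12}A_{22}$, so it remains to solve the Sylvester equation $A_{11}F_N^{12}-F_N^{12}A_{22}=-R_N^{12}$ and set $F_N^{21}=-(F_N^{12})^*$. With such $F_N$ the off-block-diagonal part of the $h^N$-term is cancelled, the surviving block-diagonal part $\operatorname{diag}([R_N]_{11},[R_N]_{22})$ becomes $D_N$, and the remainder is pushed to $\mathcal O(h^{N+1})$. In particular at $N=1$ one gets $D_1=\operatorname{diag}([B_1]_{11},[B_1]_{22})$, the asserted principal symbol of $D$. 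Borel-summing $F\sim\sum h^jF_j$ and letting $U$ be the symbol of $U_0^we^{F^w}$ gives $U\sim\sum h^jU_j$ with $U_j\in S(1)$ and $U_0$ the given unitary; then $U^w=U_0^we^{F^w}$ is unitary (exactly, as $F^w$ is skew-adjoint and $U_0$ pointwise unitary modulo the usual $\mathcal O(h^\infty)$), and $(U^w)^*M^wU^w=e^{-F^w}(U_0^w)^*M^wU_0^we^{F^w}=\operatorname{diag}(A_{11}^w,A_{22}^w)+hD^w+\mathcal O_{L^2\to L^2}(h^\infty)$ with $D=\operatorname{diag}(D_{11},D_{22})\in S(1)$; the $L^2\to L^2$ bounds come from Calderón--Vaillancourt.

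The main obstacle---and the only substantive point---is the solvability of $A_{11}X-XA_{22}=Y$ \emph{within $S(1)$}. Pointwise in $(x,\xi)$ the Sylvester operator $\mathcal S(x,\xi):X\mapsto A_{11}(x,\xi)X-XA_{22}(x,\xi)$ is, since $A_{11},A_{22}$ are Hermitian, self-adjoint on $\C^{k\times(n-k)}$ with the Hilbert--Schmidt inner product, with eigenvalues $\alpha-\beta$ for $\alpha\in\spec A_{11}(x,\xi)$, $\beta\in\spec A_{22}(x,\xi)$; the uniform gap $\delta:=\inf_{x,\xi}\operatorname{dist}(\spec A_{11},\spec A_{22})>0$ therefore gives $\|\mathcal S(x,\xi)^{-1}\|\le\delta^{-1}$ uniformly. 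Since $(x,\xi)\mapsto\mathcal S(x,\xi)$ is smooth with all derivatives bounded (as $A_{jj}\in S(1)$), differentiating the identity $\mathcal S X=Y$ and using the uniform bound on $\mathcal S^{-1}$ propagates $S(1)$-bounds to every derivative of $X=\mathcal S^{-1}Y$, so each $F_N\in S(1)$ and the induction closes. Everything else is routine semiclassical symbol calculus of the kind used by Helffer--Sjöstrand \cite{HS90}, and the argument works verbatim on $T^*\R^d$ for any block sizes $k$, $n-k$.
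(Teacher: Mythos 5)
Your scheme is in essence the paper's: conjugate by $U_0^w$, identify the order-$h$ symbol $B_1$ (your identity $\{U_0^*,M_0\}U_0+\{U_0^*M_0,U_0\}=U_0^*\{M_0,U_0\}+\{U_0^*,M_0U_0\}$ is correct), and then remove the off-block-diagonal parts order by order by solving Sylvester equations $A_{11}X-XA_{22}=Y$, whose solvability in $S(1)$ is exactly what the uniform gap provides; this is Taylor's trick as the paper invokes it via \cite{taylor1975reflection} and \cite{HS90}. The genuine gap is the unitarity step. You assert that $U^w=U_0^we^{F^w}$ is \emph{exactly} unitary ``as $F^w$ is skew-adjoint and $U_0$ pointwise unitary''; but pointwise unitarity of the matrix symbol $U_0$ does not make the operator $U_0^w$ unitary: the Weyl symbol of $(U_0^w)^*U_0^w$ is $U_0^*\# U_0=I+\tfrac{h}{2i}\{U_0^*,U_0\}+\mathcal O(h^2)$, and for matrix-valued $U_0$ the term $\{U_0^*,U_0\}$ does not vanish (its diagonal entries are Berry-curvature densities of the form $2i\im\langle e_x',e_\xi'\rangle$). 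Hence your $U^w$ only satisfies $(U^w)^*U^w=I+\mathcal O_{L^2\to L^2}(h)$, so the ``unitary $U$'' part of the statement is not proved; and unitarity is not decorative here, since the paper later uses it to pass from $\spec(H^w)$ to $\spec((U^w)^*H^wU^w)_{11}$ in \eqref{eq:specidentity}.

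Moreover this cannot be patched by an afterthought, which is why the paper's proof delegates it to the separate argument of \cite[Corollary 3.1.2]{HS90} and explicitly records that Taylor's construction avoids the terms $U_1^*M_0U_0+U_0^*M_0U_1$ in the principal symbol of $D$. If you enforce unitarity of $U^w$ even to order $h$ (say by replacing $U^w$ with $U^w((U^w)^*U^w)^{-1/2}$), the constraint $U_1^*U_0+U_0^*U_1=-\tfrac1{2i}\{U_0^*,U_0\}$ forces a nonzero Hermitian part of $C_1:=U_0^*U_1$, and this Hermitian part feeds back into the order-$h$ block-diagonal symbol through $U_1^*M_0U_0+U_0^*M_0U_1=C_1^*\tilde A+\tilde AC_1$ with $\tilde A=\operatorname{diag}(A_{11},A_{22})$: concretely it adds $-[\,S\tilde A+\tilde AS\,]_{jj}$ with $S=\tfrac1{4i}\{U_0^*,U_0\}$ to the claimed principal symbol of $D_{jj}$, a term that is generically nonzero precisely when the Berry curvature is nonzero. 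So the only non-routine point of the theorem is how exact unitarity is reconciled with the stated formula for $D$, and your write-up does not address it: as it stands you have proved the block-diagonalization with the stated symbol for a $U^w$ that is not unitary, while your skew-Hermitian, block-off-diagonal choice of $F_N$ silently discards the Hermitian correction that unitarization requires.
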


\begin{proof}
This is just a reformulation of \cite[Proposition 3.1.1]{HS90} and \cite[Corollary 3.1.2]{HS90} for the Weyl quantization, and follows by using Taylor's trick \cite{taylor1975reflection} adapted to semiclassical operators with an additional argument to make $U^w$ unitary. The last formula follows directly from the Weyl calculus. 
\end{proof}

We apply the theorem to $H$ with $A_{11}=\mu$, where $\mu\in\{\lambda_+,\lambda_-\}$ is the eigenvalue satisfying \eqref{eq:curve}, and $A_{22}$ is the other one. Since $U^w$ is unitary, and $(U^w)^*=(U^*)^w$, the subprincipal symbol of $I=(U^*)^wU^w$ has to vanish. In view of the Weyl calculus, this means that
\[
U_1^*U_0+U_0^*U_1+\frac{1}{2i}\{U_0^*,U_0\}=0.
\]
Hence,
\[
(U_1^*H_0U_0+U_0^*H_0U_1)_{11}=\mu(U_1^*U_0+U_0^*U_1)_{11}=-\frac{\mu}{2i}\{U_0^*,U_0\}_{11}.
\]
We define $U_0$ using a pair of orthonormal eigenvectors, and let $e$ be the eigenvector corresponding to $\mu$ so that the first column of $U_0$ is $e$. By the theorem, we then have
\begin{equation*}
((U^w)^* H^w U^w)_{11} =\mu^w+hf_{1}^w+\mathcal O_{L^2 \to L^2}(h^2),
\end{equation*}
where $f_{1}$ is the principal symbol of $D_{11}$:
\begin{equation}\label{eq:f1}
\begin{aligned}
f_{1}&=\Big(U_0^*H_1U_0+\frac1{2i}\Big(U_0^*\{H_0,U_0\}+\{U_0^*,H_0U_0\}-\mu\{U_0^*,U_0\}\Big)\Big)_{11}
\\&=(U_0^*H_1U_0)_{11}+\mu_1.
\end{aligned}
\end{equation}
Here $\mu_1$ is the top left entry of the subprincipal symbol of $(U^w)^*H_0^wU^w$.
Since the top row of $U_0^*$ is $\bar e$, and the first column of a matrix product $AB$ is $A$ times the first column of $B$, we get
\begin{equation}\label{eq:subpfromH1}
(U_0^*H_1U_0)_{11}=\bar e\cdot H_1e=\langle H_1 e,e\rangle,
\end{equation}
where $\langle\phantom{i},\phantom{i}\rangle$ is the inner product in $\C^2$. Similarly,
\begin{align*}
\mu_1&=\frac{1}{2i}\Big[\bar e\cdot ((H_0)_\xi'e_x'-(H_0)_x'e_\xi')+\{\bar e,\mu e\}-\mu\{\bar e,e\}\Big]
\\&=\frac{1}{2i}\Big[\big\langle \{H_0,e\},e\big\rangle +\mu_x'\langle  e, e_\xi'\rangle-\mu_\xi'\langle e, e_x'\rangle\Big].
\end{align*}
Differentiating $\langle e,e\rangle=1$ gives
\begin{equation*}
\langle e_x',e\rangle=-\langle e,e_x'\rangle,\qquad \langle e_\xi',e\rangle=-\langle e,e_\xi'\rangle,
\end{equation*}
so $\mu_1=\frac{1}{2i}\big\langle \{H_0+\mu,e\},e\big\rangle $. 
We rewrite this as
$$
\mu_1=\frac{1}{2i}\big\langle \{H_0-\mu,e\},e\big\rangle+\frac{1}{i}\big\langle \{\mu,e\},e\big\rangle ,
$$
which is formula \eqref{eq:subprincipalterm} from the introduction. 
Accordingly, we denote the two terms on the right by $\mu_1'$ and $\mu_1''$. This also recovers formula $(6.2.19)$ in \cite{HS90} using the Weyl calculus instead of the calculus for the standard quantization.
By formula $(6.2.30)$ in \cite{HS90} it follows that the curve integrals of $\mu_1'$ and $\mu_1''$ are intrinsically defined in terms of the spectral projection $\Pi$ associated to $\mu$ and don't depend on the choice of eigenvector. (We discuss this in greater detail in Lemmas \ref{lemm:B_invariance} and \ref{lemm:RW_invariance} below.) 

\begin{thm}\label{thm:main}
Let $\mu_1=\mu_1'+\mu_1''$ be as in \eqref{eq:subprincipalterm}. Let $H$ be modified using Lemmas \ref{lem:perturbation} and \ref{lem:boundedsymbol} if necessary. Then
$\spec(H^w(x,hD))\cap [E_-,E_+]$ is described by the Bohr--Sommerfeld rule \eqref{eq:BS}, where $f_1$ in \eqref{eq:S1} is given by
$$
f_1=r_0\pm\sum_{i=1}^3\frac{r_ip_i}{\lVert P\rVert}+\mu_1.
$$
\end{thm}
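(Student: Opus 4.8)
The plan is to combine the microlocal reduction already established in this section with the scalar Bohr--Sommerfeld rule of Helffer--Robert, and then identify the subprincipal symbol $f_1$ of the reduced scalar operator with the quantity in the statement. First I would invoke Theorem \ref{thm:Harper2} applied to the modified operator $H$ (which, by Lemmas \ref{lem:perturbation} and \ref{lem:boundedsymbol}, has bounded principal symbol with uniformly gapped eigenvalues $\lambda_\pm$, one of which is $\mu$): this produces a unitary $U^w$ with $U\sim\sum h^jU_j$, $U_j\in S(1)$, such that $(U^w)^*H^wU^w$ is, modulo $\mathcal O_{L^2\to L^2}(h^\infty)$, block-diagonal with upper-left block $\mu^w+hf_1^w+\mathcal O(h^2)$, where $f_1$ is given by \eqref{eq:f1}. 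Because the two blocks have principal symbols $\mu$ and $A_{22}$ with disjoint spectra uniformly in $(x,\xi)$, and $\widetilde\lambda_\mp^{-1}(E)=\emptyset$ for $E\in I$ by Lemma \ref{lem:boundedsymbol}, the part of $\spec(H^w)\cap[E_-,E_+]$ coming from $\gamma$ is governed entirely by the scalar operator $\mu^w+hf_1^w+\mathcal O(h^2)$; the $A_{22}$-block contributes nothing in this energy window.

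Next I would apply the scalar theory of operators of principal type. Since $d\mu\ne0$ near $\gamma=\mu^{-1}(E)$ and $\gamma$ is a simple closed curve bounding a well, the results of Helffer--Robert \cite{helffer1983calcul,helffer1982asymptotique,helffer1984puits} (equivalently \cite[Theorem 8.4]{sjostrand2006cime}) give the Bohr--Sommerfeld rule \eqref{eq:BS} for $\mu^w+hf_1^w+\ldots$, with leading action $S_0(E)=\int_\gamma\xi\,dx$ and $S_1(E)=\pi-\int_\gamma f_1\,dt$, where the $\pi$ is the Maslov contribution of the single loop $\gamma$ and $f_1$ enters as the subprincipal symbol of the scalar operator; the higher $S_j$ come from Colin de Verdière's algorithm \cite{cdv}. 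This is essentially a citation step once the reduction is in place.

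It remains to compute $f_1$ explicitly, which is the only real calculation. From \eqref{eq:f1}--\eqref{eq:subpfromH1} we already have $f_1=\langle H_1e,e\rangle+\mu_1$ with $\mu_1=\mu_1'+\mu_1''+\mu_1'''$ as in \eqref{eq:subprincipalterm}, so the task reduces to evaluating $\langle H_1e,e\rangle$ for $H_1=\sum_{i=0}^3 r_i\sigma_i$. Here I would use that $e$ is a normalized eigenvector of $H_0=p_0I+\lVert P\rVert\,(P/\lVert P\rVert)\cdot\sigma$ for the eigenvalue $\mu=\lambda_\pm=p_0\pm\lVert P\rVert$, so the rank-one spectral projection onto $e$ is $\Pi=\tfrac12\big(I\pm (P/\lVert P\rVert)\cdot\sigma\big)$. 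Then $\langle H_1e,e\rangle=\tr(\Pi H_1)=\tfrac12\tr\big((I\pm\hat P\cdot\sigma)(r_0 I+R\cdot\sigma)\big)$ with $R=(r_1,r_2,r_3)$ and $\hat P=P/\lVert P\rVert$; using $\tr(\sigma_i)=0$, $\tr(\sigma_i\sigma_j)=2\delta_{ij}$ this collapses to $r_0\pm\hat P\cdot R=r_0\pm\sum_{i=1}^3 r_ip_i/\lVert P\rVert$, which is exactly the claimed expression. The main obstacle is bookkeeping rather than conceptual: one must be careful that Theorem \ref{thm:Harper2} is stated for the Weyl quantization so that no $U_1$-terms appear in the principal symbol of $D_{11}$ (this is the content of ``Taylor's trick'' quoted in its proof), and that the identification of the scalar subprincipal symbol with $f_1$ matches the normalization used in \cite{helffer1983calcul,helffer1982asymptotique,helffer1984puits}; once those conventions are pinned down, the theorem follows by assembling the three ingredients above.
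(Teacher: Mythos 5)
Your proposal is correct and follows essentially the same route as the paper: block-diagonalization via Theorem \ref{thm:Harper2}, reduction of the spectrum in $I$ to the scalar block $\mu^w+hf_1^w+\ldots$ (using that the other eigenvalue misses $I$ by Lemma \ref{lem:boundedsymbol}), the Helffer--Robert/Sj\"ostrand scalar Bohr--Sommerfeld rule with Colin de Verdi\`ere's algorithm for higher orders, and finally the identity $\langle H_1e,e\rangle=r_0\pm\sum_i r_ip_i/\lVert P\rVert$. The only (immaterial) difference is that you evaluate $\langle H_1e,e\rangle$ via the spectral projector $\tfrac12(I\pm\hat P\cdot\sigma)$ and Pauli trace identities, whereas the paper argues with the Bloch vector $v_i=\langle\sigma_ie,e\rangle$, $\lVert v\rVert=1$, and $\mu=\langle H_0e,e\rangle$ forcing $v=\pm P/\lVert P\rVert$.
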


\begin{proof}
In view of Theorem \ref{thm:Harper2} we have for unitary $U^w$ that
$$
(U^w)^*H^w U^w=\begin{pmatrix}\mu^w& \\ & A_{22}^w\end{pmatrix}+\mathcal O_{L^2}(h),
$$
where $A_{22}^{-1}(E)=\emptyset$ for all $E\in I$ by Lemma \ref{lem:boundedsymbol}. Hence, for $h$ small we have that 
\begin{equation}\label{eq:specidentity}
\spec(H^w(x,hD))\cap I=\spec((U^w)^*H^w(x,hD) U^w)_{11}\cap I,
\end{equation}
where $((U^w)^*H^w U^w)_{11}$ has principal symbol $\mu$ and subprincipal symbol $f_1$. Since $d\mu\ne0$ near $\gamma=\mu^{-1}(E)$, we find by \cite[Theorem 8.4]{sjostrand2006cime} and \cite{cdv} that \eqref{eq:specidentity} is given by a Bohr--Sommerfeld rule of the form \eqref{eq:BS}, where $f_1$ in \eqref{eq:S1} is given by \eqref{eq:f1}. Combining this with \eqref{eq:subpfromH1} and \eqref{eq:subprincipalterm} we see that the result follows if we show that
\begin{equation}\label{eq:H1formula}
\langle H_1 e,e\rangle = r_0\pm\sum_{i=1}^3\frac{r_ip_i}{\lVert P\rVert}
\end{equation}
when $e$ is an eigenvector corresponding to $\mu=\lambda_\pm$. To this end, let $v=(v_1,v_2,v_3)$ with $v_i=\langle \sigma_i e,e\rangle$. Then $\langle H_1 e,e\rangle = r_0+\sum_{i=1}^3r_iv_i$. A standard calculation shows that $\lVert v\rVert=1$ for any normalized vector $e$. Now, $p_0\pm\lVert P\rVert=\mu=\langle H_0e,e\rangle$, so
$$
\lVert P\rVert=\pm\sum_{i=1}^3p_i\langle \sigma_ie,e\rangle=\pm P\cdot v.
$$
Hence, $v=\pm P/\lVert P\rVert$, and the result follows.
\end{proof}

We can now prove Theorem \ref{thm:main0} from the introduction.

\begin{proof}[Proof of Theorem \ref{thm:main0}]
For clarity, let $H^w(x,hD)$ be the original unperturbed system, and write $\widetilde H^w(x,hD)$ for the system modified using Lemmas \ref{lem:perturbation} and \ref{lem:boundedsymbol}. The principal symbols of $H^w$ and $\widetilde H^w$ are $H_0$ and $\widetilde H_0^Q$. By Theorem \ref{thm:main}, the spectrum of $\widetilde H^w$ in $[E_-,E_+]$ is given by the Bohr--Sommerfeld rule \eqref{eq:BS}, where $f_1$ and $\mu_1$ are as in the statement of Theorem \ref{thm:main0} since $\widetilde H_0^Q=H_0$ near $\gamma$. Combining this with Proposition \ref{prop:samespectrum} gives \eqref{eq:main0}, and the result follows. 
\end{proof}

\section{The subprincipal term}\label{sec:subprincipal}

Here we describe $\mu_1=\mu_1'+\mu_1''$ in terms of spherical coordinates $\theta$ and $\phi$ in \eqref{eq:sphericalcoordinates}. To the eigenvalues $\lambda_\pm$ we choose corresponding eigenvectors $u_\pm$ of $H_0$ in \eqref{eq:easyH}, which are independent of $p_0$ and well-defined when $\lVert P \rVert \neq 0$, such that
\begin{equation}\label{eq:EVs}
u_+ = \begin{pmatrix} \cos(\theta/2) \\ e^{i\phi} \sin(\theta/2) \end{pmatrix}, \quad
u_- = \begin{pmatrix} -e^{-i\phi} \sin(\theta/2) \\ \cos(\theta/2) \end{pmatrix}.
\end{equation}
We then define $U_0$ in \eqref{eq:f1} using the eigenvectors $u_\pm$, and let $e\in\{u_+,u_-\}$ be the eigenvector corresponding to $\mu=\lambda_\pm$.

\begin{lemma}\label{lem:RWphase}
 Let $\mu_1'=\frac{1}{2i}\langle\{H_0-\lambda_\pm,u_\pm\},u_\pm\rangle$. Then for both $\pm$ we have
$$
\mu_1'=\lVert P\rVert  \frac{\sin(\theta)}2\{\theta,\phi\},
\qquad
\mu_1'\, dx\wedge d\xi=-\lVert P\rVert  \frac{\sin(\theta)}2\,d\theta\wedge d\phi.
$$
\end{lemma}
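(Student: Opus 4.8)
The plan is to compute $\mu_1' = \frac{1}{2i}\langle \{H_0 - \lambda_\pm, u_\pm\}, u_\pm\rangle$ directly from the explicit formulas \eqref{eq:easyH} and \eqref{eq:EVs}. The first observation is that $p_0$ drops out: since $H_0 - \lambda_\pm = \|P\|(U \mp I)$ with $U$ as in \eqref{eq:easyH}, and $u_\pm$ are eigenvectors of $U$ with eigenvalues $\pm1$, the operator $H_0 - \lambda_\pm$ and the vectors $u_\pm$ depend only on the angles $\theta,\phi$ and on $\|P\|$, not on $p_0$. So I would first rewrite $\{H_0 - \lambda_\pm, u_\pm\}$ by the Leibniz rule for the Poisson bracket, expanding in terms of $\partial_\theta, \partial_\phi$ of $U$ and of $u_\pm$, together with the derivatives of $\theta,\phi,\|P\|$ with respect to $(x,\xi)$. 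The chain rule gives $\{A,B\} = (\partial_\theta A\, \theta_\xi' + \partial_\phi A\, \phi_\xi' + \partial_{\|P\|}A\,\|P\|_\xi')\partial_x B - (\cdots)\partial_\xi B$, and after collecting terms every contribution is a coefficient times one of the three basic brackets $\{\theta,\phi\}$, $\{\theta,\|P\|\}$, $\{\phi,\|P\|\}$.

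Next I would evaluate the matrix/vector pairings. Writing $v=\{H_0-\lambda_\pm, u_\pm\}$, we need $\frac{1}{2i}\langle v, u_\pm\rangle$. The key simplification is that $(H_0 - \lambda_\pm)u_\pm = 0$, so in the Leibniz expansion of $\{(H_0-\lambda_\pm)u_\pm\}$ one can trade derivatives: terms where $H_0-\lambda_\pm$ is undifferentiated and hits $\partial u_\pm$ can be re-expressed, and terms where $\|P\|$ is differentiated produce a factor $(U\mp I)u_\pm = 0$. This should kill the $\{\theta,\|P\|\}$ and $\{\phi,\|P\|\}$ contributions, leaving only the $\{\theta,\phi\}$ term with $\|P\|$ appearing merely as an overall factor. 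Concretely, I expect to reduce to computing $\|P\|\cdot\frac{1}{2i}\big(\langle \partial_\theta((U\mp I))\,\partial_\phi u_\pm, u_\pm\rangle - \langle \partial_\phi((U\mp I))\,\partial_\theta u_\pm, u_\pm\rangle\big)\cdot\{\theta,\phi\}$ plus the analogous pure $\partial u_\pm$ cross term; using $\langle (U\mp I)w, u_\pm\rangle = \langle w, (U\mp I)u_\pm\rangle = 0$ for the Hermitian $U$ simplifies several of these. Then it is a finite $2\times 2$ computation with the explicit half-angle vectors in \eqref{eq:EVs}: one finds $\partial_\theta u_+ = \frac12(-\sin(\theta/2), e^{i\phi}\cos(\theta/2))^t$, $\partial_\phi u_+ = (0, ie^{i\phi}\sin(\theta/2))^t$, and similarly for $u_-$, from which the scalar coefficient of $\|P\|\{\theta,\phi\}$ evaluates to $\tfrac{\sin\theta}{2}$ for both signs.

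Finally, the differential-form statement follows from the first: since $\mu_1' = \|P\|\frac{\sin\theta}{2}\{\theta,\phi\}$ and $\{\theta,\phi\}\,dx\wedge d\xi = -\,d\theta\wedge d\phi$ (the bracket is the coefficient of the symplectic form expressing $d\theta\wedge d\phi$, with the sign dictated by the convention $\{A,B\}=\partial_\xi A\,\partial_x B - \partial_x A\,\partial_\xi B$ used in the paper, which gives $dA\wedge dB = -\{A,B\}\,dx\wedge d\xi$), we get $\mu_1'\,dx\wedge d\xi = -\|P\|\frac{\sin\theta}{2}\,d\theta\wedge d\phi$. The main obstacle I anticipate is the bookkeeping in the Leibniz expansion: making sure that all terms involving $\partial\|P\|$ genuinely cancel (this relies essentially on $u_\pm$ being a \emph{kernel} vector of $H_0-\lambda_\pm$, not just an eigenvector of $H_0$) and that the half-angle cross terms $\langle \partial_\theta u_\pm, \partial_\phi u_\pm\rangle$ versus $\langle \partial_\phi u_\pm, \partial_\theta u_\pm\rangle$ combine with the correct sign to give a real result after multiplying by $1/(2i)$. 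I would double-check this by noting that $\mu_1'$ must be real (it is a subprincipal symbol entry of a self-adjoint operator), which forces the purely imaginary-looking pieces to assemble correctly.
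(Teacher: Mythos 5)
Your proposal is correct and follows essentially the same route as the paper's proof: both rest on $(H_0-\lambda_\pm)u_\pm=0$ combined with Hermiticity to kill the non-contributing terms (the paper packages this as $\mu_1'=-\im\langle (H_0-\lambda_\pm)\partial_x u_\pm,\partial_\xi u_\pm\rangle$ after differentiating the eigen-equation, while you expand in $(\theta,\phi,\lVert P\rVert)$ and discard the $\lVert P\rVert$-derivative and diagonal terms--which, as you note, vanish only after pairing with $u_\pm$), and both finish with the explicit half-angle computation from \eqref{eq:EVs}, which indeed gives the coefficient $\tfrac{\sin\theta}{2}$ of $\lVert P\rVert\{\theta,\phi\}$ for both signs. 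Your sign convention $d\theta\wedge d\phi=\{\phi,\theta\}\,dx\wedge d\xi$ for the two-form identity also agrees with the paper's.
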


\begin{proof}
Write $\mu=\lambda_\pm$ and $e=u_\pm$. We differentiate $(H_0-\mu)e=0$ and obtain
\begin{align}
\label{eq:diffid1}
((H_0)_x'-\mu_x')e&=-(H_0-\mu)e_x',\\
((H_0)_\xi'-\mu_\xi')e&=-(H_0-\mu)e_\xi'.
\label{eq:diffid2}
\end{align}
In view of \eqref{eq:diffid1}--\eqref{eq:diffid2} we see that
$$
\langle\{H_0-\mu, e\},e\rangle=\langle e_\xi',(H_0-\mu)e_x'\rangle-\langle e_x',(H_0-\mu)e_\xi'\rangle
$$
and since $H_0-\mu$ is self-adjoint this gives
$$
\mu_1'=-\im \langle (H_0-\mu)e_x',e_\xi'\rangle.
$$
A straightforward calculation using \eqref{eq:EVs} then shows that 
$$
\mu_1'=\lVert P\rVert  \frac{\sin(\theta)}2\{\theta,\phi\}
$$
when $e=u_\pm$. 
Since
$$
d\theta\wedge d\phi=(\theta_x'dx+\theta_\xi'd\xi)\wedge (\phi_x'dx+\phi_\xi'd\xi)=\{\phi,\theta\}\,dx\wedge d\xi
$$
we have that
$$
\mu_1'\, dx\wedge d\xi=-\lVert P\rVert  \frac{\sin(\theta)}2\,d\theta\wedge d\phi,
$$
which completes the proof.
\end{proof}

\begin{lemma}\label{lem:Berryphase} 
Let $\mu_1''=\frac{1}{i}\langle\{\lambda_\pm,u_\pm\},u_\pm\rangle$. 
Then 
$$
\mu_1''=\pm\frac{1-\cos(\theta)}2\{\lambda_\pm,\phi\},
\qquad
\int_\gamma \mu_1''\,dt=\pm\int_{\gamma}\frac{1-\cos(\theta)}2\,d\phi.
$$
\end{lemma}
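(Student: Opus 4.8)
The plan is to prove the formula by a direct computation with the explicit eigenvectors $u_\pm$ from \eqref{eq:EVs}, in the same spirit as the proof of Lemma \ref{lem:RWphase}. The key structural observation is that $u_\pm$ depends on $(x,\xi)$ only through the spherical coordinates $\theta$ and $\phi$, so by the chain rule $\partial_x u_\pm = \theta_x'\,\partial_\theta u_\pm + \phi_x'\,\partial_\phi u_\pm$ and likewise for $\partial_\xi u_\pm$. Expanding $\{\lambda_\pm,u_\pm\} = \partial_\xi\lambda_\pm\,\partial_x u_\pm - \partial_x\lambda_\pm\,\partial_\xi u_\pm$ and pairing with $u_\pm$ then reduces everything to the two scalar quantities $\langle\partial_\theta u_\pm,u_\pm\rangle$ and $\langle\partial_\phi u_\pm,u_\pm\rangle$.

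The first step is to compute these two quantities from \eqref{eq:EVs}. A short calculation gives $\langle\partial_\theta u_\pm,u_\pm\rangle = 0$ (the $\theta$-derivative carries no phase), while $\langle\partial_\phi u_+,u_+\rangle = i\sin^2(\theta/2) = i(1-\cos\theta)/2$ and $\langle\partial_\phi u_-,u_-\rangle = -i(1-\cos\theta)/2$, i.e.\ $\langle\partial_\phi u_\pm,u_\pm\rangle = \pm\,i(1-\cos\theta)/2$. Substituting into the expansion above, the $\theta$-terms drop out and one is left with $\langle\{\lambda_\pm,u_\pm\},u_\pm\rangle = \pm\,i\tfrac{1-\cos\theta}{2}\big(\partial_\xi\lambda_\pm\,\phi_x' - \partial_x\lambda_\pm\,\phi_\xi'\big) = \pm\,i\tfrac{1-\cos\theta}{2}\{\lambda_\pm,\phi\}$; dividing by $i$ yields $\mu_1'' = \pm\tfrac{1-\cos\theta}{2}\{\lambda_\pm,\phi\}$, which is the first claimed identity.

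For the integral, I would recall that $\gamma$ is parametrized by the Hamilton flow of $\mu=\lambda_\pm$, so $\dot x = \partial_\xi\lambda_\pm$, $\dot\xi = -\partial_x\lambda_\pm$, whence $\tfrac{d}{dt}\big(\phi(x(t),\xi(t))\big) = \phi_x'\dot x + \phi_\xi'\dot\xi = \{\lambda_\pm,\phi\}$. Therefore $\{\lambda_\pm,\phi\}\,dt$ restricts along $\gamma$ to $d\phi$, and integrating the pointwise formula gives $\int_\gamma\mu_1''\,dt = \pm\int_\gamma\tfrac{1-\cos\theta}{2}\,d\phi$, as stated.

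I expect no serious obstacle here; the computation is elementary. The only points requiring care are the bookkeeping of signs and the inner-product convention ($\langle\,\cdot\,,\cdot\,\rangle$ being conjugate-linear in the second slot), and the observation that $\langle\partial_\theta u_\pm,u_\pm\rangle$ vanishes, which is what makes the final expression so clean. One should also keep in mind that the identities are local statements valid wherever $\theta,\phi$ (equivalently $e^{i\phi}$ and $p_3/\lVert P\rVert$) are smooth, which holds in a neighborhood of $\gamma$ under the standing assumption $P\ne0$ there; the more delicate global interpretation of $\int_\gamma d\phi$ via winding numbers is taken up later in the paper.
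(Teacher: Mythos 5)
Your proposal is correct and follows essentially the same route as the paper: it verifies $\langle\partial_x u_\pm,u_\pm\rangle=\pm i\,\phi_x'(1-\cos\theta)/2$ (the paper states this directly, you derive it via the chain rule in $\theta,\phi$), obtains $\mu_1''=\pm\tfrac{1-\cos\theta}{2}\{\lambda_\pm,\phi\}$, and then uses the Hamilton-flow parametrization $dx=\partial_\xi\lambda_\pm\,dt$, $d\xi=-\partial_x\lambda_\pm\,dt$ to identify $\{\lambda_\pm,\phi\}\,dt$ with $d\phi$ along $\gamma$. The sign bookkeeping and inner-product convention are handled correctly, so no changes are needed.
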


\begin{proof}
It is easy to check that $\langle e_x',e\rangle=\pm i \phi_x'(1-\cos(\theta))/2$ when $e=u_\pm$, and a similar formula holds for the $\xi$-derivative. Writing $\mu=\lambda_\pm$, this gives
$$
\mu_1''=\frac{1}{i}(\mu_\xi'\langle e_x',e\rangle-\mu_x'\langle e_\xi',e\rangle)=\pm\frac{1-\cos(\theta)}2\{\mu,\phi\}.
$$
Since $dx=\mu_\xi'\,dt$ and $d\xi=-\mu_x'\,dt$ on $\gamma$ it follows that if $\alpha_\pm$ is the 1-form $$\alpha_\pm=\pm\frac{1-\cos(\theta)}2\,d\phi=\pm\frac{1-\cos(\theta)}2(\phi_x'\,dx+\phi_\xi'\,d\xi)$$ then $\alpha_\pm$ restricted to $\gamma$ is equal to $\mu_1''(x(t),\xi(t))\,dt$, which proves the second identity of the statement.
\end{proof}

%

\begin{remark}[Berry's phase]\label{rem:Berrysphase}
The Berry connection is $A_{\pm} := i \langle u_{\pm},du_{\pm} \rangle$. By  the chain rule we get
\[ A_\pm =  i \langle u_{\pm},\partial_r u_{\pm} \rangle dr +  i \langle u_{\pm},\partial_{\theta} u_{\pm} \rangle d\theta + i  \langle u_{\pm},\partial_{\phi} u_{\pm} \rangle d\phi.\]
In view of \eqref{eq:EVs} this implies 
\[A_\pm =\pm \frac{1-\cos(\theta)}{2} d\phi.\]
The Berry phase is thus
\begin{equation*}
\theta_{B} =\pm \int_{\gamma}  \frac{1-\cos(\theta)}{2} d\phi,
\end{equation*}
which is precisely $\int_\gamma \mu_1'' \,dt$ in view of Lemma \ref{lem:Berryphase}.
The Berry curvature associated with the Berry connection $A_\pm$ is then 
\begin{equation}\label{eq:Berrycurvature2}
F_\pm =dA_\pm = \pm \frac{\sin(\theta)}{2}\, d \theta \wedge d\phi.
\end{equation}
\end{remark}
The above characterization of the $\mu_1''$ contribution as the Berry phase implies the following, of course, well-known lemma which is useful in practical computations. 
\begin{lemma}
\label{lemm:B_invariance}
Let $v,w$ be any smooth linearly dependent normalized vectors in a neighbourhood of a simple closed smooth curve $\gamma$, then their Berry phases agree up to a term in $2\pi \mathbb Z.$
\end{lemma}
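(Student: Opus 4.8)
The goal is to show that if $v$ and $w$ are smooth normalized vector fields near a simple closed curve $\gamma$ with $v$ and $w$ linearly dependent at each point, then their Berry phases $\oint_\gamma i\langle v,dv\rangle$ and $\oint_\gamma i\langle w,dw\rangle$ differ by an element of $2\pi\mathbb Z$. The obvious approach is to write the pointwise linear dependence as $w = g\, v$ for some function $g$, exploit that normalization forces $|g|\equiv 1$, and then compute how the Berry connection transforms under this gauge change.

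**Step 1: set up the gauge function.** Since $v(m)$ and $w(m)$ are linearly dependent and both nonzero (they are normalized), there is a unique scalar $g(m)\in\mathbb C$ with $w(m)=g(m)v(m)$, namely $g=\langle w,v\rangle / \langle v,v\rangle = \langle w,v\rangle$. Because $v,w$ are smooth and $\langle v,v\rangle=1$ is bounded away from zero, $g$ is smooth in a neighborhood of $\gamma$. From $1=\langle w,w\rangle = |g|^2\langle v,v\rangle = |g|^2$ we get $|g|\equiv 1$, so we may write $g = e^{i\alpha}$ locally, where $\alpha$ is a smooth real-valued function defined on a neighborhood of each point; globally $\alpha$ need only be defined up to $2\pi\mathbb Z$, and the obstruction to a single-valued choice along $\gamma$ is exactly the winding number $n:=\operatorname{wind}(g(\gamma),0)\in\mathbb Z$.

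**Step 2: transform the connection and integrate.** A direct computation gives $dw = (dg)\,v + g\,dv$, hence
\begin{equation*}
i\langle w,dw\rangle = i\bar g\langle v,(dg)v\rangle + i|g|^2\langle v,dv\rangle = i\bar g\,dg + i\langle v,dv\rangle,
\end{equation*}
using $\langle v,v\rangle=1$. Therefore
\begin{equation*}
\theta_B(w)-\theta_B(v) = \oint_\gamma i\bar g\,dg.
\end{equation*}
Since $|g|=1$, writing $g=e^{i\alpha}$ locally gives $i\bar g\,dg = i\cdot i\,d\alpha\cdot(\text{?})$—more carefully, $\bar g\,dg = e^{-i\alpha}(ie^{i\alpha}d\alpha)=i\,d\alpha$, so $i\bar g\,dg = -d\alpha$, and $\oint_\gamma i\bar g\,dg = -\oint_\gamma d\alpha = -2\pi n$, an integer multiple of $2\pi$. (Equivalently, $\oint_\gamma \bar g\,dg$ is $2\pi i$ times the winding number of $g(\gamma)$ about the origin in $\mathbb C^*$, since $\bar g\,dg = dg/g$ on $|g|=1$.) This proves the claim.

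**Expected obstacle.** There is essentially no deep obstacle here—the statement is, as the paper says, well known—but the one point that deserves care is the global/topological bookkeeping: $\alpha$ is only locally defined, so the clean way to present the conclusion is to invoke that $\oint_\gamma dg/g = 2\pi i\,\operatorname{wind}(g(\gamma),0)\in 2\pi i\,\mathbb Z$, rather than pretending $\alpha$ is globally single-valued. One should also note at the outset that $g$ is genuinely smooth (not merely continuous) because the denominator $\langle v,v\rangle$ is identically $1$, which is where normalization is used a second time. With those two remarks in place the proof is the short computation above.
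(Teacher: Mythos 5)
Your proof is correct and takes essentially the same route as the paper: both write the two sections as differing by a smooth unimodular factor, observe that the Berry connection then changes by the exact form $-i\,dg/g$ (the paper's $d\eta$), and integrate over $\gamma$, with the $2\pi\mathbb Z$ discrepancy coming from the winding of the phase. Your explicit winding-number bookkeeping merely spells out the step the paper leaves implicit when it says ``integrating implies the claim.''
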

\begin{proof}
Linear dependence implies that $v = e^{i\eta} w$ for some smooth $e^{i\eta}.$
This implies that the Berry connections $A(u):=i \langle u, du\rangle$ are related by 
\[ A(v)=A(w) + d\eta.\]
Integrating implies the claim. 
\end{proof}

\begin{remark}[Rammal--Wilkinson's phase]\label{rem:RWphase}
Set $\mathcal A_\pm=i\langle u_\pm,du_\mp\rangle$. Expanding the $1$-forms we have
\[\begin{split}
\mathcal A_+&=:A_+^{(x)}\,dx+A_+^{(\xi)}\,d\xi \\
\mathcal A_-&=:A_-^{(x)}\,dx+A_-^{(\xi)}\,d\xi.
\end{split}\]
Then a computation shows that
\[-2\Im(A_\mp^{(x)}A_\pm^{(\xi)})=-\frac12\im\big[(\sin(\theta)\phi_x'\pm i\theta_x')(\sin(\theta)\phi_\xi'\mp i\theta_\xi')\big]=\pm\frac{\sin(\theta)}{2}\{\theta,\phi\}.\]
As we have seen, the right-hand side is the Berry curvature scalar density in phase space coordinates (cf.~\eqref{eq:Berrycurvature2}). Let's denote it by $f_\pm$, then $\mu_1'=\pm\lVert P\rVert f_\pm$ by Lemma \ref{lem:RWphase}, so
\begin{equation}
\label{eq:RW_connection}
\mu_1'= -2 \Vert P \Vert \Im(A_-^{(x)}A_+^{(\xi)}).
 \end{equation}
\end{remark}
Similar to the case of the Berry phase, we get the following rigidity result for the Rammal--Wilkinson phase.
\begin{lemma}
\label{lemm:RW_invariance}
Let $v_+,w_+$ be any smooth linearly dependent normalized vectors with orthogonal linearly dependent normalized vectors $v_-,w_-$ in a neighbourhood of a simple closed smooth curve $\gamma$, then their Rammal--Wilkinson phases computed from \eqref{eq:RW_connection} coincide.
\end{lemma}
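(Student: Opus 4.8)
The plan is to mimic the proof of Lemma \ref{lemm:B_invariance}, but now tracking how \emph{two} phase ambiguities — one from $u_+$ and one from $u_-$ — enter the Rammal-Wilkinson density, and to show they cancel. First I would record the setup: since $v_\pm$ and $w_\pm$ are pairs of linearly dependent normalized vectors, there are smooth real functions $\alpha,\beta$ near $\gamma$ with $v_+=e^{i\alpha}w_+$ and $v_-=e^{i\beta}w_-$. The quantity to analyze is the scalar density $-2\Im(A_-^{(x)}A_+^{(\xi)})$ from \eqref{eq:RW_connection}, where $\mathcal A_+=i\langle u_+,du_-\rangle$ expanded in $dx,d\xi$ gives $A_+^{(x)},A_+^{(\xi)}$, and $\mathcal A_-=i\langle u_-,du_+\rangle$ gives $A_-^{(x)},A_-^{(\xi)}$. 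Note $\mathcal A_-=\overline{i\langle u_+,du_-\rangle}\cdot(-1)$... more carefully, $\langle u_-,du_+\rangle = -\langle du_-,u_+\rangle = -\overline{\langle u_+,du_-\rangle}$ since $\langle u_+,u_-\rangle\equiv0$; so the two off-diagonal connection forms are conjugate-related, which is what makes the product $A_-^{(x)}A_+^{(\xi)}$ have a well-defined imaginary part in the first place.

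Next I would compute the transformation law. Replacing $u_+\mapsto e^{i\alpha}u_+$ and $u_-\mapsto e^{i\beta}u_-$, the form $i\langle u_+,du_-\rangle$ transforms by a multiplicative phase $e^{i(\beta-\alpha)}$ plus an exact correction $-\langle u_+,u_-\rangle\,d\beta$, and the last term vanishes by orthogonality. So $\mathcal A_+\mapsto e^{i(\beta-\alpha)}\mathcal A_+$ and likewise $\mathcal A_-\mapsto e^{i(\alpha-\beta)}\mathcal A_-$ (the conjugate phase, consistent with the conjugation relation above). Therefore each coefficient $A_+^{(x)},A_+^{(\xi)}$ picks up a factor $e^{i(\beta-\alpha)}$ and each $A_-^{(x)},A_-^{(\xi)}$ picks up $e^{i(\alpha-\beta)}$, so the product $A_-^{(x)}A_+^{(\xi)}$ is multiplied by $e^{i(\alpha-\beta)}e^{i(\beta-\alpha)}=1$ and is pointwise invariant. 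The same holds for $\Im(A_-^{(x)}A_+^{(\xi)})$, hence for the density in \eqref{eq:RW_connection}, and integrating along $\gamma$ gives equality of the Rammal-Wilkinson phases — not merely modulo $2\pi\Z$, but exactly, which is the sharper conclusion stated. I would also remark that the prefactor $-(\pm p_0+3\lVert P\rVert)$ depends only on $H_0$ and not on the eigenvector choice, so it plays no role in the argument.

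The one subtlety I would be careful about — and this is the main (minor) obstacle — is verifying the sign in the conjugation relation between $\mathcal A_+$ and $\mathcal A_-$, i.e.\ confirming that the phase factors acquired by $\mathcal A_+$ and $\mathcal A_-$ are genuinely reciprocal rather than equal. This follows from differentiating $\langle u_+,u_-\rangle\equiv0$, which yields $\langle du_+,u_-\rangle=-\langle u_+,du_-\rangle$, so $i\langle u_-,du_+\rangle=\overline{i\langle u_-,du_+\rangle}\cdot(\text{sign bookkeeping})$ ties $\mathcal A_-$ to $-\overline{\mathcal A_+}$; under $u_\pm\mapsto e^{i\alpha_\pm}u_\pm$ the form $\overline{\mathcal A_+}$ transforms with $e^{-i(\beta-\alpha)}=e^{i(\alpha-\beta)}$, confirming reciprocity. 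Once this is pinned down, the cancellation is immediate and the proof is short. I would close by noting this is the expected companion statement to Lemma \ref{lemm:B_invariance} and is exactly what justifies calling $\theta_{\mathit{RW}}$ a well-defined geometric phase.

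\begin{proof}
By hypothesis there are smooth real-valued functions $\alpha,\beta$ defined near $\gamma$ with $v_+=e^{i\alpha}w_+$ and $v_-=e^{i\beta}w_-$. Recall from \eqref{eq:RW_connection} that the Rammal-Wilkinson density is a fixed multiple (depending only on $H_0$) of $\Im(A_-^{(x)}A_+^{(\xi)})$, where $\mathcal A_+=i\langle u_+,du_-\rangle=A_+^{(x)}\,dx+A_+^{(\xi)}\,d\xi$ and $\mathcal A_-=i\langle u_-,du_+\rangle=A_-^{(x)}\,dx+A_-^{(\xi)}\,d\xi$. Differentiating the identity $\langle u_+,u_-\rangle\equiv0$ gives $\langle du_+,u_-\rangle=-\langle u_+,du_-\rangle$, so that $\mathcal A_-=-\overline{\mathcal A_+}$ as $\C$-valued $1$-forms, which is why $\Im(A_-^{(x)}A_+^{(\xi)})$ is well-defined.

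Now replace $u_+$ by $e^{i\alpha}u_+$ and $u_-$ by $e^{i\beta}u_-$. Using $\langle u_+,u_-\rangle\equiv 0$ we compute
\[
i\langle e^{i\alpha}u_+,d(e^{i\beta}u_-)\rangle=i e^{i(\beta-\alpha)}\langle u_+,du_-\rangle+i e^{i(\beta-\alpha)}\langle u_+,u_-\rangle\,d\beta=e^{i(\beta-\alpha)}\mathcal A_+,
\]
so $\mathcal A_+\mapsto e^{i(\beta-\alpha)}\mathcal A_+$, and consequently $\mathcal A_-=-\overline{\mathcal A_+}\mapsto e^{i(\alpha-\beta)}\mathcal A_-$. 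In particular $A_+^{(\xi)}\mapsto e^{i(\beta-\alpha)}A_+^{(\xi)}$ and $A_-^{(x)}\mapsto e^{i(\alpha-\beta)}A_-^{(x)}$, hence
\[
A_-^{(x)}A_+^{(\xi)}\ \longmapsto\ e^{i(\alpha-\beta)}e^{i(\beta-\alpha)}A_-^{(x)}A_+^{(\xi)}=A_-^{(x)}A_+^{(\xi)}.
\]
Thus $\Im(A_-^{(x)}A_+^{(\xi)})$ is unchanged pointwise near $\gamma$. Since the prefactor $-(\pm p_0+3\lVert P\rVert)$ in \eqref{eq:RW_connection} depends only on $H_0$, the Rammal-Wilkinson density is pointwise invariant under the change of eigenvector frame from $(w_+,w_-)$ to $(v_+,v_-)$, and integrating along $\gamma$ shows the two Rammal-Wilkinson phases coincide.
\end{proof}
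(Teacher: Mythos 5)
Your proof is correct and takes essentially the same route as the paper: write $v_\pm=e^{i\eta_\pm}w_\pm$, use orthogonality of the eigenvectors to see that $\mathcal A_+$ and $\mathcal A_-$ acquire reciprocal phase factors, so the product $A_-^{(x)}A_+^{(\xi)}$, and hence the density in \eqref{eq:RW_connection}, is pointwise invariant and the integrated phases coincide exactly. The only differences are cosmetic — the paper transforms both $\mathcal A_\pm$ directly rather than deducing the $\mathcal A_-$ law from a conjugation relation, and your bookkeeping there (e.g.\ $\mathcal A_-=-\overline{\mathcal A_+}$ versus $+\overline{\mathcal A_+}$, and the direction of the phase $e^{i(\beta-\alpha)}$) is off by convention-dependent signs that play no role in the cancellation.
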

\begin{proof}
Linear dependence implies that $v_{\pm} = e^{i\eta_{\pm}} w_{\pm}$ for some smooth $e^{i\eta_{\pm}}.$
This implies by orthogonality that 
\[\mathcal A_\pm(v) = i \langle v_{\pm},dv_{\mp}\rangle = i e^{i(\eta_{\pm}-\eta_\mp)} \langle w_{\pm}, dw_{\mp} \rangle = e^{i(\eta_{\pm}-\eta_\mp)} \mathcal A_{\pm}(w).\]
This shows invariance by noticing that
\begin{align*} 
\mu_1' &= -2\Vert P \Vert\Im(A_-^{(x)}(v)A_+^{(\xi)}(v))\\
&= -2 \Vert P \Vert\Im(A_-^{(x)}(w)A_+^{(\xi)}(w)). \qedhere\end{align*} 
\end{proof}

The results of this section now combine into a proof of Theorem \ref{thm:main1}.

\begin{proof}[Proof of Theorem \ref{thm:main1}]
From \eqref{eq:subprincipalterm} and Lemmas \ref{lem:RWphase}--\ref{lem:Berryphase} we conclude that if $\mu=\lambda_\pm$ then
\begin{equation}\label{eq:finalformula}
\mu_1=\lVert P\rVert  \frac{\sin(\theta)}2\{\theta,\phi\}\pm\frac{1-\cos(\theta)}2\{\lambda_\pm,\phi\}.
\end{equation}
The stated formulas for $\theta_B$ and $\theta_{\mathit{RW}}$ now follow in view of Remarks \ref{rem:Berrysphase} and \ref{rem:RWphase}. 

When calculating the formulas we have used eigenvectors $u_\pm$ from \eqref{eq:EVs} that are well-defined and smooth where $P\ne0$, which holds by assumption near $\gamma$.  To see that this is justified, assume $P\ne0$ doesn't hold globally. We then take a small $\ve>0$ and modify $H_0$ using Lemma \ref{lem:perturbation} and replace $P$ with some $Q$ such that $\sup_{w\in\{Q(w)\ne P(w)\}}\lVert Q(w)\rVert<\ve$. This modified Hamiltonian now has gapped eigenvalues and allows for a smooth choice of eigenvectors. 

When integrating the Berry connection or the Berry curvature scalar density of these eigenvectors over an energy level curve $E>0$, then for $\varepsilon>0$ small enough, the corresponding smooth eigenvector can be replaced by a smooth section of eigenvectors on that energy level curve of the unperturbed system. 
More precisely, there is the smooth choice of eigenvectors of the perturbed system $v_{\varepsilon}$ and the smooth choice of the unperturbed system $v$, i.e., by $u_\pm$ from \eqref{eq:EVs}. On an energy level set $\gamma$ they agree up to a continuous phase factor $e^{i \eta}: \gamma \to \mathbb S^1,$ the complex unit circle. It follows from Lemmas \ref{lemm:B_invariance} and \ref{lemm:RW_invariance} that the two computations agree up to a term $2\pi n$ for $n \in \mathbb Z$ which does not affect the Bohr--Sommerfeld rule. This also shows the invariance statement in Theorem \ref{thm:main1} and the proof is complete.
\end{proof}

\section{Quantized geometric phase corrections}\label{sec:phasecalculations}

Let $\operatorname{wind}(\Gamma,0)$ be the winding number of the curve $\Gamma\subset \C$ around the origin in the complex plane. We first establish that the integral $\int_\gamma\mu_1\,dt$ is quantized when $p_3$ vanishes identically.

\begin{prop}\label{prop:quantizedmu1}
Let $\mu=\lambda_\pm$, and assume that $p_3\equiv0$. Let $\Gamma=(p_1+ip_2)(\gamma)\subset\C$ be the image of $\gamma$ under $p_1+ip_2:T^*\R\to \C$ with the induced orientation. Then 
$$
\int_\gamma \mu_1\, dt=\theta_B=\pm\pi\operatorname{wind}(\Gamma,0),
$$
where $\theta_B$ is the Berry phase.
\end{prop}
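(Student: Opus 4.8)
The plan is to reduce the claim to a winding-number computation, using the formulas already derived in Section~\ref{sec:subprincipal}. The key simplification is that $p_3\equiv0$ forces the polar angle in \eqref{eq:sphericalcoordinates} to be identically $\pi/2$: indeed $\theta=\arccos(p_3/\lVert P\rVert)=\arccos 0$, so $\theta$ is constant near $\gamma$, giving $\{\theta,\phi\}=0$, $\cos\theta=0$, and $\sin\theta=1$. Substituting this into \eqref{eq:finalformula} (equivalently, into the expressions for $\theta_B$ and $\theta_{\mathit{RW}}$ in Theorem~\ref{thm:main1}) kills the Rammal--Wilkinson contribution, $\theta_{\mathit{RW}}=0$, and leaves
$$
\mu_1=\mu_1''=\pm\tfrac12\{\lambda_\pm,\phi\},\qquad\text{so that}\qquad \int_\gamma\mu_1\,dt=\theta_B=\pm\tfrac12\int_\gamma\{\lambda_\pm,\phi\}\,dt.
$$

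Next I would recognize the remaining curve integral as a period of $d\phi$. Along $\gamma$ the parametrization satisfies $dx=(\lambda_\pm)_\xi'\,dt$ and $d\xi=-(\lambda_\pm)_x'\,dt$, which is exactly the manipulation used in the proof of Lemma~\ref{lem:Berryphase}; hence $\{\lambda_\pm,\phi\}\,dt=\phi_x'\,dx+\phi_\xi'\,d\xi=d\phi$ on $\gamma$, and therefore $\int_\gamma\{\lambda_\pm,\phi\}\,dt=\int_\gamma d\phi$. By \eqref{eq:sphericalcoordinates}, $\phi$ is a local determination of $\arg(p_1+ip_2)$; since $P\neq0$ near $\gamma$ by hypothesis, the function $p_1+ip_2$ is nonvanishing there, so $d\phi$ is a genuine smooth closed $1$-form in a neighbourhood of $\gamma$ even though $\phi$ itself need not be single-valued. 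By the definition of the winding number this gives $\int_\gamma d\phi=2\pi\operatorname{wind}(\Gamma,0)$ with $\Gamma=(p_1+ip_2)(\gamma)$ carrying the induced orientation.

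Combining the two steps yields $\int_\gamma\mu_1\,dt=\theta_B=\pm\pi\operatorname{wind}(\Gamma,0)$, as claimed. The computation is short, so there is no real obstacle; the only points needing a little care are (i) checking that $d\phi$ is genuinely defined near $\gamma$ despite $\phi$ being multivalued globally---handled by the hypothesis $P\neq0$ near $\gamma$---and (ii) keeping the orientation of $\gamma$ (the positive Hamilton flow direction of $\mu$) consistent with the induced orientation on $\Gamma$, so that the sign in $\pm\pi\operatorname{wind}(\Gamma,0)$ matches the sign of $\mu=\lambda_\pm$.
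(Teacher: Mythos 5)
Your argument is correct and follows essentially the same route as the paper's proof: $p_3\equiv0$ gives $\theta\equiv\pi/2$, hence $\{\theta,\phi\}\equiv0$, so by \eqref{eq:finalformula} and Lemma \ref{lem:Berryphase} only $\mu_1''$ survives and $\int_\gamma\mu_1\,dt=\pm\tfrac12\int_\gamma d\phi$, which is $\pm\pi\operatorname{wind}(\Gamma,0)$ since $d\phi$ is the argument form of $p_1+ip_2$ (the paper writes this via $d\phi=(p_1\,dp_2-p_2\,dp_1)/(p_1^2+p_2^2)$ and a change of variables). Your added remarks on the well-definedness of $d\phi$ near $\gamma$ and on orientation are fine and consistent with the paper.
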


\begin{proof}
If $p_3\equiv0$ then the spherical coordinate $\theta$ in \eqref{eq:sphericalcoordinates} is constant, $\theta\equiv\pi/2$, which means that $\{\theta,\phi\}\equiv0$. By \eqref{eq:finalformula} and Lemma \ref{lem:Berryphase}  we then get $\int_\gamma\mu_1\,dt=\pm\frac12\int_\gamma d\phi$. Since 
$$
d\phi=\frac{p_1dp_2-p_2dp_1}{p_1^2+p_2^2}
$$
the result follows by a change of variables.
\end{proof}

Next, we record the following special case which applies to a certain Dirac operator that is discussed in \S\ref{ss:TM} in connection to strained moiré lattices.

\begin{prop}\label{prop:quantizedH1}
Let $\mu=\lambda_\pm$, and assume that $H\sim H_0+hH_1+\ldots$ with $H_1= k\sigma_2$ for some constant $k\in\R$. 
If $p_j=p_j(x)$ for $j=0,1$, while $p_2=p_2(\xi)=\xi$ and $p_3\equiv0$, then 
$$
\int_\gamma \langle H_1e,e\rangle\, dt=0.
$$
The same is true if $H_1=k\sigma_1$, $p_j=p_j(\xi)$ for $j=0,2$, while $p_1=p_1(x)=x$ and $p_3\equiv0$.
\end{prop}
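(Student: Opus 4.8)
The plan is to combine the pointwise identity \eqref{eq:H1formula} with a reflection symmetry of the curve $\gamma$. Since $H_1=k\sigma_2$ we have $r_2=k$ and $r_0=r_1=r_3=0$, so for $e$ an eigenvector corresponding to $\mu=\lambda_\pm$ formula \eqref{eq:H1formula} gives
$$
\langle H_1e,e\rangle=\pm\frac{kp_2}{\lVert P\rVert}=\pm\frac{k\xi}{\sqrt{p_1(x)^2+\xi^2}},
$$
using $p_3\equiv0$ and $p_2=\xi$. This is an odd function of $\xi$, whereas $\mu=p_0(x)\pm\sqrt{p_1(x)^2+\xi^2}$ is even in $\xi$, so the level set $\gamma=\mu^{-1}(E)$ is invariant under the reflection $\iota\colon(x,\xi)\mapsto(x,-\xi)$.

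Next I would rewrite $\int_\gamma\langle H_1e,e\rangle\,dt$ in a form that makes the symmetry transparent. Along the Hamilton flow the speed is $\lvert(\dot x,\dot\xi)\rvert=\lvert(\partial_\xi\mu,-\partial_x\mu)\rvert=\lvert\nabla\mu\rvert$, which is nonzero near $\gamma$ by hypothesis; hence $dt=d\ell/\lvert\nabla\mu\rvert$ with $d\ell$ arclength, and $\int_\gamma f\,dt=\int_\gamma f\lvert\nabla\mu\rvert^{-1}\,d\ell$ for any continuous $f$ on $\gamma$, an expression that no longer sees the orientation or the choice of parametrization. Since $\mu$ is even in $\xi$, the function $\lvert\nabla\mu\rvert^2=(\partial_x\mu)^2+(\partial_\xi\mu)^2$ is even in $\xi$, the isometry $\iota$ preserves $d\ell$ and maps $\gamma$ onto itself, and $f=\langle H_1e,e\rangle$ is odd in $\xi$; therefore $\int_\gamma f\lvert\nabla\mu\rvert^{-1}\,d\ell$ equals its own negative and so vanishes. (Nothing here is affected by the modifications in Lemmas \ref{lem:perturbation} and \ref{lem:boundedsymbol}, since \eqref{eq:H1formula} is a pointwise algebraic identity valid wherever $P\ne0$, in particular near $\gamma$.)

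The second assertion follows by the identical argument with the roles of $x$ and $\xi$ exchanged: when $H_1=k\sigma_1$, \eqref{eq:H1formula} gives $\langle H_1e,e\rangle=\pm kp_1/\lVert P\rVert=\pm kx/\sqrt{x^2+p_2(\xi)^2}$, which is odd in $x$, while $\mu=p_0(\xi)\pm\sqrt{x^2+p_2(\xi)^2}$ is even in $x$, so the reflection $(x,\xi)\mapsto(-x,\xi)$ forces the integral to vanish.

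I do not expect a genuine obstacle; the only step needing a little care is confirming that the $dt$-integral along the Hamilton flow is truly $\iota$-invariant rather than acquiring a sign from orientation reversal, and this is settled cleanly by passing to the arclength form $f\lvert\nabla\mu\rvert^{-1}\,d\ell$.
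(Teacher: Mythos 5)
Your proof is correct, but it goes by a different route than the paper. You use the pointwise identity \eqref{eq:H1formula} to get $\langle H_1e,e\rangle=\pm k\xi/\lVert P\rVert$ and then kill the integral by a parity argument: the integrand is odd in $\xi$, $\mu$ is even in $\xi$, and the $dt$-integral can be rewritten as the orientation-free density $f\,\lvert\nabla\mu\rvert^{-1}d\ell$, which is invariant under $(x,\xi)\mapsto(x,-\xi)$. The paper instead observes that, under the same hypotheses, $\pm k p_2/\lVert P\rVert=k\,\partial_\xi\lambda_\pm$, which equals $k\,dx/dt$ along the Hamilton flow, so $\int_\gamma\langle H_1e,e\rangle\,dt=k\oint_\gamma dx=0$ simply because $\gamma$ is a closed curve (and analogously $-k\oint_\gamma d\xi=0$ in the $\sigma_1$ case). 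The paper's argument is shorter and needs nothing beyond closedness of $\gamma$; yours buys an equally valid proof but at the cost of an extra hypothesis you state a bit too quickly: you write that ``the level set $\gamma=\mu^{-1}(E)$ is invariant'' under the reflection, whereas in the paper's setup $\gamma$ is only the component $\mu^{-1}(E)\cap\mathcal A$ of the level set, so you should justify that this particular component is mapped to itself (e.g.\ because for $E\in I$ the level set in a tubular neighborhood of $\gamma_0$ is a single curve, or because $\gamma$ must meet the fixed line $\{\xi=0\}$, so $\gamma$ and its reflection are the same connected component). Your handling of the measure is careful and correct --- the reflection reverses the Hamilton flow orientation, and passing to $f\,\lvert\nabla\mu\rvert^{-1}d\ell$ is exactly what removes that issue --- and your remark that the modifications of Lemmas \ref{lem:perturbation} and \ref{lem:boundedsymbol} do not affect anything near $\gamma$ is also right.
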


\begin{proof}
By \eqref{eq:H1formula} we have that if $\mu=\lambda_\pm$ and $H_1=k\sigma_2$ then 
$$
\langle H_1e,e\rangle=\pm k\frac{p_2}{\sqrt{p_1^2+p_2^2}}.
$$
Next, we note that if $p_j=p_j(x)$ for $j=0,1$ while $p_2=\xi$ then
$$
\frac{\partial\lambda_\pm}{\partial\xi}=\pm \frac{\xi}{\sqrt{p_1^2+\xi^2}}\quad\Longrightarrow\quad \langle H_1e,e\rangle=k\frac{\partial\lambda_\pm}{\partial\xi}.
$$
Since $dx/dt=\partial\lambda_\pm/\partial\xi$ on $\gamma$ we get $\int_\gamma \langle H_1e,e\rangle \,dt=k\int_\gamma dx$, so the conclusion follows by Stokes' theorem.

If instead $H_1=k\sigma_1$, then the corresponding assumptions imply by similar arguments that
$$
\langle H_1e,e\rangle=\pm k\frac{p_1}{\sqrt{p_1^2+p_2^2}}=k\frac{\partial\lambda_\pm}{\partial x},
$$
and since $d\xi/dt=-\partial\lambda_\pm/\partial x$ on $\gamma$ we get $\int_\gamma \langle H_1e,e\rangle \,dt=-k\int_\gamma d\xi=0$.
\end{proof}

\begin{cor}\label{cor:quantizedS1}
Assume, in addition to the hypotheses of Proposition \ref{prop:quantizedmu1}, that $H\sim H_0+hH_1+\ldots$ and that either $H_1\equiv0$, or that the assumptions of Proposition \ref{prop:quantizedH1} are in force. Then 
$$
S_1(E)=\pi-\theta_B=\pi\mp\pi\operatorname{wind}(\Gamma,0).
$$
\end{cor}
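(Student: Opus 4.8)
The plan is to combine the previously established results into a single clean statement, so the proof should essentially be an assembly argument.

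First I would invoke Theorem~\ref{thm:main0} (equivalently Corollary~\ref{cor:main0} in the special case at hand), which tells us that the spectrum in $[E_-,E_+]$ is described by the Bohr--Sommerfeld rule with $S_1(E)=\pi-\int_\gamma f_1\,dt$ and $f_1=\langle H_1 e,e\rangle+\mu_1$. Under the hypotheses of Proposition~\ref{prop:quantizedmu1} (namely $p_3\equiv 0$ near $\gamma$) Proposition~\ref{prop:quantizedmu1} gives $\int_\gamma\mu_1\,dt=\theta_B=\pm\pi\operatorname{wind}(\Gamma,0)$ with $\Gamma=(p_1+ip_2)(\gamma)$, and in particular $\theta_{\mathit{RW}}=0$ (consistent with Theorem~\ref{thm:main2} with $i=3$). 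So the only remaining term in $S_1$ is $\int_\gamma\langle H_1 e,e\rangle\,dt$.

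Next I would dispose of that term using the extra hypothesis. If $H_1\equiv 0$ then $\langle H_1 e,e\rangle\equiv 0$ and the integral vanishes trivially. If instead the assumptions of Proposition~\ref{prop:quantizedH1} hold, then that proposition directly gives $\int_\gamma\langle H_1 e,e\rangle\,dt=0$. Either way $\int_\gamma f_1\,dt=\theta_B$, hence
$$
S_1(E)=\pi-\theta_B=\pi\mp\pi\operatorname{wind}(\Gamma,0),
$$
where the sign is opposite to the $\pm$ attached to $\mu=\lambda_\pm$, matching the sign convention in Proposition~\ref{prop:quantizedmu1}. This is exactly the claimed formula.

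I do not anticipate a genuine obstacle here, since everything needed has already been proven; the only point requiring a little care is bookkeeping of signs—ensuring the ``$\mp$'' in the conclusion correctly reflects that $\theta_B=\pm\pi\operatorname{wind}(\Gamma,0)$ for $\mu=\lambda_\pm$—and confirming that the hypotheses of Proposition~\ref{prop:quantizedH1} are compatible with those of Proposition~\ref{prop:quantizedmu1} (they are, since both assume $p_3\equiv 0$). One might also remark that the winding number is well-defined because $p_1+ip_2$ is non-vanishing on $\gamma$: indeed $P\ne 0$ near $\gamma$ and $p_3\equiv 0$ there force $p_1^2+p_2^2>0$ on $\gamma$.
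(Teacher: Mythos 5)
Your proposal is correct and follows essentially the same route as the paper, whose proof is a one-line assembly of Theorem \ref{thm:main} (equivalently Theorem \ref{thm:main0}, giving $S_1(E)=\pi-\int_\gamma f_1\,dt$ with $f_1=\langle H_1e,e\rangle+\mu_1$) together with Propositions \ref{prop:quantizedmu1} and \ref{prop:quantizedH1}. Your added remarks on the sign bookkeeping and on the non-vanishing of $p_1+ip_2$ along $\gamma$ are accurate but not needed beyond what the cited results already provide.
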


\begin{proof}
This follows immediately from Theorem \ref{thm:main} and Propositions \ref{prop:quantizedmu1} and \ref{prop:quantizedH1}.
\end{proof}

Corollary \ref{cor:quantizedS1} means that $S_1(E)$ becomes quantized when $p_3$ vanishes identically near $\gamma$, and $H_1$ is either trivial or constant in the circumstances described by Proposition \ref{prop:quantizedH1}. We will now show that there is nothing special about $p_3$. In fact, (with the same caveat about $H_1$) it turns out that $S_1(E)$ becomes quantized whenever $P=(p_1,p_2,p_3)$ lies in a plane near $\gamma$, so that the components of $P$ are linearly dependent over $\R$ there.

Indeed, suppose there is a constant vector $C=(c_1,c_2,c_3)\in\R^3$ with $\lVert C\rVert=1$ such that $C\cdot P=0$ near $\gamma$. We can then rotate $P$ to eliminate the $\sigma_3$ component. If $p_3\equiv0$ we don't do anything, otherwise we take a normalized rotational axis vector $$n=\frac{C\times (0,0,1)}{\lVert C\times (0,0,1)\rVert}=\frac1{\sqrt{c_1^2+c_2^2}}(c_2,-c_1,0)$$ which will then be non-trivial. The angle between $C$ and the $z$ axis is $\omega=\arccos\left(c_3\right)$ so we define the unitary matrix
$$
U=\exp(-\frac{i\omega}2 n\cdot\sigma)=\exp(-\frac{i\arccos(c_3)}{2}\cdot \frac{(c_2\sigma_1-c_1\sigma_2)}{\sqrt{c_1^2+c_2^2}}).
$$
Note that
$$
(n\cdot\sigma)^2=\bigg(\frac{c_2}{\sqrt{c_1^2+c_2^2}}\sigma_1-\frac{c_1}{\sqrt{c_1^2+c_2^2}}\sigma_2\bigg)^2=I
$$
so 
$$
U=\cos(\omega/2)I-i\sin(\omega/2)(n\cdot\sigma).
$$
Then
$$
U^*\bigg(\sum_{i=1}^3p_i\sigma_i\bigg)U=\sum_{i=1}^3q_i\sigma_i,
$$
where $Q=(q_1,q_2,q_3)$ is the vector obtained through rotating $P$ by $\omega=\arccos(c_3)$ around $n$.
By Rodrigues' rotation formula we get
$$
Q=P\cos(\omega)+(n\times P)\sin(\omega)+n(n\cdot P)(1-\cos(\omega)),
$$
which gives
$$
Q=Pc_3+(-c_1p_3,-c_2p_3,c_1p_1+c_2p_2)+(c_2,-c_1,0)\frac{(c_2p_1-c_1p_2)(1-c_3)}{c_1^2+c_2^2}.
$$
Hence, $q_3=c_1p_1+c_2p_2+c_3p_3=0$ and
\begin{equation*}
\begin{aligned}
q_1&=p_1\bigg(c_3+\frac{c_2^2(1-c_3)}{c_1^2+c_2^2}\bigg)-p_2\frac{c_1c_2(1-c_3)}{c_1^2+c_2^2}-c_1p_3,\\
q_2&=p_2\bigg(c_3+\frac{c_1^2(1-c_3)}{c_1^2+c_2^2}\bigg)-p_1\frac{c_1c_2(1-c_3)}{c_1^2+c_2^2}-c_2p_3.
\end{aligned}
\end{equation*}
By a simple calculation, this simplifies to
\begin{equation}\label{eq:q1q2}
q_1=p_1-\frac{c_1}{1+c_3}p_3,\qquad
q_2=p_2-\frac{c_2}{1+c_3}p_3.
\end{equation}
Since $U$ is constant and unitary, the Bohr--Sommerfeld rule for $H^w$ is the same as the one for $U^*H^wU$. In particular, we immediately obtain the following generalization of Proposition \ref{prop:quantizedmu1}.

\begin{thm}\label{thm:quantizedmu1}
Let $\mu=\lambda_\pm$, and assume that there is a constant vector $C=(c_1,c_2,c_3)\in\R^3$ with $\lVert C\rVert=1$ such that $C\cdot P=0$ near $\gamma$. If $|c_3|=1$ let $q_1=p_1$ and $q_2=p_2$, whereas if $|c_3|<1$ let $q_1,q_2$ be defined by \eqref{eq:q1q2}. Let $\Gamma=(q_1+iq_2)(\gamma)\subset\C$ be the image of $\gamma$ under $q_1+iq_2:T^*\R\to \C$. Then 
$$
\int_\gamma \mu_1\, dt=\theta_B=\pm\pi\operatorname{wind}(\Gamma,0),
$$
where $\theta_B$ is the Berry phase. In particular, with $\mu_1=\mu_1'+\mu_1''$ as in \eqref{eq:subprincipalterm} we have $\mu_1'=0$.
\end{thm}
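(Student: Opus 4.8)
The plan is to reduce to Proposition \ref{prop:quantizedmu1} by the rotation already set up before the statement. If $|c_3|=1$ then $C=(0,0,\pm1)$, which forces $p_3\equiv0$ near $\gamma$, and we are literally in the situation of Proposition \ref{prop:quantizedmu1}, so there is nothing to do. Assume therefore $|c_3|<1$, so the axis $n$ is well-defined and the unitary $U$ above is a genuine constant rotation. The key observation is that $U^*H^wU$ has the same Bohr--Sommerfeld rule as $H^w$ (since $U$ is constant and unitary it does not affect the symbol calculus at any order), and its principal symbol is $p_0I+\sum_{i=1}^3 q_i\sigma_i$ with $q_3=c_1p_1+c_2p_2+c_3p_3=C\cdot P=0$ near $\gamma$, by the Rodrigues computation already carried out. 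Hence Proposition \ref{prop:quantizedmu1} applies verbatim to $U^*H^wU$ with the vector $(q_1,q_2,0)$ in place of $(p_1,p_2,0)$, giving $\int_\gamma\mu_1\,dt=\theta_B=\pm\pi\operatorname{wind}(\Gamma,0)$ with $\Gamma=(q_1+iq_2)(\gamma)$.

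The remaining point is the final sentence $\mu_1'=\mu_1'''=0$. For this I would note that in the rotated coordinates the new spherical angle $\widetilde\theta=\arccos(q_3/\lVert Q\rVert)$ is identically $\pi/2$ near $\gamma$ because $q_3\equiv0$ and $\lVert Q\rVert=\lVert P\rVert\ne0$ there; hence $\{\widetilde\theta,\widetilde\phi\}\equiv0$ near $\gamma$. By Lemma \ref{lem:RWphase} and Lemma \ref{lem:Jensphase} (applied to $U^*H^wU$, whose eigenvectors are $U^*u_\pm$ — still a smooth orthonormal frame where $P\ne0$), both $\mu_1'$ and $\mu_1'''$ carry a factor $\sin(\widetilde\theta)\{\widetilde\theta,\widetilde\phi\}=\{\widetilde\theta,\widetilde\phi\}$ and therefore vanish pointwise near $\gamma$. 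Since by Lemmas \ref{lemm:B_invariance} and \ref{lemm:RW_invariance} the integrals $\int_\gamma\mu_1'\,dt$, $\int_\gamma\mu_1'''\,dt$ computed from any smooth normalized frame agree (the Rammal--Wilkinson part up to $2\pi\Z$, which does not matter, and in fact the integrand is exactly $0$ here), this identifies $\mu_1=\mu_1''$ and confirms that the entire subprincipal contribution is the Berry phase.

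The only mild subtlety — the main obstacle, such as it is — is bookkeeping about which quantities are intrinsic. The conjugation by the constant unitary $U$ transforms $P\mapsto Q$ and $u_\pm\mapsto U^*u_\pm$, but $p_0$, $\lVert P\rVert=\lVert Q\rVert$, and $\mu=\lambda_\pm$ are unchanged; one must check that the formula \eqref{eq:subprincipalterm} for $\mu_1$ is covariant under this substitution, i.e. that $\mu_1$ computed for $H^w$ with frame $u_\pm$ equals $\mu_1$ computed for $U^*H^wU$ with frame $U^*u_\pm$. This is immediate from the defining expression \eqref{eq:subprincipalterm} since $\langle\{U^*H_0U-\mu,U^*e\},U^*e\rangle=\langle U^*\{H_0-\mu,e\}U\,U^*e,U^*e\rangle=\langle\{H_0-\mu,e\},e\rangle$ (as $U$ is constant the Poisson bracket passes through it), and similarly for the other two terms; alternatively one invokes the intrinsic description of $\int_\gamma\mu_1'\,dt$, $\int_\gamma\mu_1''\,dt$, $\int_\gamma\mu_1'''\,dt$ via the spectral projection $\Pi$ recorded after \eqref{eq:subpfromH1}, which is manifestly unaffected by a constant unitary change of frame. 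Once this is in place, everything else is the quoted Proposition \ref{prop:quantizedmu1} plus the vanishing of $\{\widetilde\theta,\widetilde\phi\}$.
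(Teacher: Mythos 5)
Your proof is correct and follows essentially the same route as the paper: conjugate by the constant unitary rotation $U$ built before the theorem so that the rotated symbol has vanishing $\sigma_3$-component $q_3=C\cdot P\equiv0$ near $\gamma$, observe that a constant unitary conjugation changes neither the Bohr--Sommerfeld rule nor the (gauge-invariant) subprincipal quantities, and then apply Proposition \ref{prop:quantizedmu1}, with $\{\widetilde\theta,\widetilde\phi\}\equiv0$ giving $\mu_1'=\mu_1'''=0$. The extra covariance bookkeeping you supply (and the appeal to Lemmas \ref{lemm:B_invariance} and \ref{lemm:RW_invariance}) is exactly the detail the paper leaves implicit in its phrase ``Since $U$ is constant and unitary, the Bohr--Sommerfeld rule for $H^w$ is the same as the one for $U^*H^wU$.''
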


We note that Theorem \ref{thm:main2} from the introduction is just a special case of Theorem \ref{thm:quantizedmu1}.
From Theorem \ref{thm:quantizedmu1} we also see that $H_0$ will always have trivial Berry curvature (thus leading to trivial Rammal--Wilkinson phase) if the components of $P=(p_1,p_2,p_3)$ are linearly dependent over $\R$. To find an example of non-trivial Rammal--Wilkinson phase we must therefore look for a system where $P$ has linearly independent components. A simple example having both non-trivial Rammal--Wilkinson phase and non-trivial Berry phase is provided in \S\ref{ss:RMphase}.

\section{Some illustrative examples}\label{sec:examples}
\subsection{The Jackiw-Rebbi model}
\label{sec:JR}
The Jackiw-Rebbi Hamiltonian is of the form
\[ H_{\mathrm{JR}}(x,hD_x) = hD_x \sigma_1 + m(x) \sigma_2,\]
where $m$ is an odd monotonically increasing function with $m^{-1}(0)=0$ and $\lim_{x \to \infty} m(x)=m_0>0$ and $\lim_{x \to \infty} m'(x)=0.$
This implies that 
\begin{lemma}\label{lem:discretespectrum}
The spectrum of $H_{\mathrm{JR}}$ inside $[-m_0,m_0]$ is discrete. 
\end{lemma}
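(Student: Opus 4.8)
The plan is to show that the essential spectrum of $H_{\mathrm{JR}}$ does not meet the interval $(-m_0,m_0)$, from which discreteness of $\spec(H_{\mathrm{JR}})$ there is immediate. The cleanest route is to square the operator. First I would note that since $m$ is odd and increasing with $m(\pm\infty)=\pm m_0$ we have $|m(x)|\le m_0$ for all $x$; moreover $m$ is smooth (as required of the symbol $H_0$) with $m'\ge 0$, and $m'(x)\to 0$ as $|x|\to\infty$ (the limit at $-\infty$ because $m'$ is even), so $m'$ is continuous with limit $0$ at infinity, hence bounded. Thus $m(x)\sigma_2$ is a bounded self-adjoint multiplication operator, $H_{\mathrm{JR}}=hD_x\sigma_1+m(x)\sigma_2$ is self-adjoint on $H^1(\R;\C^2)$ and essentially self-adjoint on $C_c^\infty(\R;\C^2)$, and the Pauli relations $\sigma_1^2=\sigma_2^2=I$, $\sigma_1\sigma_2=i\sigma_3=-\sigma_2\sigma_1$ together with $[hD_x,m]=\tfrac{h}{i}m'$ give
\[
H_{\mathrm{JR}}^2=\big(h^2D_x^2+m(x)^2\big)I+h\,m'(x)\,\sigma_3,
\]
which is the block-diagonal self-adjoint operator $\operatorname{diag}(L_+,L_-)$ with scalar Schr\"odinger operators $L_\pm:=h^2D_x^2+V_\pm$ and $V_\pm(x):=m(x)^2\pm hm'(x)$.

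Next I would analyze $L_\pm$: each $V_\pm$ is continuous and bounded below, with $V_\pm(x)\to m_0^2$ as $|x|\to\infty$ (since $m(x)^2\to m_0^2$ and $m'(x)\to 0$). By Persson's theorem --- equivalently, an IMS-localization / singular-Weyl-sequence argument splitting $\R$ into a compact core and its two ends --- this forces $\inf\sigma_{\mathrm{ess}}(L_\pm)=\liminf_{|x|\to\infty}V_\pm(x)=m_0^2$, hence $\sigma_{\mathrm{ess}}(L_\pm)\subseteq[m_0^2,\infty)$ and so $\sigma_{\mathrm{ess}}(H_{\mathrm{JR}}^2)=\sigma_{\mathrm{ess}}(L_+)\cup\sigma_{\mathrm{ess}}(L_-)\subseteq[m_0^2,\infty)$. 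Since $\lambda\in\sigma_{\mathrm{ess}}(H_{\mathrm{JR}})$ implies $\lambda^2\in\sigma_{\mathrm{ess}}(H_{\mathrm{JR}}^2)$ (apply Weyl's criterion to a singular sequence for $H_{\mathrm{JR}}-\lambda$), every $\lambda\in\sigma_{\mathrm{ess}}(H_{\mathrm{JR}})$ satisfies $|\lambda|\ge m_0$. Thus $\sigma_{\mathrm{ess}}(H_{\mathrm{JR}})\cap(-m_0,m_0)=\emptyset$, so the spectrum of $H_{\mathrm{JR}}$ inside $(-m_0,m_0)$ consists of isolated eigenvalues of finite multiplicity --- which is the content of the lemma, the endpoints $\pm m_0$ being thresholds that play no role in the Bohr--Sommerfeld analysis.

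An essentially equivalent alternative would be to bypass the squaring: one has $\det(H_0-E)=E^2-\xi^2-m(x)^2$, so for $|E|<m_0$ this is $\le E^2-m_0^2+o(1)<0$ for $|x|$ large, uniformly in $\xi$; hence $H_0-E$ is elliptic at infinity and a standard parametrix construction in the spirit of Proposition \ref{prop:samespectrum} shows the spectrum near such $E$ is discrete. The only point requiring any care in either approach is routine bookkeeping: in the squaring approach, confirming that the square really is $\operatorname{diag}(L_+,L_-)$ with the correct domain (immediate since $m,m'$ are bounded and $H_{\mathrm{JR}}$ is essentially self-adjoint on $C_c^\infty$) and that $\sigma_{\mathrm{ess}}(H_{\mathrm{JR}}^2)\subseteq[m_0^2,\infty)$ transfers back to $H_{\mathrm{JR}}$. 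Everything else is the short computation above and a direct appeal to Persson's theorem.
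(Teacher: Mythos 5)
Your proposal is correct and follows essentially the same route as the paper: square the operator to obtain $\operatorname{diag}(L_+,L_-)$ with $L_\pm=h^2D_x^2+m^2\pm hm'$ and show its essential spectrum lies in $[m_0^2,\infty)$ using that $m^2-m_0^2$ and $m'$ vanish at infinity. The only difference is cosmetic---you invoke Persson's theorem where the paper treats $(m^2-m_0^2)\sigma_0+hm'\sigma_3$ as a relatively compact perturbation of $(-h^2\partial_x^2+m_0^2)\sigma_0$---and your remark that $\pm m_0$ are merely threshold points matches the intended (open-interval) content of the lemma.
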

\begin{proof}
The squared operator $H_{\mathrm{JR}}^2$ is the diagonal operator 
\[ \begin{split} 
&(-h^2 \partial_{x}^2 + m(x)^2 )\sigma_0 + hm'(x)\sigma_3.
\end{split}\]
Since $(m(x)^2-m_0^2) \sigma_0 + hm'(x)\sigma_3$ is a relatively compact perturbation of $(-h^2\partial_x^2 +m_0^2)\sigma_0$, the result follows.
\end{proof}

The topological nature of the model is manifested in the non-zero Fredholm index of $Q(x,hD_x):=hD_x -im(x)$.  Indeed, we have 
\[ Q(x,hD_x)\varphi=0 \quad \Longrightarrow \quad h\partial_x \varphi + m(x) \varphi=0.\]
This shows that $\varphi(x) \propto \operatorname{exp}\left(- \int_0^{x} \frac{m(\xi)}{h} \ d\xi \right) \in L^2(\mathbb R),$ while the adjoint operator does not admit an $L^2$ zero mode. Thus, the Fredholm index is
\[\operatorname{ind}(Q)=1. \]

We now use Corollary \ref{cor:main0} to obtain a quantization condition. We note that $H_{\mathrm{JR}}$ has symbol as in \eqref{eq:H0} with $p_1=\xi$ and $p_2=m(x)$, so we therefore need to calculate the winding number of $q(x,\xi)=\xi+im(x)$. In view of the Bohr--Sommerfeld rule, the sign of the winding number is not important, so we will compute the winding number of $q$ as we traverse the unit circle \( (x, \xi) = (\cos t, \sin t) \) for \( t \in [0, 2\pi] \).

To this end, consider the path \( \Gamma(t) = q(\cos t, \sin t) \). We will show that \( \Gamma \) winds once clockwise around the origin in \( \mathbb{C} \) as $t$ increases from $0$ to $2\pi$. Note that \( \Gamma \) is a smooth closed loop in \( \mathbb{C} \setminus \{0\} \). Note also that \( m(\cos t) > 0 \) when \( \cos t > 0 \), i.e., for \( t \in (-\pi/2, \pi/2) \), and \( m(\cos t) < 0 \) when \( \cos t < 0 \), i.e., for \( t \in (\pi/2, 3\pi/2) \). The function \( m(\cos t) \) thus transitions from positive to negative and back to positive as \( t \) increases from 0 to \( 2\pi \), while the real part \( \sin t \) traces a full sine wave from 0 to \(1\), back to 0, to \( -1 \), and finally to 0 again.

This behavior ensures that \( \Gamma(t) \) describes a closed loop that encircles the origin. Since the path moves from the right half-plane (positive real part) to the left half-plane (negative real part) and back, while the imaginary part oscillates from positive to negative values, the overall path sweeps around the origin exactly once in the clockwise direction.
Therefore, the winding number of \( \Gamma \) around zero is $-1$. This yields a Bohr--Sommerfeld rule 
\[ S_0(E)=2\pi k h + \mathcal O(h^2)\]
and its comparison with explicit spectral computations is shown in Figure \ref{fig:JR}.

\begin{figure}
\includegraphics[width=6.5 cm]{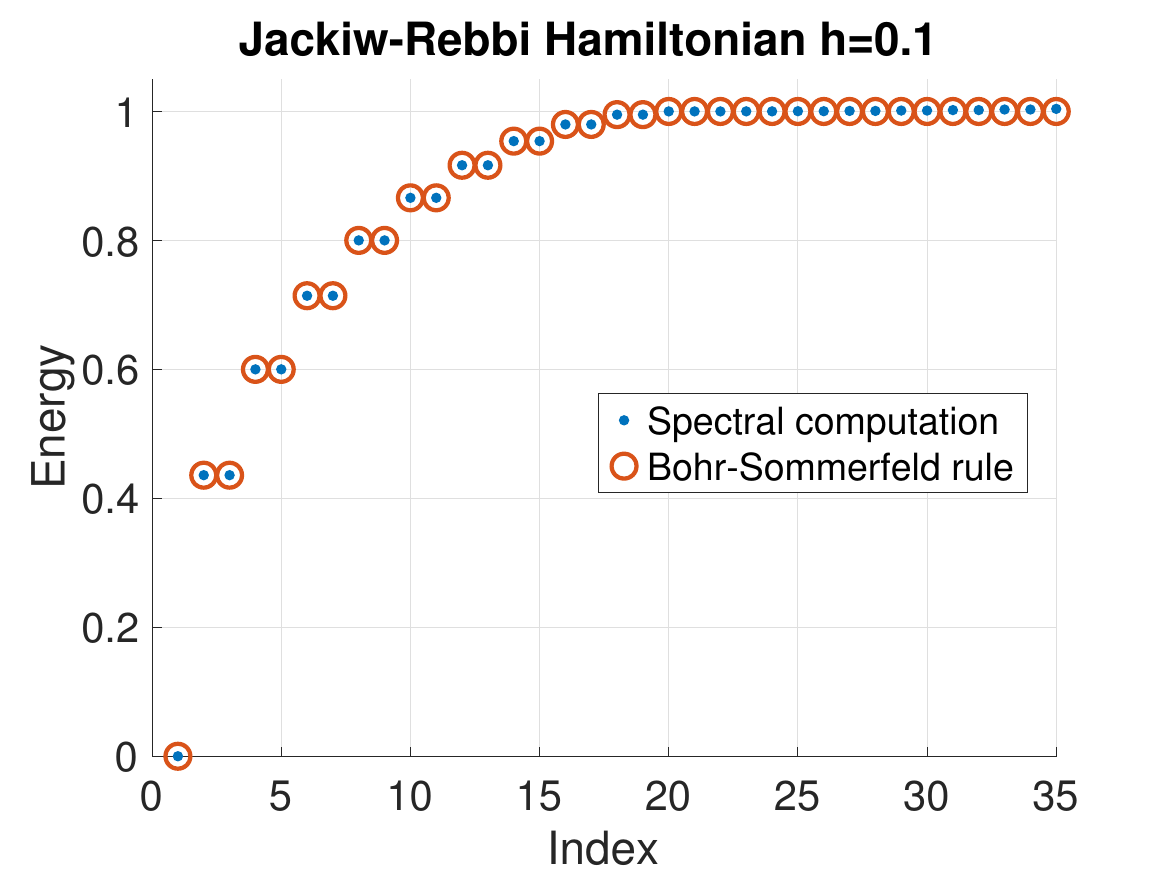}
\includegraphics[width=6.5 cm]{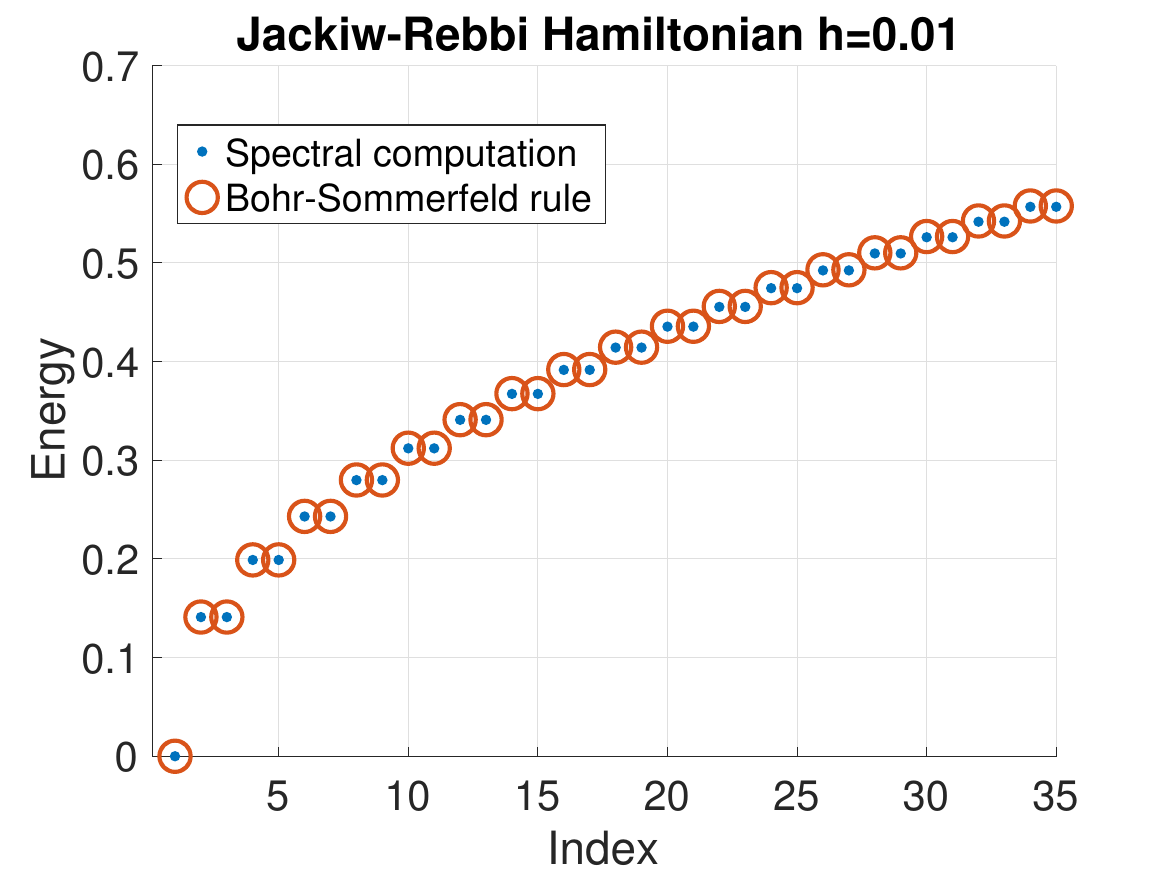}
\caption{Comparison of the smallest 35 absolute values of explicit eigenvalues of Jackiw-Rebbi Hamiltonian with $m(x)=\tanh(x)$ for $h=0.1$ and $h=0.01.$}
\label{fig:JR}
\end{figure}

\subsection{Non-trivial Rammal--Wilkinson phase}\label{ss:RMphase}
Take $p_0=0$ for simplicity, and let $p_1(x,\xi)=x$, $p_2(x,\xi)=\xi$, and $p_3(x,\xi)=x^2$ so that 
\[ H(x,\xi) = \begin{pmatrix} x^2 & x-i\xi \\ x+i\xi& -x^2 \end{pmatrix} \text{ and thus }H^w(x,hD_x) = \begin{pmatrix} x^2 & x-h\partial_x \\ x+h\partial_x & -x^2 \end{pmatrix}.\]
As a self-adjoint (densely defined) operator on $L^2(\R)$, $H^w$ has discrete spectrum. This follows by noting that the square of $H^w$ is equal to $(-h^2\Delta +x^2+x^4)\sigma_0$ modulo lower order terms, which is more confining than the harmonic oscillator, so an argument similar to the proof of Lemma \ref{lem:discretespectrum} proves the claim.

Using \eqref{eq:sphericalcoordinates} it is straightforward to check that $\sin(\theta)=\sqrt{x^2+\xi^2}/\lVert P\rVert$, and that
\begin{equation*}
\frac{\partial\theta}{\partial x}=-\frac{x^3+2x\xi^2}{\sqrt{x^2+\xi^2}\lVert P\rVert^2},\qquad  
\frac{\partial\theta}{\partial \xi}=\frac{x^2\xi}{\sqrt{x^2+\xi^2}\lVert P\rVert^2}
\end{equation*}
and
\begin{equation*}
\frac{\partial\phi}{\partial x}=-\frac{\xi}{x^2+\xi^2},\qquad 
\frac{\partial\phi}{\partial \xi}=\frac{x}{x^2+\xi^2}.
\end{equation*}
This gives
$$
\sin(\theta)\{\theta,\phi\}=\frac{1}{(x^2+\xi^2)\lVert P\rVert^3}(-x^2\xi^2+x(x^3+2x\xi^2))=\frac{x^2}{\lVert P\rVert^3}.
$$

From \eqref{eq:finalformula} we see that
$$
\mu_1'=\frac12\lVert P\rVert\sin(\theta)\{\theta,\phi\}=\frac{x^2}{2(x^2+\xi^2+x^4)}.
$$
Let's consider $\mu=\lambda_+=\sqrt{x^2+\xi^2+x^4}$ and a positive energy level $E>0$. The level curve $\gamma=\mu^{-1}(E)$ is roughly circular, and we have 
$$
\int_\gamma \mu_1'\,dt=\frac1{2E^2}\int_{\{x^2+\xi^2+x^4=E^2\}}x^2\,dt.
$$
On $\gamma$ we have $dx/dt=\partial\mu/\partial\xi=\xi/E$, where $\xi^2=E^2-x^2-x^4$. Using the symmetry of $\gamma$ we can then eliminate $t$ and use $x$ as the integration variable, so that
\begin{equation}\label{eq:RWphase}
\int_\gamma \mu_1'\,dt=\frac2{E}\int_0^{x_0}\frac{x^2}{\sqrt{E^2-x^2-x^4}}\,dx
\end{equation}
where $x_0$ is the turning point
$$
E^2=x_0^2+x_0^4\quad\Longrightarrow\quad x_0=\sqrt{\frac{-1+\sqrt{1+4E^2}}{2}}.
$$
The Rammal--Wilkinson phase \eqref{eq:RWphase} is clearly non-zero, and can be computed numerically as a function of $E$. It is not quantized but depends continuously on $E$.

To compute the Berry phase, we note that
\[ \frac{1-\cos(\theta)}{2} = \frac{\lVert P \rVert-p_3}{2\lVert P \rVert} = \frac{\lVert P \rVert-x^2}{2\lVert P \rVert} ,\]
while $\partial_x \lambda_+(x,\xi)= x(1+2x^2)/\lVert P \rVert$ and $\partial_{\xi} \lambda_+(x,\xi) = \xi/\lVert P \rVert$. 
It follows that 
\[ \{\lambda_+,\phi\} = -\frac{x^2+2x^4+\xi^2}{(x^2+\xi^2)\lVert P \rVert}.\]
This gives
\[ \frac{1-\cos(\theta)}{2} \{\lambda_+,\phi\}=  -\frac{(\lVert P \rVert-x^2)(x^2+2x^4+\xi^2)}{2(x^2+\xi^2)\lVert P \rVert^2}.\]
We can eliminate $\xi$ and get an integral expression for the Berry phase as
\begin{equation}
\label{eq:Berry_phase}
\begin{split} 
\int_{\gamma} \mu_1'' \,dt &=  -2 \int_0^{x_0} \frac{(E-x^2)(E^2+x^4)}{(E^2-x^4)E\sqrt{E^2-x^2-x^4}} \, dx \\
&= - 2 \int_0^{x_0} \frac{E^2+x^4}{E(E+x^2)\sqrt{E^2-x^2-x^4}} \, dx.
\end{split}
\end{equation}

Both phases \eqref{eq:RWphase} and \eqref{eq:Berry_phase} are illustrated in Figure \ref{fig:JR2} together with the resulting Bohr--Sommerfeld rule.

\begin{figure}
\includegraphics[width=6.3 cm]{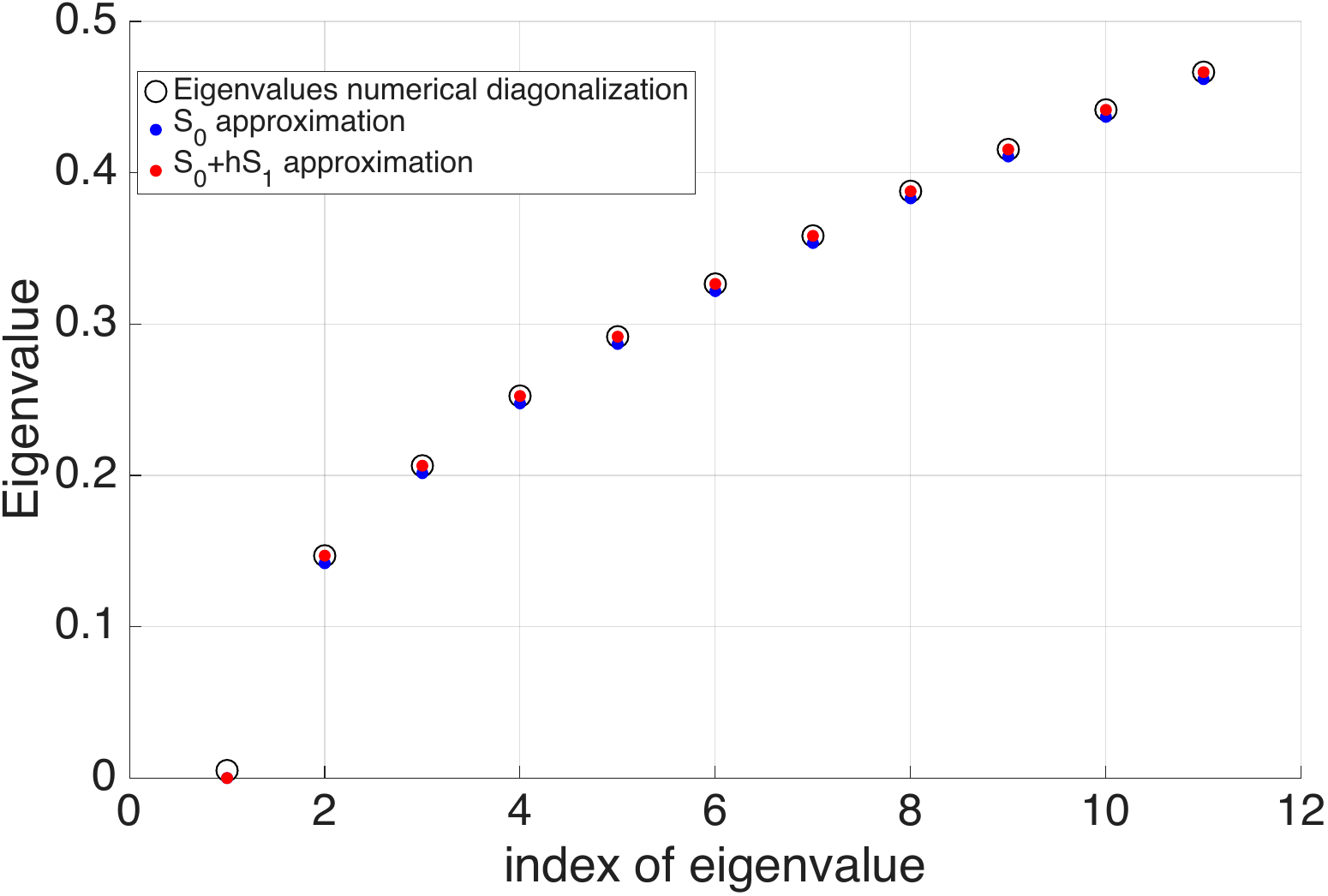}
\includegraphics[width=6.3 cm]{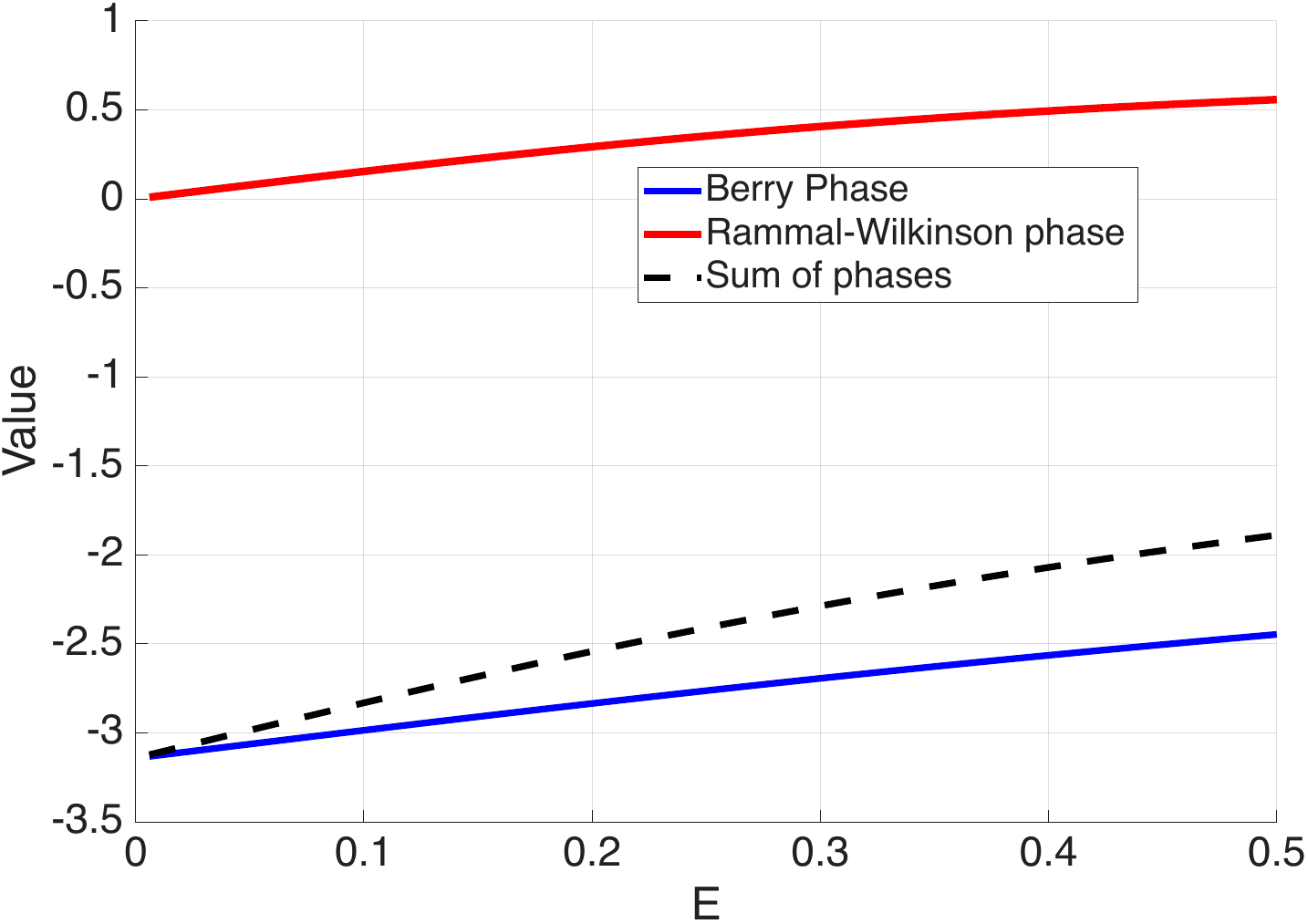}
\caption{Left: Comparison between spectral computation of the 11 smallest positive eigenvalues for $h=0.01$ compared with semiclassical approximation by $S_0$ and $S_0 + hS_1$, respectively. The addition of the $S_1$ term leads to improvement away from zero energy. Right: We plot the non-quantized Berry and Rammal--Wilkinson phases.}
\label{fig:JR2}
\end{figure}

\subsection{Strained moiré lattices}\label{ss:TM}
Here we present the Timmel-Mele model \cite{TM2020} of strained moiré lattices for a choice of parameter settings for which the zero energy level set of the symbol contains simple closed curves. For more general parameter settings, we refer to \cite[Section 4]{becker2022semiclassical}.

The chiral version of the Timmel-Mele model \cite{TM2020} is given by a $4\times4$ Hamiltonian which is unitarily equivalent to 
$$
H_{\mathrm{TM}}=\begin{pmatrix} 0& D_c\\ D_c^* &0\end{pmatrix},
$$
where $D_c$ ($c$ for {\it chiral}) is a non-self-adjoint $2\times2$ system given by
$$
D_c=\frac12\begin{pmatrix} i&1\\ -i &1\end{pmatrix}\begin{pmatrix}\Upsilon^w(hD_x)&U^+(x)\\ U^-(x) &\Upsilon^w(hD_x)\end{pmatrix}\begin{pmatrix} -i&i\\ 1 &1\end{pmatrix}.
$$
Here $U^\pm(x)=1-\cos(2\pi x)\pm\sqrt 3\sin(2\pi x)$, and the symbol $\Upsilon(\xi)$ comes in two variants: one for a tight-binding model where
$$
\Upsilon(\xi)=\Upsilon_0(\xi)=2\cos(2\pi\xi)+1,
$$
and one for a low-energy approximation where
$$
\Upsilon(\xi)=\Upsilon_0(\xi)+h\Upsilon_1(\xi)=\xi+hk_x.
$$
The real parameter $k_x$ is a quasi-momentum coming from a Bloch-Floquet transformation of the original model. In both cases, the corresponding operator acts on $L^2(\mathbb T;\C^4)$, $\mathbb T:=\R/\Z$. (When $\Upsilon(\xi)=\xi$, the operator is then only densely defined.)
Squaring $H_{\mathrm{TM}}$ gives
$$
H_{\mathrm{TM}}^2=\begin{pmatrix} D_cD_c^*&0\\ 0&D_c^*D_c \end{pmatrix},
$$
and $\lambda\ne0$ is in the spectrum of $D_cD_c^*$ if and only if $\pm\sqrt\lambda\ne0$ is in the spectrum of $H_{\mathrm{TM}}$, see \cite[Lemma 4.1]{becker2022semiclassical} (low-energy model) and \cite[Lemma 4.5]{becker2022semiclassical} (tight-binding model). It suffices to study $D_cD_c^*$ since the spectrum of $D_cD_c^*$ and $D_c^*D_c$ are equal away from zero, see Thaller \cite[Corollary 5.6]{thaller2013dirac}.

Introduce the self-adjoint symbols $H_0$ and $H_1$ given by
$$
H_0(x,\xi)=\begin{pmatrix} f(x)& i\Upsilon_0(\xi)-g(x)\\ -i\Upsilon_0(\xi)-g(x)&f(x)\end{pmatrix}
$$
where $f(x)=\frac12(U^+(x)+U^-(x))=1-\cos(2\pi x)$ and $g(x)=\frac12(U^+(x)-U^-(x))=\sqrt 3\sin(2\pi x)$, and
$$
H_1(x,\xi)\equiv k\begin{pmatrix} 0&-i\\ i&0 \end{pmatrix},
$$
where $k=-k_x$ for the low-energy model, and $k=0$ for the tight-binding model. 
Using the Weyl calculus it is straightforward to check that
$$
D_cD_c^*=(H_0^w+hH_1^w)^2,
$$
so the non-zero eigenvalues of $H_{\mathrm{TM}}$ can be described via a Bohr--Sommerfeld rule for $H_0^w+hH_1^w$.

The eigenvalues of the principal symbol $H_0$ are
$$
\lambda_\pm(x,\xi)=f(x)\pm\sqrt{\Upsilon_0(\xi)^2+g(x)^2}.
$$
Let's consider the energy level $E_0=0$. Due to periodicity of  $H_0$, the pre-image of $\det(H_0(x,\xi))=0$ consists of a discrete infinite set of points where both eigenvalues vanish (this is the zero set of $\lambda_+$), and an infinite set of simple closed curves where $\lambda_-=0$, $\lambda_+>0$ and $d\lambda_-\ne0$, such that $\lambda_-$ has a barrier in each domain enclosed by one of the curves. (Such a curve thus corresponds thematically to an excited state for \eqref{eq:simpleDirac} near an energy like $E_0=-1$.) The spectrum coming from the discrete set of points where $\lambda_+=\lambda_-=0$ was analyzed in \cite{becker2022semiclassical}.

Take a fundamental domain $\Omega$ of $\lambda_-$ containing precisely one such simple closed curve. We can for example take $\Omega=[0,1)\times\R$ in the low-energy case, and $\Omega=[0,1)\times[0,1)$ in the tight-binding case. Let $\gamma\subset\Omega\cap\lambda_-^{-1}(0)$ be the corresponding simple closed curve, see Figure \ref{fig:detH0}. Since $\lambda_-$ has a barrier in the domain enclosed by $\gamma$ it follows that $\gamma$ is oriented in the counterclockwise direction.
In view of Remark \ref{rem:manyconnectedcomponents} and Theorem \ref{thm:barrier} the spectrum of $H_0^w+hH_1^w$ coming from $\gamma$ is described by the Bohr--Sommerfeld rule $$2\pi kh=-S_0(E)+h(\pi-\pi\operatorname{wind}(\Gamma,0))+\mathcal O(h^2),\quad k\in\Z,$$ where $S_0(E)=\int_\gamma \xi\,dx$ and $\Gamma=q(\gamma)$ with $q(x,\xi)=-g(x)-i\Upsilon(\xi)$.

\begin{figure}
\includegraphics[width=0.45\textwidth]{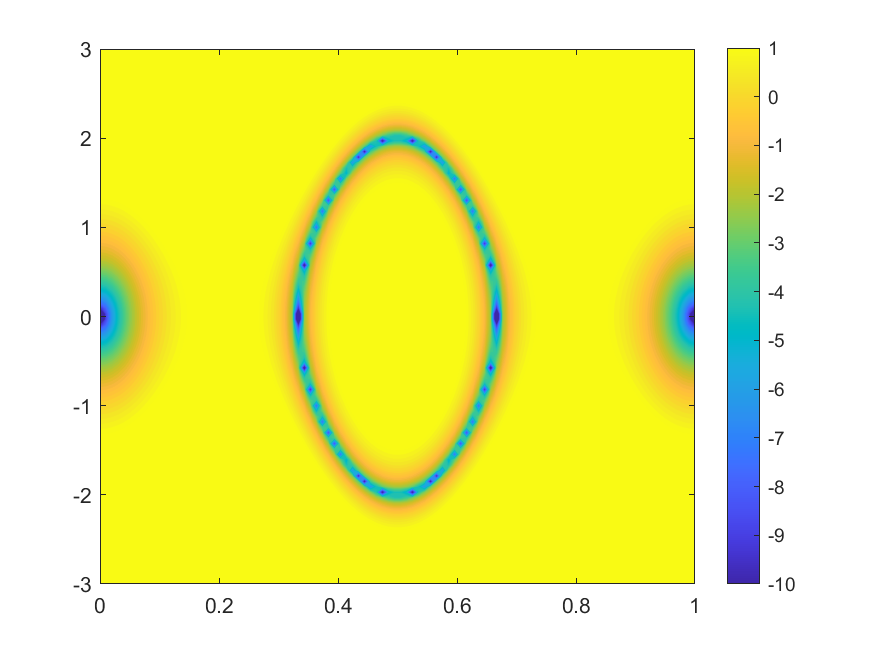}
\includegraphics[width=0.45\textwidth]{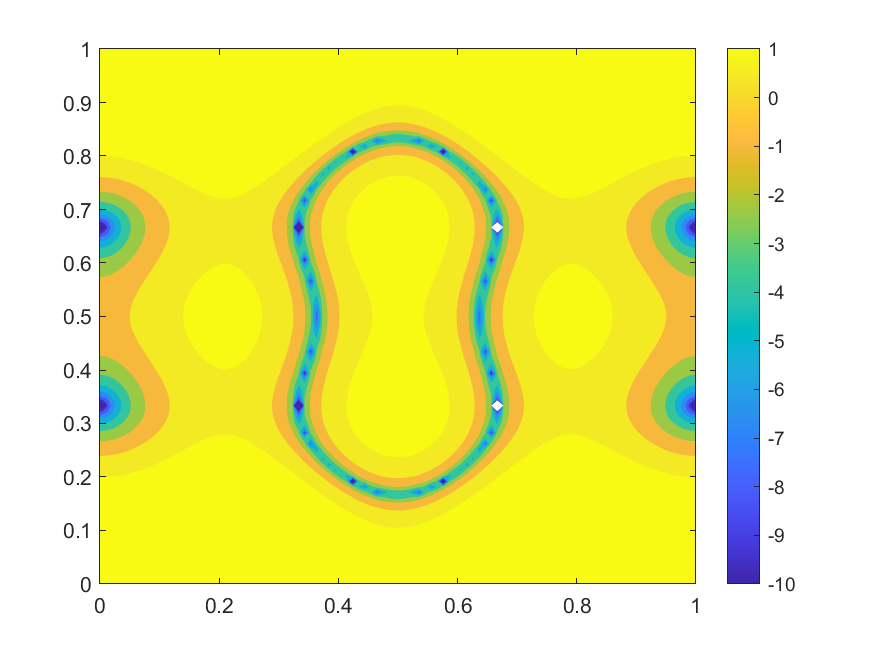}
\caption{Contour plot of the logarithm of the determinant of $H_0(x,\xi)$ in a fundamental domain, showing the zero set consisting of the discrete set where $\lambda_+=\lambda_-=0$ together with the simple closed curve $\gamma$ for the low-energy model (left) and tight-binding model (right).\label{fig:detH0}}
\end{figure}

For the low-energy model we have $\Upsilon_0(\xi)=\xi$. 
In the domain enclosed by $\gamma$, $\lambda_-(x,\xi)$ has a conic singularity at $(x,\xi)=(\frac12,0)$, and $\re q(x,\xi)=-g(x)<0$ when $0<x<\frac12$, while $\re q(x,\xi)=-g(x)>0$ when $\frac12<x<1$. Of course $\Upsilon_0(\xi)>0$ when $\xi>0$ and $\Upsilon_0(\xi)<0$ when $\xi<0$. It is then easy to see that as $\gamma$ winds once counterclockwise around $(\frac12,0)$, $\Gamma$ winds once clockwise around the origin in $\C$, so $\operatorname{wind}(\Gamma,0)=-1$. This yields a Bohr--Sommerfeld rule 
\begin{equation}
\label{eq:low-energy}
S_0(E)=2\pi k h + \mathcal O(h^2).
\end{equation}

For the tight-binding model we have $\Upsilon_0(\xi)=2\cos(2\pi\xi)+1$. In the domain enclosed by $\gamma$, $\lambda_-(x,\xi)$ has two conic singularities: at $(x,\xi)=(\frac12,\frac13)$ and $(x,\xi)=(\frac12,\frac23)$. As before $\re q(x,\xi)=-g(x)<0$ when $0<x<\frac12$, while $\re q(x,\xi)=-g(x)>0$ when $\frac12<x<1$, but now $\Upsilon_0(\xi)>0$ when $\xi>\frac23$ or $\xi<\frac13$, while $\Upsilon_0(\xi)<0$ when $\frac13<\xi<\frac23$.  It is then easy to see that as $\gamma$ winds once around $(\frac12,0)$, $\Gamma$ completes two major but incomplete arcs of opposite direction, never intersecting the negative real half-line $\re q<0$. (This is because the curve in the right panel of Figure \ref{fig:detH0} is not pinched all the way to $x=\frac12$ when $\frac13<\xi<\frac23$.) Hence, $\Gamma$ does not complete a full circuit around the origin in $\C$, so $\operatorname{wind}(\Gamma,0)=0$. This yields a Bohr--Sommerfeld rule 
\[ S_0(E)= (2k+1) \pi h + \mathcal O(h^2).\]

\subsubsection{Flat bands} For the low-energy model we find from the Weyl calculus that there will be a contribution to higher order terms $S_j(E)$ coming from 
$$
\Big(\frac{ih}{2}\Big)^k(D_\xi D_y-D_xD_\eta)^k \bar e(x,\xi) \cdot (H_1 e(y,\eta)).
$$
When $e=u_\pm$ we get 
$$
\pm ik_x\Big(\frac{ih}{2}\Big)^k(D_\xi D_y-D_xD_\eta)^k (e^{i\phi(y,\eta)}-e^{-i\phi(x,\xi)})
$$
which vanishes when  $k\ge1$. Since the term with $k=0$ yields no contribution (Proposition \ref{prop:quantizedH1}) we see that the bands, i.e., the $k_x$-parametrized eigenvalues of $H_{\mathrm{TM}}$, are independent of the quasimomentum $k_x$ modulo $\mathcal O(h^\infty)$. This effect is clearly visible in Figure \ref{fig:h}. We note that the curve contributes more eigenvalues than the well near zero energy, which is explained by the $\sqrt h$ scaling of the energy levels coming from the respective Bohr--Sommerfeld rules. Without going into details, we mention that the notion of flat bands also exists for the tight-binding model; for more in this direction we refer the reader to \cite[Section 1.3]{becker2022semiclassical}.

\begin{figure}
\includegraphics[width=6.5cm]{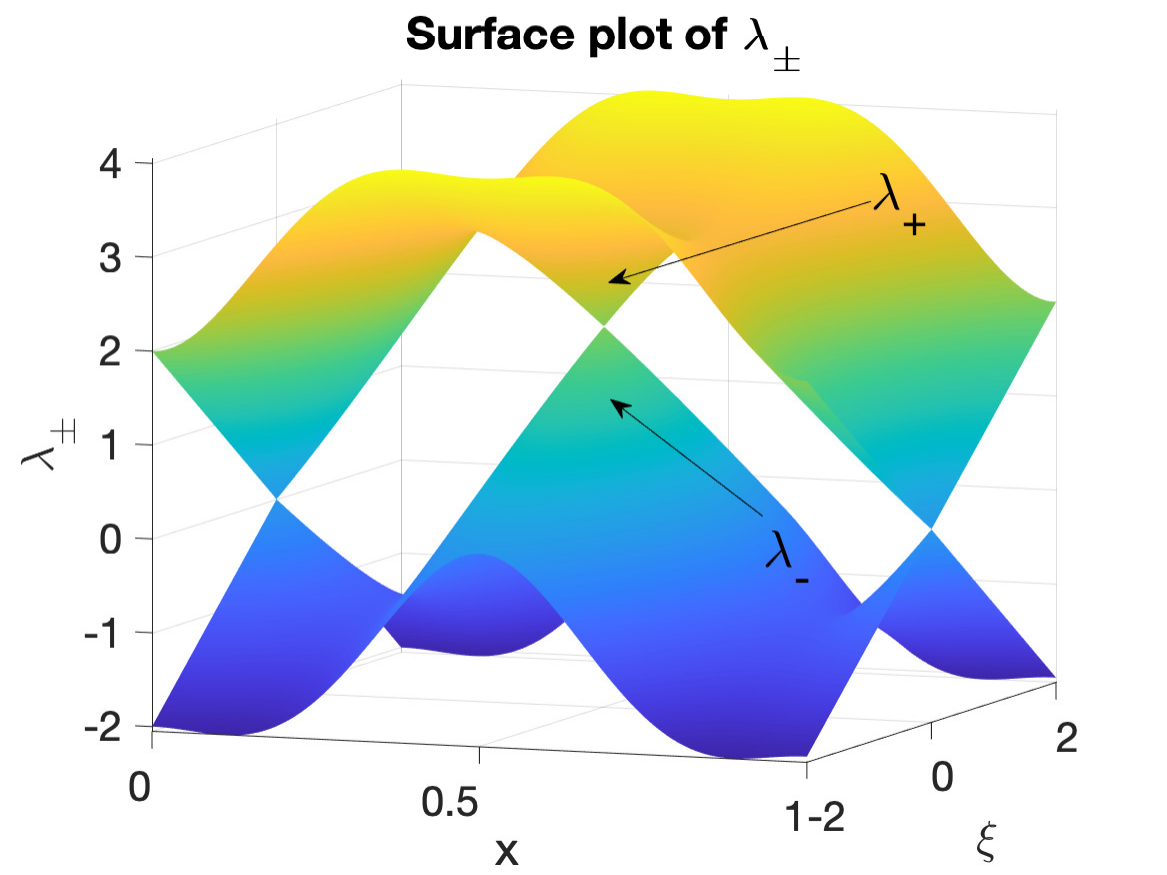}
\includegraphics[width=6.5cm]{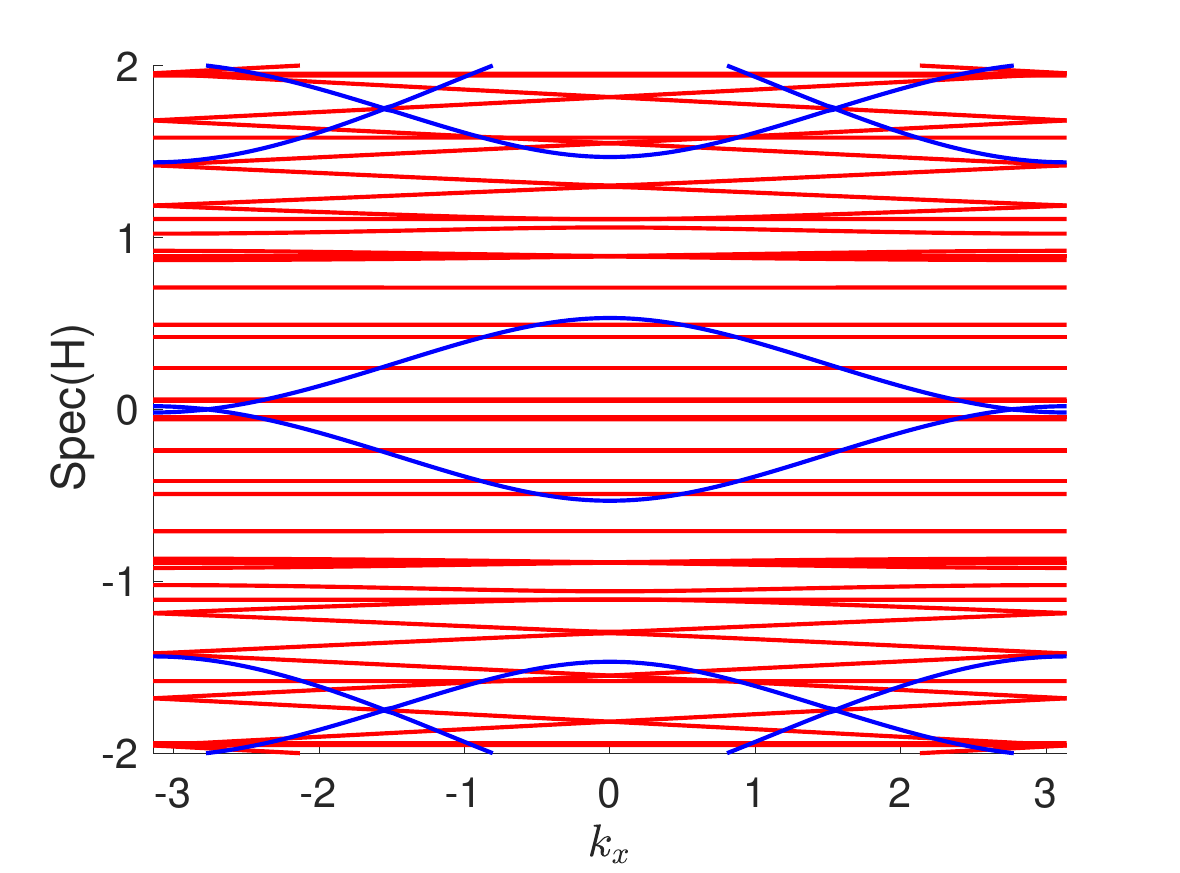}
\includegraphics[width=6.5cm]{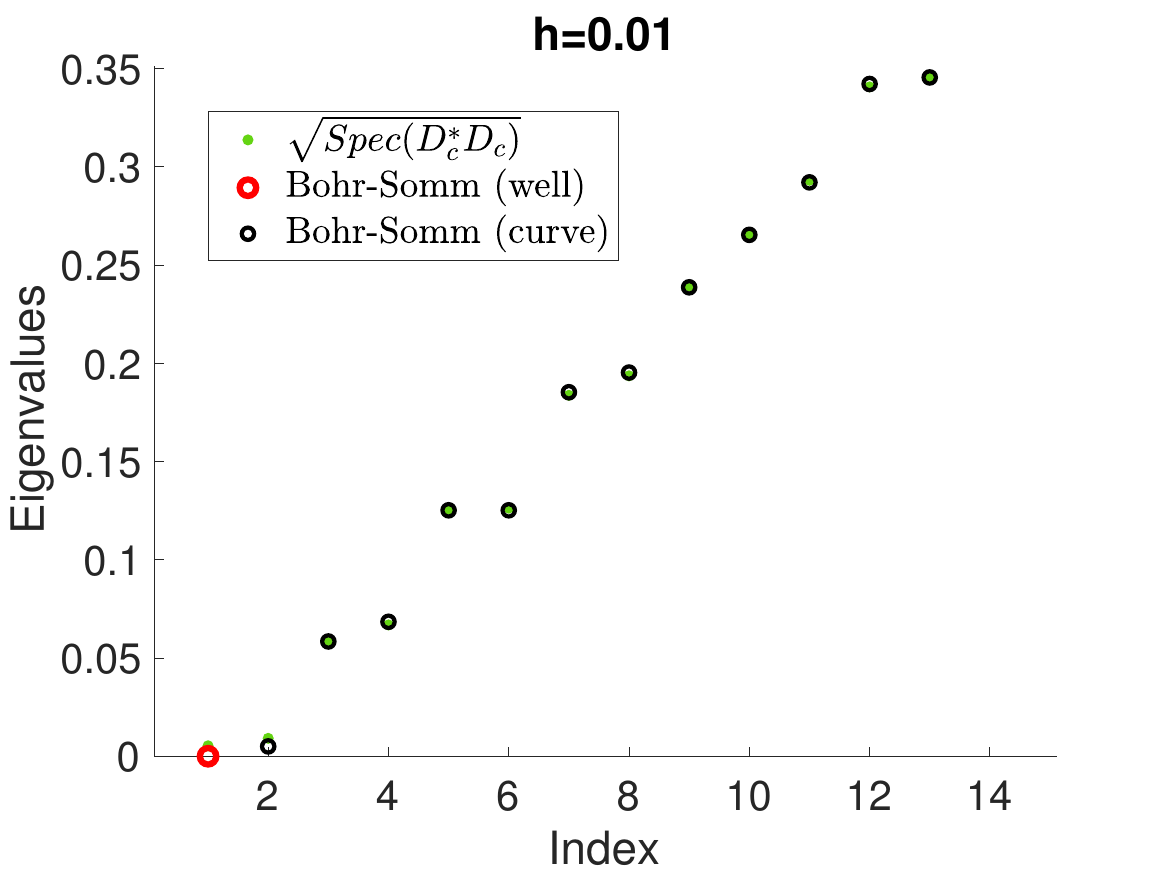}
\includegraphics[width=6.5cm]{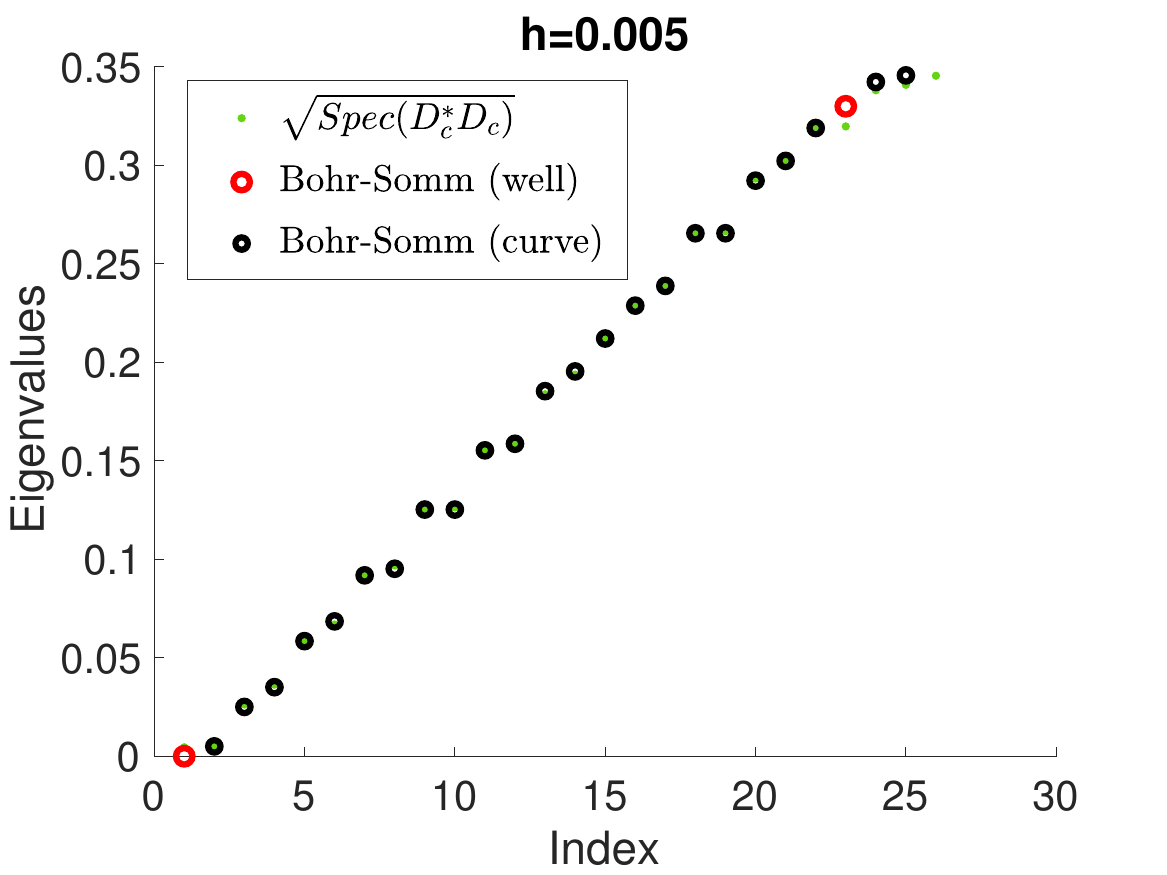}
\caption{\label{fig:h} For low-energy model with $\Upsilon(\xi)=\xi$:\\
(Top left): The eigenvalues $\lambda_{\pm}$ of $H_0.$ 
(Top right): The bands closest to zero of $H_{\mathrm{TM}}$ for {\color{blue}$h=0.5$} and {\color{red}$h=0.05$}.\\
(Bottom): Singular values of $D_c$, approximate eigenvalues from curve \eqref{eq:low-energy} and approximate eigenvalues from wells \cite[Theorem 1.6]{becker2022semiclassical} for $h=0.01$ (left) and $h=0.005$ (right).}
\end{figure}

\subsection{Radially symmetric Dirac operator}
In addition to the examples above, a natural source for the kind of one-dimensional self-adjoint operators studied in this work are radially symmetric Dirac operators in $\mathbb R^3$. They are described by a matrix-valued vector $\alpha = (\alpha_1,\alpha_2,\alpha_3)$ with matrix-valued entries $\alpha_i = \begin{pmatrix} 0 & \sigma_i \\ \sigma_i & 0 \end{pmatrix} \in \mathbb C^{4\times 4}$ and another $\beta = \operatorname{diag}(\sigma_0,-\sigma_0) \in \mathbb C^{4\times 4}$. The Dirac operator is a $4\times 4$ matrix-valued operator
\[\begin{split} H &= -i \alpha \cdot \nabla +\beta mc^2 + V(x) \text{ on } C_c^{\infty}(\mathbb R^3)
\end{split}\] 
with a potential given by
\[ V(x) =  \operatorname{id}_{\mathbb C^4} \phi_{0}(r) + i \beta \left(\alpha \cdot \frac{x}{\vert x \vert}\right)  \phi_{1}(r)+\beta \phi_{3}(r).\]
The potentials $\phi_i$ in $V$ have distinct physical interpretations. 
\begin{itemize}
 \item $\phi_0=\phi_{\mathrm{el}}$ plays the role of an \emph{electric potential} and does not lead to confinement.
  \item $\phi_1=\phi_{\mathrm{am}}$ is often described as the \emph{anomalous magnetic potential}.
\item $\phi_3=\phi_{\mathrm{sc}}$ is commonly referred to as the \emph{scalar potential} and having $\lim_{r\to \infty }\phi_3(r) = \infty$ leads to confinement of both electrons and positrons and therefore naturally leads to discrete spectrum. 
 \end{itemize}
 It is known \cite[Theorem 4.14]{thaller2013dirac} that $H$ is unitarily equivalent to the direct sum of \emph{partial wave} Dirac operators $h_{m_j,\kappa_j}$
\[ H \simeq \bigoplus_{j \in \frac{2 \mathbb N_0+1}{2}} \bigoplus_{m_j =-j}^j \bigoplus_{\kappa_j = \pm (j+1/2)} h_{m_j,\kappa_j}.\]
Particle wave Dirac operators are $2\times 2$ matrix-valued operators given for $V(r) = \sum_{i =0; i \neq 2}^4 \sigma_i \phi_i(r)$ by
\[ \begin{split} h_{m_j,\kappa_j} &= \sigma_2 D_r + V(r) + \sigma_1 \frac{\kappa_j}{r} \text{ on } C_c^{\infty}(0,\infty) \\
&= \sigma_2 D_r + V(r) + \sigma_1 \frac{\kappa_j}{r} \text{ on } C_c^{\infty}(0,\infty) \\
\end{split}\]
where more explicitly
\[ h_{m_j,\kappa_j} = \begin{pmatrix} mc^2+ \phi_0+ \phi_3 & \frac{\kappa_j}{r}+ \phi_1 -\partial_r \\  \frac{\kappa_j}{r}+ \phi_1 +\partial_r & -mc^2 + \phi_0 -\phi_3  \end{pmatrix}.\]
The operators $h_{m_j,\kappa_j}$ can, for suitable energies and suitable choices of the potential, be studied using the Bohr--Sommerfeld rule discussed in this work.

\section*{Acknowledgments}
We wish to thank Michael Hitrik and Bernard Helffer for interesting and helpful discussions. The research of Jens Wittsten was supported by The Swedish Research Council grant 2023-04872. The research of Setsuro Fujiié was supported by JSPS Kakenhi grant number 24K06790. Simon Becker acknowledges support from the SNF grant PZ00P2 216019.


\begin{thebibliography}{99}

\bibitem{becker2022semiclassical} S.~Becker and J.~Wittsten, {\it Semiclassical quantization conditions in strained moir\'e lattices.}  Commun.~Math.~Phys.~{\bf 405}, 218 (2024).

\bibitem{bwz} S.~Becker, J.~Wittsten, and M. Zworski, {\it Discrete vs. continuous in the semiclassical limit: bottom of the spectrum for periodic potentials} (2025). To appear in SIAM Math.~Anal. 

\bibitem{cdv} Y.~Colin de Verdi\`ere, {\em Bohr-Sommerfeld rules to all orders,} Ann. Henri Poincar\'e {\bf } (2005), 925--936.

\bibitem{cornean2021spectral} H. Cornean, B. Helffer, and R. Purice, {\it Spectral analysis near a Dirac type crossing in a weak non-constant magnetic field}, Trans.~Amer.~Math.~Soc.~{\bf 374}(10) (2021), 7041--7104.

\bibitem{fujiie2020semiclassical} S. Fujii\'e and S. Kamvissis, {\it Semiclassical WKB problem for the non-self-adjoint Dirac operator with analytic potential}, Journal of Mathematical Physics {\bf 61}(1) (2020), 011510.

\bibitem{fujiie2009semiclassical} S. Fujii{\'e}, C. Lasser, and L. N{\'e}d{\'e}lec, {\it Semiclassical resonances for a two-level Schr{\"o}dinger operator with a conical intersection}, Asymptotic Analysis {\bf 65} (1-2) (2009), 17--58.

\bibitem{fujiie2018quantization} S. Fujii\'e and J. Wittsten, {\it Quantization conditions of eigenvalues for semiclassical Zakharov-Shabat systems on the circle}, Discrete and continuous dynamical systems {\bf 38}(8) (2018), 3851--3873.

\bibitem{gat2003semiclassical} O. Gat and J. Avron, \emph{Semiclassical analysis and the magnetization of the Hofstadter model} (2003), Physical review letters {\bf 91}(18), 186801.

\bibitem{helffer1983calcul} B. Helffer and D. Robert, \emph{Calcul fonctionnel par la transformation de Mellin et op{\'e}rateurs admissibles}, J. Funct.~Anal.~{\bf 53}(3) (1983), 246--268.

\bibitem{helffer1982asymptotique} B. Helffer and D. Robert, \emph{Asymptotique des niveaux d’{\'e}nergie pour des hamiltoniens a un degre de libert{\'e}}, Duke Math. J. {\bf 49}(4) (1982), 853--868.

\bibitem{helffer1984puits} B. Helffer and D. Robert, \emph{Puits de potentiel g{\'e}n{\'e}ralis{\'e}s et asymptotique semi-classique}, Annales de l'IHP Physique th{\'e}orique {\bf 41}(3) (1984), 291--331.

\bibitem{HS88} B. Helffer and J. Sj\"ostrand, \emph{Analyse semi-classique pour l'{\'e}quation de Harper: (avec application à l'equation de Schrödinger avec champ magnétique).} No.~34. Soci{\'e}t{\'e} Math{\'e}matique de France, 1988.

\bibitem{HS90} B. Helffer and J. Sj\"ostrand, \emph{Analyse semi-classique pour l'{\'e}quation de Harper. II: comportement semi-classique pr{\`e}s d'un rationnel.} M{\'e}moires de la Soci{\'e}t{\'e} Math{\'e}matique de France {\bf 40} (1990), 1--139.

\bibitem{Hirota2017Real} K. Hirota, {\it Real eigenvalues of a non-self-adjoint perturbation of the self-adjoint Zakharov-Shabat operator}, Journal of Mathematical Physics {\bf 58} (2017), 102108.

\bibitem{hirota2021complex} K. Hirota and J. Wittsten, {\it Complex eigenvalue splitting for the Dirac operator}, Communications in Mathematical Physics {\bf 383}(3) (2021), 1527--1558.

\bibitem{hz} M.~Hitrik and M.~Zworski, {\em Overdamped QNM for Schwarzschild black holes}, arXiv preprint, 2025, arXiv:2406.15924.

\bibitem{sjostrand2006cime} J. Sjöstrand, \emph{Microlocal analysis for the periodic Schrödinger equation and related questions}. In \emph{Microlocal Analysis and Applications: Lectures given at the 2nd Session of the Centro Internazionale Matematico Estivo (CIME) held at Montecatini Terme, Italy, July 3-11, 1989.} Springer, 1991,  p.~237--332.
   
\bibitem{taylor1975reflection}M. Taylor,  \emph{Reflection of singularities of solutions to systems of differential equations.} Comm. Pure Appl. Math.~$\mathbf{28}$, 457-478 (1975).

\bibitem{thaller2013dirac} B. Thaller, \emph{The Dirac equation}, Springer, 2013.

\bibitem{TM2020}A. Timmel and E.J. Mele, \emph{Dirac-Harper Theory for One-Dimensional moiré Superlattices}, Phys. Rev. Lett. 125, 166803.

\bibitem{Yoshida} N. Yoshida, {\it Bohr–Sommerfeld quantization condition for self-adjoint Dirac operators.} Rev.~Math.~Phys. {\bf 36} (2024), 2450024.

\end{thebibliography}
\end{document}